\newcounter{lpnumber} \setcounter{lpnumber}{2}
\newenvironment{linearprogram}[1]
{ \stepcounter{lpnumber}
  \begin{gather} #1 \tag{LP\arabic{lpnumber}} \\[-5ex] \notag
  \end{gather}
  \hspace{1.5cm} subject to \\[-3ex]
  \align }
{ \endalign }
\newcommand{\minimize}[1]{\text{minimize} \ #1}
\newcommand{\opt}{\mathsf{OPT}}
\newcommand{\Comp}{\mathcal{C}}
\newcommand{\Nbr}{\mathsf{Nbr}}
\newcommand{\wt}{\mathsf{cost}}
\newcommand{\vote}{\mathsf{vote}}
\newtheorem{new-claim}{Claim}
\begin{document}
\title{Max-size popular matchings and extensions}
\author{Telikepalli Kavitha}
\institute{Tata Institute of Fundamental Research, Mumbai, India\\
\email{kavitha@tcs.tifr.res.in}}
\maketitle
\pagestyle{plain}

\begin{abstract}
  We consider the max-size popular matching problem in a roommates instance $G = (V,E)$ with
  strict preference lists. A matching $M$ is popular if there is no matching $M'$ in $G$
  such that the vertices that prefer $M'$ to $M$ outnumber those that prefer $M$ to $M'$.
  We show it is  $\mathsf{NP}$-hard to compute a max-size popular matching in $G$.  This is in contrast to the tractability of
  this problem in bipartite graphs where a max-size popular matching can be computed in linear time.
  We define a subclass of max-size popular matchings called {\em strongly dominant} matchings and show a linear time
  algorithm to solve the strongly dominant matching problem in a roommates instance.

  We consider a generalization of the max-size popular matching problem in bipartite graphs: this is the
  max-weight popular matching problem where there is also a weight function $w: E \rightarrow \mathbb{R}$ and
  we seek a popular matching of largest weight.
  We show this is an $\mathsf{NP}$-hard problem and this is so even when $w(e) \in \{1,2\}$ for every $e \in E$. 
  We also show an algorithm with running time $O^*(2^{n/4})$ to find a max-weight
  popular matching matching in $G = (A \cup B,E)$ on $n$ vertices.
\end{abstract}

\section{Introduction}
\label{intro}
Consider a matching problem in $G = (A \cup B,E)$ where each vertex has a strict ranking of its
neighbors. The goal is to find an optimal way of pairing up vertices: stability is the usual notion of optimality in such
a setting. A matching $M$ is stable if $M$ admits no {\em blocking edge}, i.e., an edge $(a,b)$
such that $a$ and $b$ prefer each other to their respective assignments in $M$.
Stable matchings always exist in $G$ and can be computed in linear time~\cite{GS62}. 

In applications such as matching students to advisers or applicants to training posts, we
would like to replace the notion of ``no blocking edges'' with a more relaxed notion of 
``global stability'' for the sake of obtaining a matching that is {\em better} in a social sense,
for instance, a matching of larger size.
For this, we need to formalize the notion of a globally
stable matching; roughly speaking, a globally stable matching should be one such that there is no matching
where more people are happier.

This is precisely the notion of {\em popularity} introduced by G\"ardenfors~\cite{Gar75} in 1975. 
We say a vertex $u \in A \cup B$ {\em prefers} matching $M$ to matching $M'$ if either (i)~$u$ is matched in $M$
and unmatched in $M'$ or (ii)~$u$ is matched in both $M, M'$ and $u$ prefers $M(u)$ to $M'(u)$. 
For any two matchings $M$ and $M'$, let $\phi(M,M')$ be the number of vertices that prefer $M$ to $M'$.

\begin{definition}
\label{pop-def}
A matching $M$ is {\em popular} if  $\phi(M,M') \ge  \phi(M',M)$ for every matching $M'$ in $G$, 
i.e., $\Delta(M,M') \ge 0$ where $\Delta(M,M') = \phi(M,M') -  \phi(M',M)$.
\end{definition}

In an election between $M$ and $M'$ where vertices cast votes,  $\phi(M,M')$ is the number of votes 
for $M$ versus $M'$ and $\phi(M',M)$ is the number of votes for $M'$ versus $M$.  A popular matching 
never loses an election to another matching: thus it is a weak {\em Condorcet winner}~\cite{wiki-condorcet} 
in the corresponding voting instance. Although (weak) Condorcet winners need not exist in a general voting instance,
popular matchings always exist in a bipartite graph with strict preference lists, since every  stable matching is popular~\cite{Gar75}.

All stable matchings match the same subset of vertices~\cite{GS85} and the size of a stable matching can be as low as 
$|M_{\max}|/2$, where $M_{\max}$ is a max-size matching in $G$. 
One of the main motivations to relax stability to popularity is to obtain larger matchings and it is 
known that a max-size popular matching has size at least $2|M_{\max}|/3$. Polynomial time algorithms 
to compute a max-size popular matching in $G = (A \cup B,E)$ are known~\cite{HK11,Kav12}.

A {\em roommates} instance is a graph $G = (V,E)$, that is not necessarily bipartite,  with strict preference lists.
Stable matchings need not always exist in $G$ and there are
several polynomial time algorithms~\cite{Irv85,Sub94,TS98} to determine if a stable matching exists or not.
The definition of popularity (Definition~\ref{pop-def}) carries over to roommates instances --- though popular
matchings need not exist in $G$,  popular {\em mixed matchings}, i.e., probability distribution over matchings,
always exist in $G$ and can be efficiently computed~\cite{KMN09}. Observe that stable mixed matchings need not always 
exist in a roommates instance (see the Appendix).

We currently do not know the complexity of the {\em popular matching problem} in a roommates instance, i.e., 
does $G$ admit a popular matching?  The complexity of finding a {\em max-size} popular matching in $G$ 
was also an open problem so far and we show its hardness here.

\begin{theorem}
\label{thm3:roommates}
  The max-size popular matching problem in a roommates instance $G = (V,E)$ with strict preference lists is $\mathsf{NP}$-hard. 
\end{theorem}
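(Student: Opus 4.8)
The plan is to give a polynomial-time reduction from a standard NP-complete problem (I will use $3$-SAT, or a degree-bounded variant such as $3$-SAT in which each variable occurs a bounded number of times, which keeps the preference lists short and the case analysis finite). From a formula $\psi$ I would build a roommates instance $G_\psi$ out of local \emph{variable gadgets} and \emph{clause gadgets} wired together, together with a size threshold $t = t(\psi)$, so that $G_\psi$ admits a popular matching of size at least $t$ if and only if $\psi$ is satisfiable. A feature that the construction must respect — and which makes the statement subtle, since the mere \emph{existence} of a popular matching in a roommates instance is noted above to be open — is that $G_\psi$ should \emph{always} admit some popular matching, so that the reduction is genuinely about the \emph{size} of popular matchings and not secretly about existence. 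I would achieve this by making every gadget admit a popular matching on its own, with the two possible local optima corresponding to the truth value of the associated variable (or to the satisfaction of a clause), and by arranging that the only way to reach the global size $t$ is to choose local optima consistently, i.e.\ to read off a satisfying assignment.

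The gadgets should exploit exactly the phenomenon that separates roommates from bipartite instances: odd structures. The basic building block is a triangle with cyclic preferences ($a$ prefers $b$ to $c$, $b$ prefers $c$ to $a$, $c$ prefers $a$ to $b$), which in isolation admits \emph{no} popular matching, since every one-edge matching loses by $2$ votes to $1$ to the ``rotated'' matching and the empty matching loses to any edge. The idea is to attach such a triangle to the rest of $G_\psi$ through linking edges ranked appropriately, so that the triangle can be rendered locally popular only when exactly one of its vertices is matched \emph{outside} the triangle; the identity of that vertex then encodes a Boolean choice, and consistency of these choices across the wiring is what a large popular matching enforces. Clause gadgets are attached to the literal-vertices so that a clause vertex can be matched (contributing to the size) precisely when at least one incident literal is set true.

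To verify popularity of a candidate matching $M$ I would use the additivity of $\Delta$ over the connected components of the symmetric difference: for any $M'$, $\Delta(M,M') = \phi(M,M') - \phi(M',M)$ decomposes over the alternating paths and cycles of $M \oplus M'$, so $M$ is popular iff no single alternating path or cycle gives positive net votes against $M$. Because each gadget has bounded size and bounded degree, this reduces the global check to a finite case analysis over short alternating paths and cycles, possibly crossing a few adjacent gadgets through the linking edges. For the forward direction I would exhibit, from a satisfying assignment, an explicit matching of size $t$ and certify its popularity by this path/cycle check. For the reverse direction I would use structural facts — in particular that a popular matching is necessarily maximal (any two adjacent unmatched vertices would give a $+2$ improving edge) together with the per-gadget popularity constraints — to show that any popular matching of size at least $t$ must select consistent local optima, whence a satisfying assignment is recovered.

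The main obstacle, and where most of the work lies, is gadget design that is simultaneously \emph{complete} and \emph{sound}: the local popularity constraints must capture truth assignment and clause satisfaction exactly, and I must rule out ``cheating'' popular matchings that attain size $t$ while encoding an inconsistent or non-satisfying assignment. The danger is a long alternating path or cycle threading through several gadgets that is beneficial globally even though each gadget looks locally fine; controlling such global exchanges is precisely what the short-path/cycle analysis enabled by bounded gadget size and the carefully chosen rankings of the linking edges is meant to handle. Calibrating the threshold $t$ so that it is met exactly by consistent satisfying configurations, and confirming that $G_\psi$ retains at least one (smaller) popular matching even when $\psi$ is unsatisfiable, is the delicate accounting step that ties the construction together.
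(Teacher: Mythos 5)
Your proposal is a plan for a reduction, not a proof: no gadget is actually constructed, and you explicitly defer the soundness and completeness arguments (``the main obstacle, and where most of the work lies'') to future gadget engineering. As it stands there is therefore a genuine gap --- the entire content of the theorem is the existence of gadgets with the properties you list, and you have not exhibited them. Two specific points make the gap more than a formality. First, your proposed certification method (decomposing $\Delta(M',M)$ over the alternating paths and cycles of $M\oplus M'$ and doing a finite case analysis) does not automatically localize: an alternating path can thread through arbitrarily many gadgets, and bounding its net vote requires a global argument, not a per-gadget one. The paper sidesteps this entirely by certifying popularity with an LP-dual witness $(\vec{\alpha},\vec{z})$ for \ref{LP3} --- a single global certificate checked edge by edge --- including an odd-set variable $z_{S_i}=2$ on a triangle $\{a_i,b_i,c_i\}$ inside each ``unselected'' vertex gadget. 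Without some such global certificate, your reverse direction (ruling out ``cheating'' popular matchings of size $t$) is exactly the part most likely to fail. Second, your triangle-with-cyclic-preferences building block, attached so that it becomes popular only when one vertex is matched outside, is plausible but unverified; the paper instead uses a different structural phenomenon, namely the failure of the rural hospitals theorem for max-size popular matchings in roommates instances: its vertex gadget on $a_i,b_i,c_i,d_i$ admits two popular completions, $(a_i,b_i)$ (size 1, encoding ``in the cover'') versus $\{(a_i,d_i),(b_i,c_i)\}$ (size 2, encoding ``not in the cover''), which match \emph{different} vertex subsets, so that size directly counts the cover.

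For comparison, the paper reduces from vertex cover rather than 3-SAT, which makes the accounting linear and trivial ($|M| = m_H + 2n_H - |U|$ where $U$ is the set of ``selected'' indices), and its instance always contains a stable matching, so existence is never in question --- the property you correctly identify as necessary. Lemma~\ref{hard:lemma1} does the work your ``reverse direction'' gestures at, by a short explicit exchange argument showing every popular matching restricts to one of the two local optima per vertex gadget and that the selected set is a vertex cover. If you want to salvage your 3-SAT route, the first thing to supply is a popularity certificate usable in non-bipartite graphs; the odd-set dual variables of \ref{LP3} are essentially unavoidable once your gadgets contain odd cycles, and your path/cycle case analysis will not substitute for them without a global bounding lemma you have not stated.
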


We show the above problem is  $\mathsf{NP}$-hard even in instances that admit stable matchings.
Note that a stable matching is a min-size popular matching~\cite{HK11}.
All the polynomial time algorithms that compute popular matchings in bipartite graphs~\cite{GS62,HK11,Kav12} compute either 
stable matchings or {\em dominant} matchings. A popular matching $M$ is {\em dominant} if $M$ is more popular than every 
larger matching~\cite{CK16}, thus $M$ is a max-size popular matching.

Though the name ``dominant'' was given to this class of matchings in \cite{CK16},
dominant matchings in bipartite graphs first appeared in \cite{HK11}, which gave the first polynomial time
max-size popular matching algorithm in bipartite graphs.
More precisely, a matching satisfying Definition~\ref{def:strong-dominant} was constructed in the given bipartite graph. 
For any matching $M$, call an edge $(u,v)$ {\em negative} to $M$ if both $u$ and $v$ prefer their 
assignments in $M$ to each other.

\begin{definition}
  \label{def:strong-dominant}
A matching $M$ is strongly dominant in $G = (V,E)$ if there is a partition $(L,R)$ of the vertex set $V$ such that 
(i)~$M \subseteq L \times R$, (ii)~$M$ matches all vertices in $R$,
(iii)~every blocking edge to $M$ is in $R \times R$, and (iv)~every edge in $L \times L$ is negative to $M$.
\end{definition}

Consider the complete graph on 4 vertices $a, b, c, d$ where $a$'s preference list is $b \succ c \succ d$ (i.e., top choice $b$,
followed by $c$ and then $d$), $b$'s preference list is $c \succ a \succ d$, $c$'s preference list is $a \succ b \succ d$, and
$d$'s preference list is $a \succ b \succ c$. This instance has no stable matching. 
$M_1 = \{(a,d),(b,c)\}$ and $M_2 = \{(a,c),(b,d)\}$ are two strongly dominant matchings here: the corresponding partitions are
$L_1 = \{b,d\}, R_1 = \{a,c\}$ and $L_2 = \{c,d\}, R_2 = \{a,b\}$.

A strongly dominant matching $M$ with vertex partition $(L,R)$ is an $R$-perfect {\em stable matching} in the bipartite graph 
with $L$ on the left, $R$ on the right and edge set $E \cap (L \times R)$. It is also important to note that any blocking edge 
to $M$ in $G$ is in $R \times R$ and all edges in $L \times L$ are negative to $M$.

It was shown in \cite{CK16} that a popular matching $M$ is dominant if and only if there is {\em no} augmenting path with respect 
to $M$ in the subgraph obtained by deleting all negative edges with respect to $M$. In
bipartite graphs, dominant matchings and strongly dominant matchings are equivalent~\cite{CK16}. Moreover,
such a matching always exists in a bipartite graph and can be computed in linear time~\cite{Kav12}.

It can also be shown~\cite{Kav12} that every strongly dominant matching in $G=(V,E)$ is dominant.
However in non-bipartite graphs, not every dominant matching is strongly dominant. 
The complexity of the dominant matching problem in a roommates instance is currently not known.
Here we efficiently solve the strongly dominant matching problem.

\begin{theorem}
  \label{thm:strongly-dom}
  There is a linear time algorithm to determine if an instance $G = (V,E)$ with strict preference lists admits
  a strongly dominant matching or not and  if so, return one.
\end{theorem}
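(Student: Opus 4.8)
The plan is to reduce the strongly dominant matching problem to a stable matching problem in an auxiliary roommates instance $H$ obtained from $G$ by a two-level gadget, and then invoke the linear-time stable roommates algorithm \cite{Irv85}. Recall from Definition~\ref{def:strong-dominant} that in a strongly dominant matching the vertices of $R$ are exactly those that must be matched and whose incident non-matching edges are allowed to block, while the vertices of $L$ are ``passive'': every edge inside $L \times L$ is negative and no $L$-incident edge outside $M$ blocks. This asymmetry suggests assigning each vertex a \emph{level} in $\{0,1\}$, where level $1$ corresponds to membership in $R$ and level $0$ to membership in $L$: a vertex placed at level $1$ advertises itself as more desirable, so that it can force itself into the matching, whereas a level-$0$ vertex is matched only when some level-$1$ neighbor proposes to it.

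Concretely, I would give each $v \in V$ two incarnations $v^0$ and $v^1$ together with a private last-resort vertex, so that ``matched to the last resort'' encodes ``unmatched in $G$''. In the preference lists of $H$, a neighbor's level-$1$ incarnation is ranked above its level-$0$ incarnation, and incarnations of the same level are ordered by the original list of $G$. The intended semantics is: $v \in R$ iff $v^1$ is matched to a genuine neighbor, and $v \in L$ otherwise; the edges of $M$ are read off from the real (non-last-resort) pairs. I would then run the linear-time stable roommates algorithm on $H$; since the gadget increases the number of vertices and edges only by a constant factor, the overall running time stays linear. If $H$ has no stable matching the algorithm reports that $G$ has no strongly dominant matching, and otherwise it decodes the partition $(L,R)$ and the matching $M$ as above.

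The substance of the proof is the two-way correspondence between stable matchings of $H$ and strongly dominant matchings of $G$. For the \textbf{backward} direction I would take a strongly dominant matching $M$ with partition $(L,R)$, place every vertex of $R$ at level $1$ and every vertex of $L$ at level $0$, match last resorts for the unmatched $L$-vertices, and verify directly from conditions (i)--(iv) that the resulting assignment admits no blocking pair in $H$: such a pair would have to use a cross edge or an $L \times L$ edge of $G$, which (iii) and (iv) forbid. For the \textbf{forward} direction I would show that any stable matching of $H$, decoded as above, satisfies (i)--(iv): condition (i) because matched real pairs join a level-$1$ incarnation with a level-$0$ incarnation, condition (ii) because every level-$1$ vertex is matched to a real neighbor, condition (iii) because a blocking edge of $M$ not inside $R \times R$ would lift to a blocking pair of $H$, and condition (iv) because stability at level $0$ forces both endpoints of an $L \times L$ edge to prefer their partners.

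The main obstacle I anticipate is exactly this forward correctness, and in particular pinning down the level semantics so that the four conditions come out cleanly --- above all conditions (iii) and (iv), where I must rule out blocking pairs arising from cross edges and from $L \times L$ edges by translating them back into blocking pairs of $H$. A secondary subtlety is the treatment of vertices that fail to match at both levels: I must argue that in a stable matching of $H$ such a vertex is legitimately an unmatched vertex of $L$ rather than a witness that no strongly dominant matching exists, so that non-existence of a stable matching in $H$ coincides precisely with non-existence of a strongly dominant matching in $G$.
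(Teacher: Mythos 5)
Your overall architecture --- give each vertex two signed ``forms'', rank one form uniformly above the other in every preference list, run Irving's linear-time stable roommates algorithm on the resulting instance, and decode the partition $(L,R)$ and the matching from which form gets matched --- is exactly the paper's reduction, and your plan of proving the equivalence by two lemmas (one per direction) mirrors the paper's Lemmas~\ref{lemma2:main} and~\ref{lemma1:main}. However, two concrete choices in your construction create genuine gaps. First, you instantiate the two forms as two \emph{separate vertices} $v^0,v^1$, each of which can be matched independently; nothing in your description prevents a stable matching of $H$ from matching both $v^0$ and $v^1$ to genuine neighbors, in which case the decoded edge set gives $v$ two partners and is not a matching of $G$. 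You worry about vertices that match at neither level but not about vertices that match at both. The paper sidesteps this entirely by keeping a \emph{single} copy of each vertex and replacing each edge $(u,v)$ by two parallel edges $(u^+,v^-)$ and $(u^-,v^+)$, so each vertex is matched along at most one of them; if you insist on vertex copies you need extra interlocking gadgets (as in the bipartite construction of \cite{CK16}) that you have not supplied.

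Second, and more fundamentally, your level semantics is inverted. You rank each neighbor's level-$1$ (i.e.\ $R$-signalling) incarnation \emph{above} its level-$0$ incarnation and decode $v\in R$ iff $v^1$ is matched, so an $R$-vertex holds the \emph{less}-preferred form of its partner while a matched $L$-vertex holds a top-tier form. Then for an edge $(a,b)$ with $a,b\in L$, each endpoint already holds a level-$1$ partner and never envies the other's level-$0$ offer, so stability imposes no constraint at all on $L\times L$ edges and condition~(iv) of Definition~\ref{def:strong-dominant} is not enforced. Symmetrically, for $a,b\in R$ the edge $(a^1,b^0)$ blocks as soon as $a$ prefers $b$ to $M(a)$, because $b$ always prefers the level-$1$ form $a^1$ to its level-$0$ partner; so your backward direction also fails, since a strongly dominant matching with a blocking edge inside $R\times R$ --- explicitly permitted by condition~(iii) --- lifts to a blocking pair of $H$. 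The correct orientation is the opposite one: every vertex ranks all ``$-$'' forms above all ``$+$'' forms and $u\in R$ iff $u$ is matched along its ``$+$'' form, so that an $R$-vertex \emph{receives} the preferred form of its partner (rendering $R\times R$ blocking edges harmless) while a matched $L$-vertex receives the less-preferred form (forcing, via the two parallel edges, both endpoints of every $L\times L$ edge to prefer their partners, and forcing every neighbor of an unmatched vertex into $R$). With that orientation and the single-vertex, parallel-edge construction, your plan of verifying (i)--(iv) directly does go through; the paper instead verifies the witness characterization of Theorem~\ref{lem:strongly-dominant}, which is an equivalent route.
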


\subsection{Bipartite instances}
A natural generalization of the max-size  popular matching problem in a bipartite instance $G = (A \cup B, E)$
is the {\em max-weight} popular matching problem, where there is a weight function
$w: E \rightarrow \mathbb{R}$ and we seek a popular matching of largest weight. 
Several natural popular matching problems can be formulated with the help of edge weights: these include
computing a popular matching with as many of our ``favorite edges'' as possible or
an {\em egalitarian} popular matching (one that minimizes the sum of ranks of partners of all vertices).
Thus a max-weight popular matching problem is a generic problem that captures several optimization problems in popular
matchings. 

The max-weight {\em stable} matching problem is well-studied and there are several polynomial time 
algorithms~\cite{Feder92,Fed94,Fle03,ILG87,Rot92,TS98,VV89} to compute such
a matching or its variants in a bipartite graph with strict preference lists.
We show the following result here.

\begin{theorem}
  \label{thm:hard}
  The max-weight popular matching problem in $G = (A \cup B,E)$ with strict preference lists and a weight function
  $w: E \rightarrow \{1,2\}$ is $\mathsf{NP}$-hard.
\end{theorem}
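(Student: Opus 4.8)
The plan is to give a polynomial-time reduction from a classic $\mathsf{NP}$-complete problem, Independent Set, to the decision version of max-weight popular matching. The observation that shapes the whole reduction is that when $w(e)\in\{1,2\}$ the all-ones special case is exactly the max-size popular matching problem, which is polynomial-time solvable in bipartite graphs; hence the hardness must be driven entirely by the placement of the weight-$2$ edges. Concretely, I would arrange the instance so that the relevant popular matchings all have the same size and the weight of a popular matching equals a fixed base value plus twice the number of weight-$2$ edges it uses, so that maximizing weight over popular matchings amounts to choosing a maximum admissible set of weight-$2$ edges. The main technical tool throughout is the dual certificate of popularity in bipartite graphs: a matching $M$ is popular if and only if there is a labeling $\alpha\colon V\to\{-1,0,+1\}$, equal to $0$ on the vertices left unmatched by $M$, such that $\alpha_a+\alpha_b$ is at least the vote margin of $(a,b)$ against $M$ on every edge and equals $0$ on edges of $M$ (equivalently, a matched edge satisfies both endpoints, a blocking edge forces $\alpha_a=\alpha_b=+1$, and a negative edge imposes no constraint). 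I will use the ``if'' direction to certify that the intended matchings are popular and the ``only if'' direction to rule out unintended ones.

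Given a graph $H=(V_H,E_H)$ for the Independent Set instance, I would build one \emph{choice gadget} per vertex $v\in V_H$: a constant-size bipartite block whose preferences admit exactly two popular matchings of the same size, a ``low'' state $M^0_v$ using only weight-$1$ edges and a ``high'' state $M^1_v$ that uses a single designated weight-$2$ edge, with all other gadget edges having weight $1$. A convenient realization is a single rotation on four vertices, where both $M^0_v$ and $M^1_v$ are stable inside the gadget, so each is popular in isolation and carries its own $\alpha$-labeling; the weight-$2$ edge is placed so that reading $M^1_v$ as ``$v$ is selected'' and $M^0_v$ as ``$v$ is not selected'' makes selection strictly increase the weight. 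I would verify with the certificate that these are the \emph{only} two popular configurations of an isolated gadget.

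For each edge $(u,v)\in E_H$ I then attach a \emph{consistency connector}: a few vertices and edges joining the gadgets of $u$ and $v$, with preferences chosen so that the joint state ``both $u$ and $v$ selected'' creates a blocking edge that no admissible $\alpha$ can cover (it would force $\alpha_a=\alpha_b=+1$ in a way incompatible with the equalities on matched gadget edges), whereas every joint state in which at least one of $u,v$ is unselected admits a valid global labeling. Thus a matching of the whole instance is popular if and only if the set $S=\{v:\text{gadget }v\text{ is in its high state}\}$ is independent in $H$, and in that case its weight is exactly $W_0+2|S|$ for a fixed base $W_0$ determined by the numbers of gadgets and connectors. Consequently $H$ has an independent set of size at least $k$ if and only if the instance admits a popular matching of weight at least $W_0+2k$, which establishes $\mathsf{NP}$-hardness; since all weights used lie in $\{1,2\}$, the stronger statement follows.

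The routine direction is to take an independent set $S$, assemble the per-gadget matchings and their certificates into a global popular matching of the stated weight, and check the handful of edge inequalities across each connector. The hard part, and where I expect most of the effort to go, is the converse: showing that \emph{every} popular matching of the constructed instance arises in this way, i.e. that each gadget sits in one of its two intended states and that the selected gadgets form an independent set. This requires ruling out ``mixed'' matchings that reroute edges through connectors or leave gadget vertices unmatched; I would do this by assuming such a matching popular, invoking the necessity of the $\alpha$-certificate, and deriving a contradiction from a blocking edge whose required labels clash with the forced equalities on matched edges (using, where helpful, that a dominant and hence popular matching leaves no augmenting path in the graph obtained by deleting its negative edges). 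Designing the connectors so that the infeasible joint state is the \emph{unique} obstruction to popularity, while all feasible states remain popular, is the crux of the construction.
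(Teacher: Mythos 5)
Your write-up is a plan rather than a proof, and the one component that carries all the difficulty --- the ``consistency connector'' that makes the joint state \emph{both selected} unpopular while leaving every other combination of gadget states popular --- is never constructed; you acknowledge this yourself as the crux. This is a genuine gap, and it is not a routine one: forbidding a \emph{combination} of locally popular gadget states via popularity itself is a much stronger requirement than anything the paper needs. The mechanism you sketch (a blocking edge whose required labels $\alpha_a=\alpha_b=+1$ clash with equalities on matched edges) has to be reconciled with the fact that, by Lemma~\ref{prop1}, the witness parity is chosen independently per connected component of the popular subgraph $F_G$; if your two vertex gadgets sit in different components and each admits both parities, you must explain why no global witness exists for the ``both high'' matching even though witnesses exist for ``high--low'' and ``low--high''. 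Nothing in the certificate machinery you cite rules this recombination out, so the existence of such a connector is exactly what would have to be proved, and you have not proved it.

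The paper's proof of Theorem~\ref{thm:hard} deliberately avoids this issue. It reduces from Vertex Cover, and in its instance \emph{every} combination of vertex-gadget states yields a popular matching (Lemma~\ref{lem:redn3}); no combination is forbidden by popularity. The combinatorial constraint is instead encoded one-directionally through the weights: each edge gadget $D_e$ contains two \emph{unstable} vertices $s_e,t_e$, the weight function forces any max-weight popular matching to match them (Claim~\ref{clm:redn4}), and complementary slackness (part~2 of Lemma~\ref{prop0}) then forces $\alpha_{t_e}=-1$ or $\alpha_{s_e}=-1$, which in turn forces at least one adjacent vertex gadget into its $\pm 1$-parity state (Lemma~\ref{lem:redn5}). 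Note also that the weight incentive runs opposite to yours: the ``selected'' state of a vertex gadget is the \emph{cheaper} one (weight $2$ versus $4$), and the gain comes from activating the edge gadgets, so maximizing weight amounts to minimizing the vertex cover. If you want to salvage your outline, the honest path is to replace the ``both selected is unpopular'' connector with this kind of weight-driven, unstable-vertex mechanism --- at which point you have reconstructed the paper's argument.
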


A 2-approximate  max-weight popular matching in $G = (A \cup B,E)$ with strict preference lists and
non-negative edge weights can be computed in polynomial time.
We also show the following fast exponential time algorithm, where $n = |A \cup B|$.

\begin{theorem}
  \label{thm:exp-alg}
  A max-weight popular matching in $G = (A \cup B,E)$ with  strict preference lists and a 
weight function $w: E \rightarrow \mathbb{R}$ can be computed in $O^*(c^n)$ time, 
where $c = 2^{1/4} \approx 1.19$.
\end{theorem}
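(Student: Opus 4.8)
The plan is to reduce the search to a polynomially-solvable subproblem through the dual (witness) characterization of popularity, and then to enumerate the relevant witnesses in $O^*(2^{n/4})$ time using a square-rooting (meet-in-the-middle) technique. Recall that $M$ is popular in $G=(A\cup B,E)$ iff there is a vector $\vec\alpha\in\{-1,0,1\}^{A\cup B}$ with $\alpha_u=0$ for every $u$ unmatched in $M$, with $\alpha_a+\alpha_b\ge \wt_M(a,b)$ for every edge $(a,b)\in E$, and with $\sum_v\alpha_v=0$, where $\wt_M(a,b)=\vote_a(b,M(a))+\vote_b(a,M(b))$ and $\wt_M(e)=0$ for $e\in M$. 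Summing the edge inequalities over $M$ against $\sum_v\alpha_v=0$ forces $\alpha_a+\alpha_b=0$ on every matched edge, so each edge of $M$ is either a $(0,0)$-edge or a $(+1,-1)$-edge, and every vertex with $\alpha_v\ne0$ is matched. First I would show that, once $\vec\alpha$ is fixed, the matchings that admit $\vec\alpha$ as a witness are exactly the stable matchings of a derived, fully specified bipartite instance $G_{\vec\alpha}$ obtained by shifting each vertex's acceptability according to its level $\alpha_v$ (the promotion/level construction underlying the known popular-matching algorithms). Consequently a maximum-weight matching with witness $\vec\alpha$ can be found in polynomial time by a single max-weight stable matching computation in $G_{\vec\alpha}$, and the max-weight popular matching is the best of these over all valid $\vec\alpha$.

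This already yields an $O^*(3^n)$ algorithm by brute force over all sign vectors. To bring the base down I would first cut the enumeration to one side: because the nonzero vertices occur as matched $(+1,-1)$ pairs, I would argue that fixing the restriction $\vec\alpha|_A$ to the smaller side (say $A$, with $|A|\le n/2$) already determines the subproblem, the complementary data on $B$ being recovered from the matched-pair structure and the stable matching in $G_{\vec\alpha}$; this gives $O^*(2^{n/2})$. The second ingredient is meet-in-the-middle: split $A=A_1\cup A_2$ with $|A_1|,|A_2|\le|A|/2$, enumerate the level-assignments on $A_1$ and on $A_2$ separately, attach to each partial assignment a polynomial-size \emph{interface profile} recording how it constrains the shared neighbourhood in $B$ (the induced levels and the residual preference thresholds there), and combine compatible pairs by sorting/hashing on their profiles. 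For each $A_1$-assignment the best compatible $A_2$-assignment is then found in $\poly(n)$ time, for a total of $O^*(2^{|A|/2})=O^*(2^{n/4})$, which is $O^*(c^n)$ with $c=2^{1/4}$.

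The hard part will be making meet-in-the-middle sound, since a matching couples $A_1$ and $A_2$ globally through their common neighbours in $B$, so neither the feasibility of a witness nor the objective splits additively across the cut $(A_1,A_2)$ for free. The crux is to prove that all cross-half interaction is mediated by a polynomial-size interface on the boundary $B$-vertices: I would argue that, relative to the fixed levels, whether an edge is blocking and whether a vertex is matched are local to one half except through the levels and rank-thresholds induced on the shared $B$-vertices, so that two partial assignments agreeing on this interface extend to a single matching with witness $\vec\alpha$, of weight equal to the sum of the two recorded contributions plus a boundary correction computable from the profiles. Establishing that this interface is simultaneously a \emph{sufficient} certificate of global compatibility and \emph{separates} the weight additively — so that the combine step is both correct and runs in $\poly(n)$ per pair — is the step I expect to demand the most care.
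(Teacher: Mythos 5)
Your opening reduction (popularity $\Leftrightarrow$ existence of a witness $\vec\alpha\in\{0,\pm1\}^n$, then optimize in polynomial time over the matchings admitting a fixed witness) is the right starting point and matches the paper's framework, but both of your halving steps have gaps, and the second one is fatal.

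First, enumerating $\vec\alpha|_A$ gives $3^{|A|}=3^{n/2}\approx 1.73^n$ choices, not $2^{n/2}$: each $A$-vertex has three possible levels, not two. To get base $2$ you must enumerate only \emph{parities}, and justifying that requires the structural fact the paper proves as Lemma~\ref{prop1}: the parity of $\alpha_u$ is constant on each connected component of the popular subgraph $F_G$, because $\alpha_a+\alpha_b=\wt_M(a,b)$ is even on every popular edge by complementary slackness. This is what lets the paper enumerate one bit per non-singleton component ($\ell\le n/2$ of them) rather than per vertex, and then recover the actual signs and the optimal matching by optimizing over the polytope ${\cal P}(\vec r)$, whose integrality is a nontrivial theorem (proved via the Teo--Sethuraman and Huang--Kavitha table arguments). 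Without the component structure your $2^{n/2}$ claim is unsupported, and your assertion that the matchings with a fixed witness are exactly the stable matchings of a single derived instance $G_{\vec\alpha}$ is also stated without proof.

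Second, and more seriously, the meet-in-the-middle step does not work, and you have correctly located the difficulty without resolving it. A $B$-vertex adjacent to both $A_1$ and $A_2$ must carry, in your ``interface profile,'' at least its level and enough information about its matched partner to decide which cross edges block and what the cross edges contribute; since there can be $\Theta(n)$ such shared $B$-vertices, the number of distinct profiles is exponential, not polynomial. Meet-in-the-middle requires the cross-coupling to factor through a polynomial-size state space (as in subset sum, where it is a single number); here the coupling is an arbitrary matching on a linear-size shared neighbourhood together with per-vertex rank thresholds, and no polynomial certificate of joint extendability is in sight. The paper gets from $2^{n/2}$ down to $2^{n/4}$ by an entirely different and more delicate mechanism: components of $F_G$ of size exactly $2$ need not have their parity enumerated at all, because their parity bit $r_i$ can be replaced by a continuous variable $p_i\in[0,1]$ in the LP while the polytope remains integral (Theorem~\ref{thm:new-polytope}, via Lemmas~\ref{lemma:mat4} and~\ref{lemma:mat5}); only the $k\le n/4$ components of size at least $4$ are enumerated. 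I would abandon the split-and-combine idea and instead ask which parts of the enumeration can be absorbed into the polytope.
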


\subsection{Background and Related results}
Algorithmic questions for popular matchings were 
first studied in the domain of {\em one-sided} preference lists~\cite{AIKM07} in a bipartite instance $G = (A \cup B,E)$
where it is only vertices in $A$ that have preferences over their neighbors. Popular matchings need not always 
exist here, however popular mixed matchings always exist~\cite{KMN09}.
This proof extends to the domain of two-sided preference lists (with ties) and to non-bipartite graphs.

Popular matchings always exist in $G = (A \cup B,E)$ with two-sided 
strict preference lists. An $O(mn_0)$ algorithm to compute a max-size popular matching here was shown in \cite{HK11},
where $m = |E|$ and $n_0 = \min(|A|,|B|)$. A linear time algorithm for the max-size popular matching problem in such an
instance $G$ was shown in \cite{Kav12}.

A linear time algorithm was shown in \cite{CK16} to determine if there is a popular matching in $G = (A \cup B,E)$ that contains 
a given edge $e$. It was also shown in \cite{CK16} that
dominant matchings in $G = (A \cup B, E)$ are equivalent to stable matchings in a larger bipartite graph.
This equivalence implies a polynomial time algorithm to solve
the max-weight popular matching problem in a complete bipartite graph.

A description of the popular half-integral matching polytope of $G=(A\cup B, E)$ with strict preference lists
was given in \cite{Kav16}. It was shown in \cite{HK17} that the popular fractional matching polytope (from \cite{KMN09}) 
for such an instance $G = (A \cup B,E)$ is half-integral.
The half-integrality of the popular fractional matching polytope also holds for roommates instances~\cite{HK17}.

When preference lists admit ties, the problem of determining if a bipartite instance admits a popular 
matching or not is $\mathsf{NP}$-hard~\cite{BIM10,CHK15}. 
It is $\mathsf{NP}$-hard to compute a {\em least unpopularity factor} matching in a roommates instance~\cite{HK13}.
It was shown in \cite{HK17} that it is $\mathsf{NP}$-hard to compute a max-weight popular matching problem 
in a roommates instance with strict preference lists and it is $\mathsf{UGC}$-hard to compute a 
$\Theta(1)$-approximation.

The complexity of finding a max-weight popular matching in a bipartite instance 
with strict preference lists was left as an open problem in \cite{HK17}.
This problem along with the
complexity of finding a max-size popular matching in a roommates instance are two of the three open problems
in popular matchings listed in \cite{SocialChoice17} and we answer these two questions here.

\subsection{Techniques} 
Our results are based on LP-duality.
Every popular matching $M$ in an instance $G = (V,E)$ is a {\em max-cost} perfect matching in the graph $G$ with self-loops 
added and with edge costs given by a function $\wt_M$ (these costs depend on the matching $M$). 
Any optimal solution to the dual LP will be called a {\em witness} to $M$'s popularity.

\medskip

\noindent{\em Our hardness results.}
Witnesses for popular matchings in bipartite graphs first appeared in \cite{KMN09} and they were used in \cite{Kav16,HK17,BK17}. 
Roughly speaking, these algorithms dealt with matchings that had an element in $\{\pm 1\}^n$ as a witness.
Note that a stable matching has $0^n$ as a witness. For general popular matchings, there is no 
such ``parity agreement'' among the coordinates of any witness and we use this to show that 
the max-weight popular matching problem in bipartite graphs is $\mathsf{NP}$-hard.

All max-size popular matchings in a bipartite instance match the same subset of vertices~\cite{Hirakawa-MatchUp15}, 
however the {\em rural hospitals theorem} does not necessarily hold for max-size popular matchings in roommates instances.
Such an instance forms the main gadget in the proof of $\mathsf{NP}$-hardness for max-size popular
matchings in a roommates instance.

\medskip

\noindent{\em Our algorithms.}
We generalize the max-size popular matching algorithm for bipartite graphs~\cite{Kav12} to solve the strongly dominant matching 
problem in all graphs. We show a surprisingly simple reduction from the strongly dominant matching
problem in $G = (V,E)$ to the stable matching problem in a new roommates instance $G' = (V,E')$. Thus Irving's stable matching
algorithm~\cite{Irv85} in $G'$ solves our problem in linear time.

Our reduction is similar to an analogous correspondence in the bipartite case from \cite{CK16}.
However the graph $G'$ in \cite{CK16} on $3|A|+|B|$ vertices is {\em asymmetric} with respect to vertices in $A$ and $B$
of $G = (A \cup B,E)$. Now our new graph $G'$ may be regarded as the {\em bidirected} version of $G$,
i.e., each edge $(u,v)$ in $G$ is replaced by two edges $(u^+,v^-)$ and $(u^-,v^+)$ in $G'$.

Our fast exponential time algorithm for max-weight popular matchings in  $G = (A \cup B,E)$
formulates the convex hull ${\cal P}(\vec{r})$ of all popular matchings with at least 1
witness whose parities agree with a given vector $\vec{r} \in \{0,1\}^k$. Here $k$ is the number of components 
of size 4 or more in a subgraph $F_G$, whose edge set is the union of all popular matchings in $G$.

Our formulation of ${\cal P}(\vec{r})$ is based on the popular fractional matching polytope ${\cal P}_G$~\cite{KMN09}. 
We  use $\vec{r}$ to tighten several of the constraints in the formulation of ${\cal P}_G$ and
introduce new variables sandwiched between 0 and 1 to denote {\em fractional} parities and formulate a polytope. 
We use methods from \cite{HK17,TS98} along with some new ideas 
to show that our polytope is integral, more precisely, it is ${\cal P}(\vec{r})$. This leads to the $O^*(2^k)$ (where $k \le n/4$) 
algorithm.

\medskip

\noindent{\em Organization of the paper.}
Witness vectors for popular matchings in bipartite and roommates instances are defined in Section~\ref{prelims}.
Our algorithm for the strongly dominant matching problem in a roommates instance $G = (V,E)$ is given in Section~\ref{section4}.
Section~\ref{section3} shows that finding a max-size popular matching in $G$ is $\mathsf{NP}$-hard.
Section~\ref{section5} shows the $\mathsf{NP}$-hardness of the max-weight popular matching problem in a bipartite instance 
$G = (A \cup B, E)$. Our fast exponential time algorithm for this problem is given in  Section~\ref{section6}.

\section{Witness of a popular matching}
\label{prelims}

Let $M$ be any matching in our input instance $G = (V,E)$. In order to determine if 
$M$ is popular or not, we need to check if $\Delta(N,M) \le 0$ for all matchings $N$ in $G$ 
(see Definition~\ref{pop-def}). Computing $\max_N \Delta(N,M)$ reduces to computing a max-cost 
perfect matching in a graph $\tilde{G}$ with edge costs that are defined below.

The graph $\tilde{G}$ is the graph $G$ augmented with {\em self-loops} --- we assume that every vertex
is at the bottom of its own preference list. Adding self-loops allows us to view any matching $M$ in $G$
as a {\em perfect matching} $\tilde{M}$ in $\tilde{G}$ by including self-loops for all vertices left unmatched in $M$.
We now define a function $\wt_M$ on the edge set of $\tilde{G}$. For any edge $(u,v) \in E$, define:
\begin{equation*} 
\wt_M(u,v) = \begin{cases} 2   & \text{if\ $(u,v)$\ is\ a\ blocking\ edge\ to\ $\tilde{M}$}\\
	                     -2 &  \text{if\ $(u,v)$\ is\ a\ {\em negative}\ edge\ to\ $\tilde{M}$ }\\			
                              0 & \text{otherwise}
\end{cases}
\end{equation*}

Recall that an edge $(u,v)$ is {\em negative} to $\tilde{M}$ if both $u$ and $v$ prefer their partners in $\tilde{M}$ to each other.
Thus $\wt_M(u,v)$ is the sum of {\em votes} of $u$ and $v$ for each other over $\tilde{M}(u)$ and $\tilde{M}(v)$, 
respectively, where for any vertex $u$ and neighbors $v,v'$ of $u$: $u$'s vote for $v$ versus $v'$ is 1 if $u$ prefers $v$ to
$v'$, it is $-1$ if $u$ prefers $v'$ to $v$, else it is 0 (i.e. $v = v'$).
Observe that $\wt_M(u,v)= 0$ for any edge $(u,v) \in M$.

We now define $\wt_M$ for self-loops as well. For any $u \in V$, $\wt_M(u,u) = 0$ if $u$ is unmatched in $M$, else
$\wt_M(u,u) = -1$. Thus $\wt_M(u,u)$ is $u$'s vote for itself versus $\tilde{M}(u)$.

\begin{new-claim}
\label{claim1}
For any matching $N$ in $G$, $\Delta(N,M) = \wt_M(\tilde{N})$. 
\end{new-claim}
\begin{proof}
We will use the function $\vote(\cdot,\cdot)$ here. For any vertex $u$ and neighbors $v,v'$ of $u$ in $\tilde{G}$, 
recall that $\vote_u(v,v')$ is 1 if $u$ prefers $v$ to $v'$, it is -1 if $u$ prefers $v'$ to $v$, it is 0 otherwise (i.e., $v = v'$).

\smallskip 
 
Let $\tilde{M}(u)$ be $u$'s partner in $\tilde{M}$. Observe that
$\wt_M(a,b) = \vote_a(b,\tilde{M}(a)) + \vote_b(a,\tilde{M}(b))$. Also,
$\wt_M(u,u) = \vote_u(u,\tilde{M}(u))$.
We have the following equality from the definitions of $\Delta(\cdot,\cdot)$ and $\vote(\cdot,\cdot)$.
\begin{eqnarray*}
\Delta(N,M) & = & \sum_{u \in A\cup B}\vote_u(\tilde{N}(u),\tilde{M}(u))\\ 
            & = & \sum_{(a,b) \in N} (\vote_a(b,\tilde{M}(a)) + \vote_b(a,\tilde{M}(b))) \ + \sum_{(u,u) \in \tilde{N}}\vote_u(u,\tilde{M}(u)).
\end{eqnarray*}
This sum is exactly $\sum_{(a,b) \in N}\wt_M(a,b) +  \sum_{(u,u) \in \tilde{N}}\wt_M(u,u)$,
which is $\wt_M(\tilde{N})$. \qed
\end{proof}

Thus $M$ is popular if and only if every perfect matching in $\tilde{G}$ has cost at most 0.

\subsection{Bipartite instances}
Let $G = (A \cup B,E)$ be a bipartite instance with strict preference lists.
Consider the max-cost perfect matching LP in $\tilde{G}$: this is LP1 given below.
The set $\tilde{E}$ is the edge set of $\tilde{G}$ and $\tilde{E}(u)$ is the set of edges incident to $u$ in $\tilde{G}$. 
The linear program LP2 is the dual of LP1.

\begin{table}[ht]
\begin{minipage}[b]{0.45\linewidth}\centering
\begin{eqnarray*}
 \max \sum_{e \in \tilde{E}} \wt_M(e)\cdot x_e  &\mbox{\hspace*{0.1in}}\text{(LP1)}\\
      \text{s.t.}\qquad\sum_{e \in \tilde{E}(u)}x_e = 1  &\mbox{\hspace*{0.1in}}\forall\, u \in A \cup B\\
                        x_e  \ge 0   &\mbox{\hspace*{0.1in}}\forall\, e \in \tilde{E}. 
\end{eqnarray*}
\end{minipage}
\hspace{0.6cm}
\begin{minipage}[b]{0.45\linewidth}\centering
  \begin{eqnarray*}
\min \sum_{u \in A \cup B}\alpha_u  &\mbox{\hspace*{0.1in}}\text{(LP2)}\\
       \text{s.t.}\qquad \alpha_{a} + \alpha_{b} \ge \wt_{M}(a,b)  &\mbox{\hspace*{0.1in}}\forall\, (a,b)\in E\\
                          \alpha_u \ge \wt_M(u,u) &\mbox{\hspace*{0.1in}}\forall\, u \in A \cup B.\\
\end{eqnarray*}
\end{minipage}
\end{table}

$M$ is popular if and only if the optimal value of LP1 is at most 0 (by Claim~\ref{claim1}); in fact, the optimal value is exactly~0
since $\tilde{M}$ is a perfect matching in $\tilde{G}$ and $\wt_M(\tilde{M}) = 0$.
Thus $M$ is popular if and only if the optimal value of LP2 is 0 (by LP-duality).

\begin{definition}
  For any popular matching $M$, an optimal solution $\vec{\alpha}$ to LP2 above 
  is called a {\em witness} of $M$.
\end{definition}

A popular matching $M$ may have several witnesses. For any witness $\vec{\alpha}$, 
$\sum_{u \in A \cup B} \alpha_u =  0$ since $\vec{\alpha}$ is an optimal solution to LP2 and the optimal value of LP2 is~0.
Let $n = |A \cup B|$.

\begin{lemma}[\cite{Kav16}]
  \label{lem:witness}
Every popular matching $M$ in $G = (A \cup B,E)$ has a witness in $\{0,\pm 1\}^n$.
\end{lemma}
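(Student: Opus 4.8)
The plan is to exhibit an \emph{integral} optimal solution to LP2 and then show that complementary slackness together with the self-loop constraints pin its entries into $\{0,\pm 1\}$.

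First I would argue that LP2 admits an integral optimum. Writing all constraints of LP2 in $\ge$ form, its constraint matrix is $\left[\begin{smallmatrix} B \\ I \end{smallmatrix}\right]$, where $B$ is the edge--vertex incidence matrix of the bipartite graph $G = (A\cup B,E)$ and $I$ is the $n\times n$ identity arising from the self-loop constraints $\alpha_u \ge \wt_M(u,u)$ (the self-loop $(u,u)$ contributes exactly the unit-vector row for $u$). Since $B$ is totally unimodular for a bipartite graph and appending unit-vector rows preserves total unimodularity, the whole matrix is totally unimodular; moreover the right-hand side $\wt_M$ is integral. The self-loop constraints bound every coordinate from below, so the feasible region is pointed, and as the optimal value is finite (it equals $0$, since $M$ is popular) the minimum is attained at a vertex. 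By total unimodularity and integrality of the right-hand side this vertex is integral, yielding an integral witness $\vec{\alpha}$.

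Next I would use complementary slackness to determine the structure of $\vec{\alpha}$. The perfect matching $\tilde{M}$ of $\tilde{G}$ is an optimal solution of LP1 (its cost is $\wt_M(\tilde{M}) = 0$, the optimal value), so any optimal $\vec{\alpha}$ must satisfy the dual constraints tightly on the support of $\tilde{M}$. For every matched edge $(a,b)\in M$ this gives $\alpha_a + \alpha_b = \wt_M(a,b) = 0$, and for every vertex $u$ unmatched in $M$ the tight self-loop gives $\alpha_u = \wt_M(u,u) = 0$. I would then combine these equalities with feasibility: for a matched vertex $u$ we have $\wt_M(u,u) = -1$, so $\alpha_u \ge -1$, and hence for a matched pair $(a,b)$ we get $\alpha_a \ge -1$ and $\alpha_b = -\alpha_a \ge -1$, forcing $\alpha_a \in [-1,1]$. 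Integrality then restricts $\alpha_a, \alpha_b$ to $\{-1,0,1\}$, while unmatched vertices already satisfy $\alpha_u = 0$. Thus the integral witness lies in $\{0,\pm 1\}^n$.

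I expect the main obstacle to be the first step: making the total unimodularity argument precise and, in particular, justifying that LP2 attains its optimum at an integral vertex even though the $\alpha_u$ are free variables. This is exactly where the self-loop constraints do the essential work, since they bound the matched coordinates from below (ensuring pointedness) and fix the unmatched coordinates at $0$; the remaining confinement to $\{0,\pm 1\}$ is then a short consequence of the complementary-slackness equalities $\alpha_a + \alpha_b = 0$ on matched edges.
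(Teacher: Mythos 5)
The paper does not prove this lemma; it is imported verbatim from \cite{Kav16}, so there is no in-paper proof to compare against. Judged on its own terms, your argument is correct and self-contained. The constraint matrix of LP2 in $\ge$ form is indeed the (edge,vertex)-incidence matrix of the bipartite graph $G$ with the $n\times n$ identity appended, which is totally unimodular; the right-hand side $\wt_M$ is integral; the self-loop constraints make the feasible polyhedron pointed (any line direction $d$ in the recession cone would need $d\ge 0$ and $-d\ge 0$); and the optimum is finite and equal to $0$ by the LP-duality discussion already in Section~2. Hence an integral optimal vertex exists. Your complementary-slackness step is also sound: $\tilde{M}$ is an optimal primal solution, so $\alpha_a+\alpha_b=\wt_M(a,b)=0$ on matched edges and $\alpha_u=\wt_M(u,u)=0$ on unmatched vertices, and combining $\alpha_u\ge \wt_M(u,u)=-1$ for matched vertices with $\alpha_b=-\alpha_a$ confines matched coordinates to $[-1,1]$, hence to $\{0,\pm1\}$ by integrality. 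The route in \cite{Kav16} reaches the same conclusion from LP-duality but constructs the witness more explicitly (combinatorially, via the structure of the matching after deleting negative edges), which yields extra structural information used elsewhere; your polyhedral argument is shorter and arguably more transparent for the bare existence statement, at the cost of not exhibiting the witness. Either way, the lemma stands.
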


\subsection{Roommates instances}
Here our input is a graph $G = (V,E)$ with strict preference lists.
As done earlier, we will formulate the max-cost perfect matching problem in $\tilde{G}$ with cost function $\wt_M$
as our primal LP. 

The dual LP (\ref{LP3} given below) will be useful to us. Here $\Omega$ is
the collection of all odd subsets $S$ of $V$ of size at least 3 and $E[S]$ is the set of all $(u,v) \in E$ such that $u,v \in S$.
A matching $M$ is popular in $G = (V,E)$ if and only if the optimal value of \ref{LP3} is 0.

\begin{linearprogram}
  {
    \label{LP3}
    \minimize{\sum_{u \in V} \alpha_u \ + \ \sum_{S\in\Omega}\lfloor\,|S|/2\,\rfloor\cdot z_S}
  }
  \textstyle \alpha_u + \alpha_v + \sum_{\substack{S\in\Omega\\u,v\in S}}z_S \ & \ge \ \ \wt_M(u,v) \ \ \forall\, (u,v) \in E\notag  \\
  z_S \ \ge \ \ 0 \ \ \forall S \in \Omega\ \ \ \ & \mathrm{and} \ \ \ \ \alpha_u \  \ge \ \ \wt_M(u,u)\ \ \forall\, u \in V\notag.
\end{linearprogram}

\begin{definition}
For any popular matching $M$, an optimal solution $(\vec{\alpha}, \vec{z})$ to \ref{LP3} is called a {\em witness} of $M$.
\end{definition}

For any witness $(\vec{\alpha}, \vec{z})$, we have $\sum_{u \in V} \alpha_u + \sum_{S \in \Omega}\lfloor |S|/2\rfloor\cdot z_S = 0$.
Note that any stable matching in $G$ has $(\vec{0}, \vec{0})$ as witness.

Theorem~\ref{lem:strongly-dominant} gives a characterization of strongly dominant matchings
in terms of a special witness $(\vec{\alpha},\vec{z})$.
We will use this characterization of strongly dominant matchings in Section~\ref{section4}.

\begin{theorem}
  \label{lem:strongly-dominant}
  A matching $M$ is strongly dominant in $G$ if and only if there exists a feasible solution $(\vec{\alpha},\vec{z})$ to 
  \ref{LP3} such that $\alpha_u = \pm 1$ for all vertices $u$ matched in $M$, $\alpha_u = 0$ for all $u$ unmatched in~$M$, $z_S = 0$ 
  for all $S \in \Omega$, and $\sum_{u \in V}\alpha_u = 0$. 
\end{theorem}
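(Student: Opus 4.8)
The plan is to prove the two directions separately, exploiting the fact that a strongly dominant matching is, by Definition~\ref{def:strong-dominant}, an $R$-perfect stable matching in the bipartite graph with $L$ on one side, $R$ on the other, and edge set $E \cap (L \times R)$. For the forward direction, suppose $M$ is strongly dominant with partition $(L,R)$. The natural candidate witness sets $\alpha_u = +1$ for $u \in R$ and $\alpha_u = -1$ for the matched vertices in $L$, and $\alpha_u = 0$ for unmatched vertices (which all lie in $L$, since $M$ matches all of $R$ by condition~(ii)); I would set $z_S = 0$ for every $S \in \Omega$. The sum condition $\sum_{u} \alpha_u = 0$ follows because $M \subseteq L \times R$ pairs each matched $L$-vertex with exactly one $R$-vertex, so the $+1$'s and $-1$'s cancel in pairs. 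The main work is checking the edge constraints $\alpha_u + \alpha_v \ge \wt_M(u,v)$ for every $(u,v) \in E$, and I would organize this by which block the edge lies in. An edge inside $R \times R$ gives $\alpha_u + \alpha_v = 2 \ge \wt_M(u,v)$, which always holds since $\wt_M \le 2$ (this is exactly where condition~(iii), allowing blocking edges only in $R \times R$, is harmless). An edge inside $L \times L$ is negative by condition~(iv), so $\wt_M(u,v) = -2 = \alpha_u + \alpha_v$ when both endpoints are matched; if an endpoint is unmatched, the negative-edge condition still forces $\wt_M \le 0$ against the relevant $\alpha$-values, and I would verify this small case directly. For an edge in $L \times R$, we have $\alpha_u + \alpha_v = 0$, and I must show $\wt_M(u,v) \le 0$, i.e. $(u,v)$ is not blocking; this is precisely the statement that $M$ is stable in the bipartite graph $E \cap (L\times R)$, which holds because condition~(iii) confines all blocking edges to $R \times R$. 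Finally the loop constraints $\alpha_u \ge \wt_M(u,u)$ reduce to $\alpha_u \ge -1$ for matched $u$ and $\alpha_u \ge 0$ for unmatched $u$, both immediate.

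For the reverse direction, suppose such a witness $(\vec\alpha, \vec z)$ exists with $\vec z = \vec 0$, $\alpha_u \in \{+1,-1\}$ on matched vertices, $\alpha_u = 0$ on unmatched vertices, and $\sum_u \alpha_u = 0$. I would define $R = \{u : \alpha_u = +1\}$ and $L = V \setminus R = \{u : \alpha_u \in \{0,-1\}\}$, and then verify the four conditions of Definition~\ref{def:strong-dominant}. First, each edge $(u,v) \in M$ satisfies $\alpha_u + \alpha_v \ge \wt_M(u,v) = 0$, and I expect to argue (using that the witness is \emph{optimal}, hence complementary slackness gives equality on matched edges) that $\alpha_u + \alpha_v = 0$ for $(u,v) \in M$; combined with $\alpha_u \in \{+1,-1\}$ this forces one endpoint in $R$ and one in $L$, giving $M \subseteq L \times R$, which is condition~(i). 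Condition~(ii), that $M$ matches all of $R$, follows because any unmatched vertex has $\alpha = 0 \neq +1$. For condition~(iii): if $(u,v)$ is blocking then $\wt_M(u,v) = 2 \le \alpha_u + \alpha_v$, and since each $\alpha \le 1$ this forces $\alpha_u = \alpha_v = +1$, i.e. $(u,v) \in R \times R$. For condition~(iv): for an edge $(u,v) \in L \times L$ both endpoints have $\alpha \le 0$, so $\alpha_u + \alpha_v \le 0$, and I need $\wt_M(u,v) = -2$; the constraint only gives $\wt_M(u,v) \le \alpha_u + \alpha_v \le 0$, so $\wt_M(u,v) \in \{-2,0\}$, and I would rule out $\wt_M(u,v) = 0$ (non-negative edge but not blocking) by a short argument. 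This last point is the step I expect to be the main obstacle: the naive constraint bound permits $\wt_M(u,v) = 0$ on an $L \times L$ edge, which would violate condition~(iv). I anticipate resolving it by using the sum/optimality condition more carefully—for instance, showing that a non-negative non-blocking edge inside $L \times L$ with $\alpha_u = \alpha_v = 0$ (both endpoints unmatched) would let us augment $M$ along that edge to contradict optimality of the witness (the augmented matching would have positive $\wt_M$-cost, contradicting that the optimal value of \ref{LP3} is $0$), and handling the remaining sub-cases where at least one endpoint is matched by the tightness forced on matched edges.

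I would present the two directions as two lemmas and then combine them. The only genuinely delicate point, as noted, is enforcing condition~(iv) in the reverse direction, where I plan to lean on the optimality (not mere feasibility) of the witness and the fact that the optimal value of \ref{LP3} equals $0$ for the popular matching $M$.
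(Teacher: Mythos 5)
Your proposal is correct and follows essentially the same route as the paper: the same witness $\alpha_u = 1$ on $R$, $-1$ on matched $L$-vertices, $0$ on unmatched vertices with $\vec{z}=\vec{0}$ in the forward direction, the same case analysis over $R\times R$, $L\times L$, $L\times R$, and in the converse the same partition $R=\{u:\alpha_u=1\}$ with complementary slackness giving $M\subseteq L\times R$. The one delicate point you flag --- ruling out $\wt_M(u,v)=0$ on an $L\times L$ edge --- is resolved exactly as in the paper: if both endpoints have $\alpha=0$ they are both unmatched and the edge contradicts popularity (indeed such an edge is blocking, with $\wt_M=2$, so it even violates feasibility of the witness directly), and otherwise $\alpha_u+\alpha_v\le -1$ forces $\wt_M(u,v)=-2$ by the parity of $\wt_M$.
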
  
\begin{proof}
Let $M$ be a strongly dominant matching in $G = (V,E)$. So $V$ can be partitioned into $L \cup R$ such that properties~(i)-(iv) in
Definition~\ref{def:strong-dominant} are satisfied. Set $z_S = 0$ for all $S \in \Omega$.
We will now construct $\vec{\alpha}$ as follows. For $u \in V$:
\begin{itemize}
\item if $u \in R$ then set $\alpha_u = 1$
\item else if $u$ is matched in $M$ then set $\alpha_u = -1$ else set $\alpha_u = 0$.
\end{itemize}

Since $M$ matches all vertices in $R$, all vertices unmatched in $M$ are in $L$. Thus $\alpha_u = 0$ for all $u$ unmatched in $M$
and $\alpha_u = \pm 1$ for all $u$ matched in $M$. For any edge $(u,v) \in M$, since $M \subseteq L \times R$, 
$(\alpha_u,\alpha_v) \in \{(1,-1),(-1,1)\}$ and so $\alpha_u + \alpha_v = 0$. Thus $\sum_{u \in V}\alpha_u = 0$.

We will now show that $(\vec{\alpha},\vec{0})$ satisfies the constraints of \ref{LP3}. We have $\alpha_u \ge \wt_M(u,u)$.
This is because $\alpha_u = 0 = \wt_M(u,u)$ for $u$ left unmatched in $M$ and $\alpha_u \ge -1 = \wt_M(u,u)$ for $u$ matched in $M$. 
We will now show that all edge covering constraints are obeyed.
\begin{itemize}
\item Since $\wt_M(e) \le 2$ for any edge $e$ and $\alpha_u = 1$ for all $u \in R$, all edges in $R \times R$ are covered. 
\item We also know that any edge in $L \times L$ is a {\em negative} edge to $M$, i.e., $\wt_M(u,v) = -2$ for any edge 
$(u,v) \in L \times L$. Since $\alpha_u \ge -1$ for any $u \in L$, edges in $L \times L$ are covered. 
\item We also know that all blocking edges to $M$ are in $R \times R$ and so $\wt_M(u,v) \le 0$ for all 
$(u,v) \in L \times R$. Since $\alpha_u \ge -1$ and $\alpha_v = 1$,  all edges in $L \times R$ are covered.
\end{itemize}

We will now show the converse. Let $M$ be a matching with a witness $(\vec{\alpha},\vec{0})$ as given in the statement of the theorem.
To begin with, $M$ is popular since the objective function of \ref{LP3} evaluates to 0. We will now show that $M$ is strongly dominant.

For that, we will obtain a partition $(L, R)$ of $V$ as follows:
$R = \{u: \alpha_u = 1\}$ and $L = \{u: \alpha_u \ \text{is\ either}\ 0 \ \text{or -1}\}$.
Since $\tilde{M}$ is an optimal solution of the max-cost perfect matching problem in $\tilde{G}$,
complementary slackness conditions imply that if $(u,v) \in M$ 
then $\alpha_u + \alpha_v = \wt_M(u,v) = 0$. Since $u,v$ are matched, $\alpha_u,\alpha_v \in \{\pm 1\}$; so one of $u,v$ is in $L$ and the
other is in $R$. Thus $M \subseteq L \times R$.

We have $\wt_M(u,v) \le \alpha_u + \alpha_v$ for every $(u,v) \in E$. There cannot be any edge between 2 vertices left unmatched in
$M$ as that would contradict $M$'s popularity. So $\wt_M(u,v) \le -1$ for all 
$(u,v) \in E \cap (L \times L)$. Since $\wt_M(u,v) \in \{0,\pm 2\}$, $\wt_M(u,v) = -2$ for all edges 
$(u,v)$ in $L \times L$. In other words, every edge in $L \times L$ is negative to $M$. 

Moreover, any blocking edge can be present only
in $R \times R$ since $\wt_M(u,v) \le 1$ for all edges $(u,v) \in L \times R$. Finally, since $\alpha_u = \wt_M(u,u) = 0$ for all $u$ 
unmatched in $M$ (by complementary slackness conditions on \ref{LP3})
and every vertex $u \in R$ satisfies $\alpha_u = 1$, it means that all vertices in $R$ are matched in $M$. \qed
\end{proof}

\section{Strongly dominant matchings}
\label{section4}
In this section we show an algorithm to determine if a roommates instance $G = (V,E)$ admits a strongly dominant matching or not.
We will build a new roommates instance $G' = (V,E')$ and show that any stable matching in $G'$ can be projected to a strongly dominant
matching in $G$ and conversely, any strongly dominant matching in $G=(V,E)$ can be mapped to a stable matching in $G'$.

The vertex set of $G'$ is the same as that of $G$. Though there is only one copy of each vertex $u$ in $G'$, for every $(u,v) \in E$,
there will be 2 parallel edges in $G'$ between $u$ and $v$;
we will call one of these edges $(u^+,v^-)$ and the other $(u^-,v^+)$.
Thus $E' = \{(u^+,v^-), (u^-,v^+): \ (u,v) \in E\}$.
A vertex $v$ appears in 2 forms, as $v^+$ and $v^-$, to each of its neighbors.

We will now define preference lists in $G'$.
For any $u \in V$, if $u$'s preference list in $G$ is $v_1 \succ v_2 \succ \cdots \succ v_k$ then $u$'s
preference list in $G'$ is $v^-_1 \succ v^-_2 \succ \cdots \succ v^-_k \succ v^+_1 \succ v^+_2 \succ \cdots \succ v^+_k$.
Thus $u$ prefers any neighbor in ``$-$ form'' to any neighbor in ``$+$ form''.

As an example, consider the roommates instance on 4 vertices $a, b, c, d$ described in Section~\ref{intro}, where $d$ was the least
preferred vertex of $a, b, c$. Preference lists in the instance $G'$ are as follows:
\begin{minipage}[c]{0.45\textwidth}
			
			\centering
			\begin{align*}
				\ \ \ \ \ & a\colon  \, b^- \succ c^- \succ d^- \succ  b^+ \succ c^+ \succ d^+ \qquad\qquad\qquad && b\colon  \, c^- \succ a^- \succ d^- \succ  c^+ \succ a^+ \succ d^+\\
				\ \ \ \ \ & c\colon  \, a^- \succ b^- \succ d^- \succ  a^+ \succ b^+ \succ d^+ \qquad\qquad\qquad && d\colon  \, a^- \succ b^- \succ c^- \succ  a^+ \succ b^+ \succ c^+
			\end{align*}
\end{minipage}

\begin{itemize}
\item A matching $M'$ in $G'$ is a subset of $E'$ such that for each $u \in V$, $M'$ contains at most one edge incident to $u$, i.e.,
     at most one edge in 
     $\{(u^+,v^-),(u^-,v^+): v \in \Nbr(u)\}$ is in $M'$, where $\Nbr(u)$ is the set of $u$'s neighbors in $G$.
\item  For any matching $M'$ in $G'$, define the {\em projection} $M$ of $M'$ as follows:

\[M = \{(u,v): (u^+,v^-)\ \mathrm{or}\ (u^-,v^+)\ \mathrm{is\ in}\ M'\}.\]

It is easy to see that $M$ is a matching in $G$.
\end{itemize}

\begin{definition}
  A matching $M'$ is stable in $G'$ if for every edge $(u^+,v^-) \in E'\setminus M'$: either (i)~$u$ is matched in $M'$
  to a neighbor ranked better than $v^-$ or (ii)~$v$ is matched in $M'$ to a neighbor ranked better than $u^+$.
\end{definition}

We now present our algorithm to find a strongly dominant matching in $G = (V,E)$. 

\begin{enumerate}
\item Build the corresponding roommates instance $G' = (V,E')$.
\item Run Irving's stable matching algorithm in $G'$.
\item If a stable matching $M'$ is found in $G'$ then
  return the projection $M$ of $M'$.
  
  Else return ``$G$ has no strongly dominant matching''.
\end{enumerate}

In Irving's algorithm in $G'$, it is possible that a vertex $u$ proposes to some neighbor $v_t$ {\em twice}:
first to $v_t^-$ and later to $v_t^+$. During $u$'s first proposal, $v_t$ receives a proposal from $u^+$
and during $u$'s second proposal, $v_t$ receives a proposal from $u^-$.
We describe Irving's stable matching algorithm~\cite{Irv85} along with an example in the Appendix.

We will now prove the correctness of the above algorithm. 
We will first show that if our algorithm returns a matching $M$, then $M$ is a strongly
dominant matching in $G$.

\begin{lemma}
  \label{lemma2:main}
If $M'$ is a stable matching in $G'$ then the projection of $M'$ is a strongly dominant matching in $G$.
\end{lemma}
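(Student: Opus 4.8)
The plan is to read a partition $(L,R)$ off the stable matching $M'$, build from it a witness of exactly the form demanded by Theorem~\ref{lem:strongly-dominant}, and then invoke the ``if'' direction of that theorem to conclude that the projection $M$ is strongly dominant. The partition is: put $u \in R$ if $u$ is matched in $M'$ through its $+$-copy (so its partner is some $w^-$, a ``$-$ form''), and put $u \in L$ otherwise, i.e.\ $u$ is unmatched or is matched through its $-$-copy to some $w^+$. Set $\alpha_u = 1$ for $u \in R$, $\alpha_u = -1$ for matched $u \in L$, $\alpha_u = 0$ for unmatched $u$, and $z_S = 0$ for all $S \in \Omega$. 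Since each $(u,v) \in M$ arises from exactly one of $(u^+,v^-),(u^-,v^+) \in M'$, its two endpoints split between $L$ and $R$; hence $M \subseteq L\times R$, $\sum_u \alpha_u = 0$, and the self-loop constraints $\alpha_u \ge \wt_M(u,u)$ hold trivially ($0=0$ for unmatched, $\pm1 \ge -1$ for matched).

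The substance is verifying the edge constraints $\alpha_u + \alpha_v \ge \wt_M(u,v)$ of \ref{LP3} for every $(u,v) \in E\setminus M$ (on $M$ they hold with equality). For such an edge set $p_u = +1$ if $u$ is unmatched or prefers $v$ to $M(u)$, and $p_u = -1$ otherwise, so $\wt_M(u,v) = p_u + p_v$; define $p_v$ symmetrically. First I would convert $M'$-stability into two facts about $G$-votes by examining the two parallel edges and using that every vertex prefers all ``$-$ forms'' to all ``$+$ forms''. The key computations are: (a) a vertex $x \in L$ with $p_x = +1$ strictly prefers the $+$-copy of its neighbor to its own $M'$-partner (it is unmatched, or matched to a $+$-form it likes less); and (b) a vertex $y$ with $y \in L$ \emph{or} $p_y = +1$ strictly prefers the $-$-copy of its neighbor to its $M'$-partner (a ``$-$ form'' beats the $+$-form or empty partner of an $L$-vertex, and for $y \in R$ with $p_y = +1$ it beats the matched $-$-form). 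Applying (a) to $u$ and (b) to $v$ shows that the edge $(u^-,v^+)$ blocks $M'$ whenever ``$u \in L$ and $p_u = +1$ and $(v \in L$ or $p_v = +1)$''; stability thus forbids this configuration, and the symmetric analysis of $(u^+,v^-)$ forbids its mirror image.

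With these two forbidden configurations in hand the edge constraint follows by a short case check on $\wt_M(u,v) \in \{-2,0,2\}$. If $\wt_M = -2$ there is nothing to prove, since $\alpha \ge -1$ everywhere. If $\wt_M = 2$ then $p_u = p_v = +1$, and the forbidden configuration rules out either endpoint lying in $L$, forcing $u,v \in R$ and $\alpha_u + \alpha_v = 2$. If $\wt_M = 0$, say $p_u = +1$ and $p_v = -1$, a violation would require $\alpha_u + \alpha_v \le -1$, which (as no $+1$ appears) forces both endpoints into $L$; but $u \in L$ with $p_u = +1$ together with $v \in L$ is precisely a forbidden configuration. Hence $(\vec\alpha,\vec 0)$ is feasible for \ref{LP3} and meets all the structural requirements of Theorem~\ref{lem:strongly-dominant}, so $M$ is strongly dominant.

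I expect the main obstacle to be pinning down the orientation of the partition and carefully book-keeping the cases of ``which copy each endpoint is matched through'' when proving facts (a) and (b); the ``$-$ forms beat $+$ forms'' convention is exactly what makes the two implications go through, and the reverse labelling of $L$ and $R$ silently breaks the argument. I would guard against this by checking against the four-vertex example, where $M' = \{(a^+,d^-),(b^-,c^+)\}$ projects to $M_1$ with $R = \{a,c\}$, confirming that $R$ is the set of vertices matched through their $+$-copies.
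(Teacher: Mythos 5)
Your proposal is correct and follows essentially the same route as the paper: you build the identical witness $(\vec{\alpha},\vec{0})$ (with $\alpha_u=1$ exactly when $u$ is matched through its $+$-copy), verify feasibility for \ref{LP3} using the stability of $M'$, and invoke Theorem~\ref{lem:strongly-dominant}. The only difference is organizational --- you package the stability argument as two ``forbidden configurations'' and case on $\wt_M(u,v)\in\{-2,0,2\}$, whereas the paper cases directly on which copies the endpoints are matched through --- but the underlying computation is the same and your case check is complete.
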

\begin{proof}
 Let $M$ be the projection of $M'$. In order to show that $M$ is a strongly dominant matching in $G$,
 we will construct a witness $(\vec{\alpha},\vec{0})$ as given in Theorem~\ref{lem:strongly-dominant}.
That is, we will construct $\vec{\alpha}$ as shown below such that $(\vec{\alpha},\vec{0})$ is a feasible solution to \ref{LP3}
and $\sum_{u\in V} \alpha_u = 0$.

 Set $\alpha_u = 0$ for all vertices $u$ left unmatched in $M$. For each vertex $u$ matched in $M$:
  \begin{itemize}
  \item  if $(u^+,\ast) \in M'$ then set $\alpha_u = 1$; else set $\alpha_u = -1$.
  \end{itemize}

  Note that $\sum_{u\in V} \alpha_u = 0$ since for each edge $(a,b) \in M$, we have $\alpha_a + \alpha_b = 0$ and for each
  vertex $u$ that is unmatched in $M$, we have $\alpha_u = 0$ . We also have $\alpha_u \ge \wt_M(u,u)$ for all $u \in V$ since 
  (i)~$\alpha_u = 0 = \wt_M(u,u)$ for all $u$ left unmatched in $M$ and (ii)~$\alpha_u \ge -1 = \wt_M(u,u)$ for all $u$ matched in $M$.
  
  We will now  show that for every $(a,b) \in E$, 
  $\alpha_a + \alpha_b \ge \wt_M(a,b)$. Recall that $\wt_M(a,b)$ is the sum of votes of $a$ and $b$ for each other over their respective
assignments in $M$.

  \begin{enumerate}
    \item Suppose $(a^+,\ast) \in M'$. So $\alpha_a = 1$. We will consider 3 subcases here. 
      \begin{itemize}
      \item The first subcase is that
      $(b^+,\ast) \in M'$. So $\alpha_b = 1$. Since $\wt_M(a,b) \le~2$, it follows that
      $\alpha_a + \alpha_b = 2 \ge \wt_M(a,b)$.
      \item The second subcase is that $(b^-,\ast) \in M'$. So $\alpha_b = -1$. If $(a^+,b^-) \in M'$ then  
           $\wt_M(a,b) = 0 = \alpha_a + \alpha_b$. So assume $(a^+,c^-)$ and $(b^-,d^+)$ belong to $M'$.       
       Since $M'$ is stable, the edge $(a^+,b^-)$ does not block $M'$. Thus either
       (i)~$a$ prefers $c^-$ to $b^-$ or (ii)~$b$ prefers $d^+$ to $a^+$. Hence $\wt_M(a,b) \le 0$ and so
       $\alpha_a + \alpha_b = 0 \ge \wt_M(a,b)$.
       \item The third subcase is that $b$ is unmatched in $M$. So $\alpha_b = 0$. 
       Since $M'$ is stable, the edge $(a^+,b^-)$ does not
       block $M'$. Thus $a$ prefers its partner $c^-$ in $M'$ to $b^-$ and so $\wt_M(a,b) = 0 < \alpha_a + \alpha_b$.
     \end{itemize}
     \item Suppose  $(a^-,\ast) \in M$. There are 3 subcases here as before. The case where $(b^+,\ast) \in M$ is
       totally analogous to the case where $(a^+,\ast)$ and $(b^-,\ast)$ are in $M$. So we will consider the remaining 2 subcases
       here.
       \begin{itemize}
       \item The first subcase is that $(b^-,\ast) \in M'$. So $\alpha_b = -1$. Let $(a^-,c^+)$ and $(b^-,d^+)$ belong to $M'$.
       Since $M'$ is stable, the edge $(a^+,b^-)$ does not block $M'$. So $b$ prefers $d^+$ to $a^+$.
       Similarly, the edge $(a^-,b^+)$ does not block $M'$. Hence $a$ prefers $c^+$ to $b^+$.
       Thus {\em both} $a$ and $b$ prefer their respective partners in $M$ to each other, i.e., $\wt_M(a,b) = -2$. So we have
       $\alpha_a + \alpha_b = -2 = \wt_M(a,b)$.

       \item The second subcase is that $b$ is unmatched in $M$. Then the edge $(a^+,b^-)$ {\em blocks} $M'$
       since $a$ prefers $b^-$ to $c^+$ (for any neighbor $c$) and $b$ prefers to be matched to $a^+$ than be left unmatched.
       Since $M'$ is stable and has no blocking edge, this means that this subcase does not arise.
       \end{itemize}   
  \item Suppose $a$ is unmatched in $M$. Then $(b^+,\ast) \in M'$ (otherwise $(a^-,b^+)$ blocks $M'$); moreover,
       $b$ prefers its partner $d^-$ in $M'$ to $a^-$. So we have $\wt_M(a,b) = 0 < \alpha_a + \alpha_b$.
     
  \end{enumerate}
  Thus we always have $\alpha_a + \alpha_b \ge \wt_M(a,b)$ and hence $(\vec{\alpha},\vec{0})$ is a valid witness of $M$. 
  Since $\vec{\alpha}$ satisfies the conditions in Theorem~\ref{lem:strongly-dominant}, $M$ is a strongly dominant matching in $G$.
  \end{proof}  

We will now show that if $G'$ has no stable matching, then $G$ has no strongly dominant matching.
\begin{lemma}
  \label{lemma1:main}
  If $G$ admits a strongly dominant matching then $G'$ admits a stable matching.
\end{lemma}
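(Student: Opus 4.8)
The claim is the converse direction needed for correctness: a strongly dominant matching $M$ in $G$ yields a stable matching in $G'$. The natural plan is to reverse the projection of Lemma~\ref{lemma2:main}. Given a strongly dominant $M$ with its witness $(\vec{\alpha},\vec{0})$ from Theorem~\ref{lem:strongly-dominant} (where $\alpha_u=\pm 1$ on matched vertices, $0$ on unmatched vertices, equivalently the partition $(L,R)$ of Definition~\ref{def:strong-dominant}), I would \emph{lift} $M$ to a candidate matching $M'$ in $G'$ and show it is stable.

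**Construction of the lift.** The orientation in $G'$ is forced by the $\pm 1$ labels. For each edge $(a,b)\in M$ with, say, $a\in R$ (so $\alpha_a=1$) and $b\in L\cap(\text{matched})$ (so $\alpha_b=-1$), I would put the edge that realizes $a$ in ``$+$ form'' and $b$ in ``$-$ form'': that is, include $(a^+,b^-)$ in $M'$. This mirrors exactly the $\alpha$-assignment rule of Lemma~\ref{lemma2:main}, where $(u^+,\ast)\in M'$ corresponds to $\alpha_u=1$ and $(u^-,\ast)\in M'$ to $\alpha_u=-1$. Thus $R$-vertices get their $+$ copy matched, $L$-vertices their $-$ copy. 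Since $M\subseteq L\times R$ matches every vertex in $R$, this is well-defined and $M'$ projects back to $M$.

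**Stability check.** The core of the proof is verifying no edge of $E'\setminus M'$ blocks $M'$. I would take an arbitrary edge $(a^+,b^-)$ (the only orientation whose blocking condition matters per the stability definition) and case on the partition membership and matched status of $a$ and $b$, exactly dualizing the case analysis of Lemma~\ref{lemma2:main}. The key leverage is the witness inequality $\wt_M(a,b)\le \alpha_a+\alpha_b$, together with the structural facts from Definition~\ref{def:strong-dominant}: every blocking edge of $M$ lies in $R\times R$, every $L\times L$ edge is negative, and $R$ is fully matched. For instance, if $a\in L$ and $b\in R$, then $(a^+,b^-)$ asks $a$ to improve to $b$ and $b$ to improve to $a$; but $(a,b)\in L\times R$ is not a blocking edge of $M$, so at least one of $a,b$ prefers its $M$-partner, and since $b\in R$ is matched to some $d$ with $b$'s $M'$-partner being $d^-$ ranked (in $G'$) above any ``$+$''-copy, the appropriate party prefers its current partner---so the edge does not block. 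The unmatched cases use that $R$ is saturated, so an unmatched vertex lies in $L$, and the $L\times L$-negativity rules out the dangerous configurations.

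**Main obstacle.** The delicate point is translating the two-sided ``prefers partner'' conclusions from $G$ into the single-sided stability condition of $G'$, where the $-$/$+$ ranking interacts with preference order. Concretely, when $(a,b)$ is negative to $M$ (both prefer their $M$-partners), I must confirm that in $G'$ neither endpoint of $(a^+,b^-)$ can improve, and this requires checking that $a$'s $M'$-partner in ``some form'' is ranked above $b^-$, using that $a$ prefers its real $G$-partner to $b$ and hence prefers that partner's $-$ copy to $b^-$. Keeping the bookkeeping of $+$/$-$ copies consistent with the preference-list ordering $v_1^-\succ\cdots\succ v_k^-\succ v_1^+\succ\cdots\succ v_k^+$ is where errors are easiest to make, so I would organize the argument as a clean mirror of the subcases already handled in Lemma~\ref{lemma2:main} rather than reprove each from scratch.
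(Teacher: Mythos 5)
Your proposal is correct and follows essentially the same route as the paper: lift $M$ to $M'$ in $G'$ by matching the $+$ copy of each $R$-vertex (where $\alpha=1$) to the $-$ copy of its $L$-partner, then rule out blocking edges $(a^+,b^-)$ by a case analysis on the signs of $\alpha_a,\alpha_b$ driven by the witness inequality $\wt_M(a,b)\le\alpha_a+\alpha_b$ and the fact that any $-$ copy outranks every $+$ copy. This is exactly the paper's argument, so no further comparison is needed.
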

\begin{proof}
  Let $M$ be a strongly dominant matching in $G = (V,E)$. Let $(\vec{\alpha},\vec{0})$ be a witness of $M$ as given in 
  Theorem~\ref{lem:strongly-dominant}.
  That is, $\alpha_u = 0$ for $u$ unmatched in $M$ and $\alpha_u = \pm 1$ for $u$ matched in $M$.
  Moreover, for each $(u,v) \in M$, $\alpha_u + \alpha_v = \wt_M(u,v) = 0$ by complementary slackness on \ref{LP3}; 
  so $(\alpha_u,\alpha_v)$ is either $(1,-1)$ or $(-1,1)$.

  We will construct a stable matching $M'$ in $G'$ as follows. For each $(u,v) \in M$: 
  \begin{itemize}
  \item if $(\alpha_u,\alpha_v) = (1,-1)$ then add $(u^+,v^-)$ to $M'$;
        else add $(u^-,v^+)$ to $M'$.
  \end{itemize}

We will show that no edge in $E'\setminus M'$ blocks $M'$.
Let $(a^+,b^-) \notin M'$. We consider the following cases here:

\smallskip
  
\noindent{\bf Case 1.} Suppose $\alpha_b = 1$. Then $(b^+,d^-) \in M'$ where $d = M(b)$. Since $b$ prefers $d^-$ to $a^+$, 
                         $(a^+,b^-)$ is not a blocking edge to $M'$.
\smallskip
  
\noindent{\bf Case 2.} Suppose $\alpha_b = -1$. Then  $(b^-,d^+) \in M'$ where $d = M(b)$.
  We have 2 sub-cases here: (i)~$\alpha_a = 1$ and (ii)~$\alpha_a = -1$.
  Note that $\alpha_a \ne 0$ as the edge $(a,b)$ would not be covered by $\alpha_a + \alpha_b$ then. This is because if $\alpha_a = 0$ then
  $a$ is unmatched in $M$ and $\wt_M(a,b) = 0$ while $\alpha_a + \alpha_b = -1$.
  \begin{itemize}
  \item In sub-case~(i), some edge $(a^+,c^-)$ belongs to $M'$. 
  We know that $\wt_M(a,b) \le \alpha_a + \alpha_b = 0$, so either
  (1)~$a$ prefers $M(a) = c$ to $b$ or (2)~$b$ prefers $M(b) = d$ to $a$. Hence either (1)~$a$ prefers $c^-$ to $b^-$ or
  (2)~$b$ prefers $d^+$ to $a^+$. Thus $(a^+,b^-)$ is not a blocking edge to $M'$.
  \item In sub-case~(ii), some edge $(a^-,c^+)$ belongs to $M'$. We know that $\wt_M(a,b) \le \alpha_a + \alpha_b = -2$, so
  $a$ prefers $M(a) = c$ to $b$ {\em and} $b$ prefers $M(b) = d$ to $a$. Thus $b$ prefers $d^+$ to $a^+$, hence  
  $(a^+,b^-)$ is not a blocking edge to $M'$.
  \end{itemize}
  
\noindent{\bf Case 3.} Suppose $\alpha_b = 0$. Thus $b$ was unmatched in $M$. Each of $b$'s neighbors has to be matched in $M$ to a
  neighbor that it prefers to $b$, otherwise $M$ would be unpopular. We have $\alpha_a + \alpha_b \ge \wt_M(a,b) = 0$, hence it follows
  that $\alpha_a = 1$. Thus $(a^+,c^-) \in M'$ where $c$ is a neighbor that $a$ prefers to $b$. So $(a^+,b^-)$ is not a blocking edge
  to $M'$. \qed
\end{proof}

Lemmas~\ref{lemma2:main} and \ref{lemma1:main} show that a strongly dominant matching is present in $G$ if and only if a stable
matching is present in $G'$. This finishes the proof of correctness of our algorithm.
Since Irving's stable matching algorithm in $G'$ can be implemented to run in linear time~\cite{Irv85}, we can conclude
Theorem~\ref{thm:strongly-dom} stated in Section~\ref{intro}.

\section{The max-size popular matching problem in a roommates instance}
\label{section3}
In this section we prove the $\mathsf{NP}$-hardness of the max-size popular matching problem in a roommates instance.
We will show a reduction from the vertex cover problem.

Let $H = (V_H,E_H)$ be an instance of the vertex cover problem and let $V_H = \{1,\ldots,n_H\}$, i.e., $V_H = [n_H]$. We will build a
roommates instance $G$ as follows: (see Fig.~\ref{newfig1:example})
\begin{itemize}
\item corresponding to every vertex $i \in V_H$, there will be  4 vertices $a_i,b_i,c_i,d_i$ in $G$  and
\item corresponding to every edge $e = (i,j) \in E_H$, there will be  2 vertices  $u^e_i$ and $u^e_j$ in $G$.
\end{itemize}

\begin{figure}[h]
\centerline{\resizebox{0.56\textwidth}{!}{\input{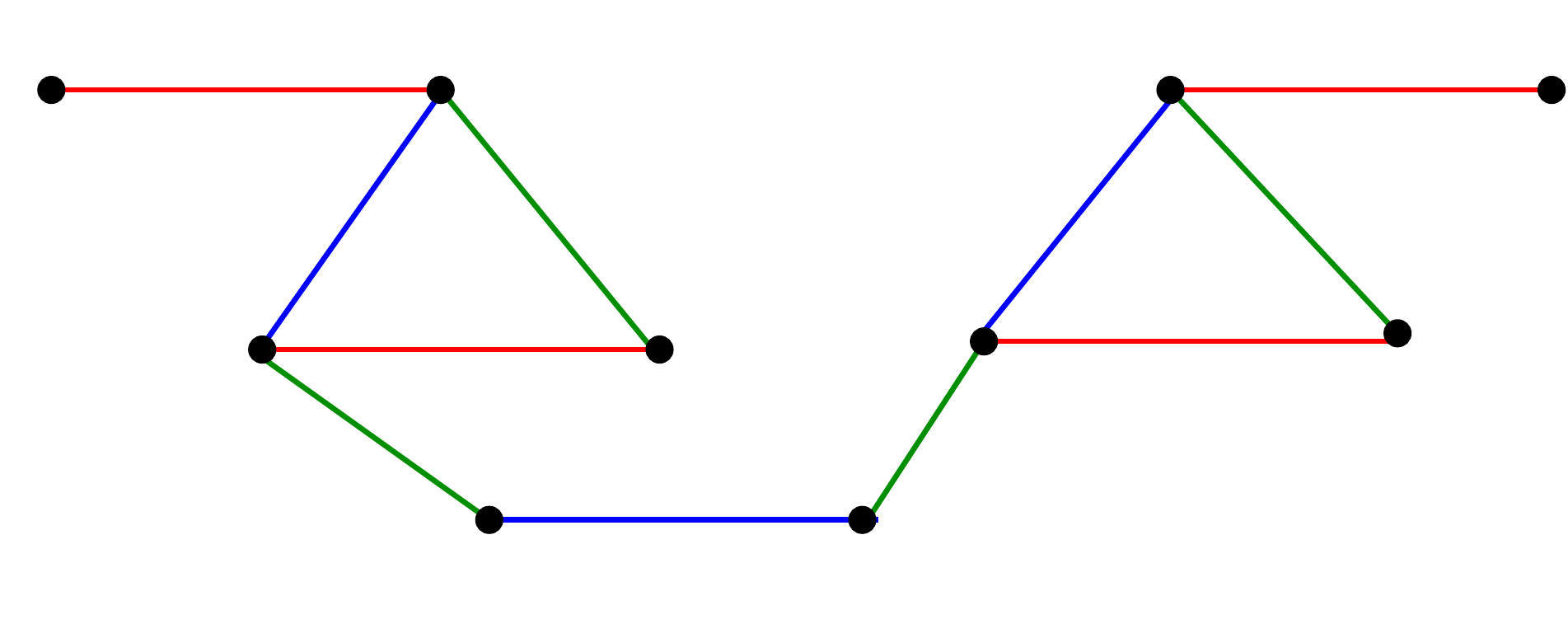_t}}}
\caption{The graph $G$ restricted to the adjacent vertices $i$ and $j$ in $H$: the vertices $a_t,b_t,c_t,d_t$ in $G$ correspond to vertex $t \in \{i,j\}$ in $H$ and the vertices $u^e_i$ and $u^e_j$ in $G$ correspond to the edge $e = (i,j)$ in $H$. Vertex preferences in $G$ are indicated on the edges.}
\label{newfig1:example}
\end{figure}

The preferences of the vertices $a_i,b_i,c_i,d_i$ are as follows, where $e_1,\ldots,e_k$ are all the edges in $H$ with vertex $i$ as an 
endpoint.
\[a_i\colon \,b_i \succ c_i \succ d_i \ \ \ \ \ \ \ \ \ b_i\colon\,a_i \succ u^{e_1}_i\cdots\succ u^{e_k}_i \succ c_i\ \ \ \ \ \ \ \ \ c_i\colon \, a_i \succ b_i\ \ \ \ \ \ \ \ \ d_i\colon \, a_i.\]
The order among the vertices $u^{e_1}_i,\ldots,u^{e_k}_i$ in the preference list of $b_i$ is arbitrary.
The preference list of vertex $u^e_i$ is $u^e_j \succ b_i$, where $e = (i,j)$ (see Fig.~\ref{newfig1:example}). 

Observe that $G$ admits a stable matching $S = \{(a_i,b_i): 1 \le i \le n_H\} \cup \{(u^e_i, u^e_j): e=(i,j) \in E_H\}$.
This is the set of blue edges in Fig.~\ref{newfig1:example}.

\begin{lemma}
\label{hard:lemma1}
Let $M$ be a popular matching in $G$.
\begin{itemize}
\item  For any $i \in [n_H]$, either $(a_i,b_i) \in M$ or $\{(a_i,d_i),(b_i,c_i)\} \subseteq M$.
\item The set $C = \{i \in [n_H]: (a_i,b_i) \in M\}$ is a vertex cover of $H$.
\end{itemize}
\end{lemma}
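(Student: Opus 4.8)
The plan is to analyze each gadget $\{a_i,b_i,c_i,d_i\}$ locally and then use the edge vertices $u^e_i,u^e_j$ to break ties between the two admissible gadget states. Whenever I claim $M$ is not popular I will exhibit an explicit matching $M'$ obtained from $M$ by a small local exchange on a constant number of vertices and verify $\Delta(M',M)>0$ by tallying the votes of the affected vertices; every unaffected vertex contributes $0$, and one checks trivially that each exchange yields a valid matching. First I would record the repeatedly used fact that $a_i$ is matched in any popular $M$: since $a_i$ is the only neighbour of $d_i$, if $a_i$ were unmatched then $d_i$ is unmatched too, and $M'=M+(a_i,d_i)$ gives $\Delta(M',M)=2$, a contradiction. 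Hence $M(a_i)\in\{b_i,c_i,d_i\}$.

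\textbf{Part 1.} I would split on $M(a_i)$. If $M(a_i)=b_i$, then $c_i$ (whose only neighbours are $a_i,b_i$) and $d_i$ are forced to be unmatched, which is the first alternative. If $M(a_i)=d_i$, then $c_i$ is matched to $b_i$ or is unmatched; in the latter case replacing $(a_i,d_i)$ by $(a_i,c_i)$ upgrades $a_i$ (as $c_i\succ d_i$) and matches $c_i$, leaving only $d_i$ unmatched, so $\Delta(M',M)=1$, impossible; hence $(b_i,c_i)\in M$, the second alternative. The only delicate case is $M(a_i)=c_i$, which must be excluded. If $b_i$ is unmatched, replacing $(a_i,c_i)$ by $(a_i,b_i)$ gives $\Delta=1$. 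Otherwise $b_i$ is matched to some edge-vertex $u^e_i$; here I first argue that its partner $u^e_j$ satisfies $(b_j,u^e_j)\in M$ (if $u^e_j$ were unmatched the exchange $(b_i,u^e_i)\mapsto(u^e_i,u^e_j)$ gives $\Delta=1$, and $u^e_j$ cannot be matched to the occupied $u^e_i$). Then the six-vertex exchange that deletes $(a_i,c_i),(b_i,u^e_i),(b_j,u^e_j)$ and inserts $(a_i,b_i),(u^e_i,u^e_j)$ makes $a_i,b_i,u^e_i,u^e_j$ prefer $M'$ against only $c_i,b_j$ preferring $M$, so $\Delta(M',M)=2$, a contradiction.

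\textbf{Part 2.} Suppose $C$ is not a vertex cover, so some edge $e=(i,j)\in E_H$ has $i,j\notin C$. By Part~1 both gadgets are in the second state, so $(b_i,c_i),(b_j,c_j)\in M$; consequently the only free neighbour of $u^e_i$ is $u^e_j$ and vice versa, and as they cannot both be unmatched (else $(u^e_i,u^e_j)$ augments $M$) we get $(u^e_i,u^e_j)\in M$. Now the symmetric exchange that deletes $(a_i,d_i),(b_i,c_i),(a_j,d_j),(b_j,c_j),(u^e_i,u^e_j)$ and inserts $(a_i,c_i),(b_i,u^e_i),(a_j,c_j),(b_j,u^e_j)$ (leaving $d_i,d_j$ unmatched) makes, on each side, $a_i$ (now $c_i\succ d_i$), $b_i$ (now $u^e_i\succ c_i$), and crucially $c_i$ (now $a_i\succ b_i$) all prefer $M'$, against only $d_i$ and $u^e_i$ preferring $M$ --- that is $3$ versus $2$ per side, so $\Delta(M',M)=2>0$, contradicting popularity. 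Hence every edge is covered and $C$ is a vertex cover.

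\textbf{Main obstacle.} The subtlety is that a single gadget is perfectly \emph{balanced}: states~1 and~2 tie in isolation, and even the naive ``repair one endpoint'' exchange for an uncovered edge only ties (the edge-vertex freed on one side becomes unmatched, cancelling the gain). The strict surplus must therefore be manufactured by rerouting \emph{both} endpoints of the uncovered edge simultaneously, and in particular by upgrading each $c_i$ from $b_i$ to $a_i$ while parking $d_i$. Getting the bookkeeping of these competing votes right --- and, in the hard subcase of Part~1, first pinning down the location of the companion edge-vertex $u^e_j$ --- is where the argument has to be careful.
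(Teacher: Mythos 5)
Your proof is correct and follows essentially the same route as the paper's: the same case analysis on $M(a_i)$ for Part~1 and the identical ten-vertex exchange (6 versus 4 votes) for Part~2. The only differences are that you make explicit a few steps the paper leaves terse (why $a_i$ must be matched, why $(u^e_i,u^e_j)\in M$ when both gadgets are in the second state, and pinning down $M(u^e_j)=b_j$ before the exchange in the $M(a_i)=c_i$ subcase), all of which check out.
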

\begin{proof}
  The vertex $a_i$ is the top choice neighbor of all its neighbors $b_i,c_i,d_i$. Thus $a_i$ has to be matched in the popular matching $M$.
  \begin{enumerate}
  \item If $a_i$ is matched to $b_i$ then $(a_i,b_i) \in M$.
  \item If $(a_i,d_i) \in M$, then $c_i$ also has to be matched --- otherwise we get a more popular matching by replacing the
  edge $(a_i,d_i)$ with $(a_i,c_i)$. Since $c_i$ has degree 2, it has to be the case that $\{(a_i,d_i),(b_i,c_i)\} \subseteq M$.
\item Suppose $(a_i,c_i) \in M$. Since $a_i$ prefers $b_i$ to $c_i$, this means that $b_i$ has to be matched in $M$.
      Other than $a_i$ and $c_i$ (which are matched to each other), $b_i$'s neighbors are $u^e_i$ for all edges $e$ incident to $i$ in $H$.
      So $(b_i,u^e_i) \in M$ for some $e = (i,j) \in E_H$.
    We will now construct a new matching $M'$ as follows:
    \begin{itemize}
    \item replace the edges $(a_i,c_i),(b_i,u^e_i),(u^e_j, M(u^e_j))$ in $M$ with the edges $(a_i,b_i)$ and $(u^e_i,u^e_j)$. So
      $c_i$ and $M(u^e_j)$ are unmatched in $M'$, hence these 2 vertices prefer $M$ to $M'$; however the 4 vertices $a_i,b_i,u^e_i,u^e_j$
      prefer $M'$ to $M$.
      Thus $M'$ is more popular than $M$, a contradiction to the popularity of $M$. Hence $(a_i,c_i) \notin M$.
    \end{itemize}
\end{enumerate}    

\smallskip

  We now show the second part of the lemma.
  Let $(i,j) \in E_H$. We need to show that either $(a_i,b_i) \in M$ or  $(a_j,b_j) \in M$. Suppose not. Then by the first part of 
this lemma,  $(a_i,d_i),(b_i,c_i)$ are in $M$ and similarly, $(a_j,d_j)$, $(b_j,c_j)$ are in $M$. Also  $(u^e_i,u^e_j) \in M$.

  Consider the matching $M'$ obtained by replacing the 5 edges $(a_i,d_i),(b_i,c_i),(a_j,d_j)$, $(b_j,c_j)$, and $(u^e_i,u^e_j)$ in $M$
  with the 4 edges $(a_i,c_i),(b_i,u^e_i),(a_j,c_j)$, and $(b_j,u^e_j)$ (see Fig.~\ref{newfig1:example}). Among the 10 vertices involved here,
  the 6 vertices $a_i,b_i,c_i$ and $a_j,b_j,c_j$ prefer $M'$ to $M$ while the 4 vertices $d_i,u^e_i$ and $d_j,u^e_j$ prefer $M$ to $M'$. Thus
  $M'$ is more popular than $M$, a contradiction to the popularity of $M$.

  Hence for each edge $(i,j) \in E_H$ either $(a_i,b_i) \in M$ or  $(a_j,b_j) \in M$. In other words, the set
  $U = \{i \in [n_H]: (a_i,b_i) \in M\}$ is a vertex cover of $H$. \qed
\end{proof}

\begin{theorem}
  \label{thm:max-size-hard}
  For any $1 \le k \le n_H$, the graph $H=(V_H,E_H)$ admits a vertex cover of size $k$ if and only if $G$ has a popular matching of size
  at least $m_H + 2n_H - k$, where $|E_H| = m_H$.
\end{theorem}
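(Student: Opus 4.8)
The plan is to establish both directions of the biconditional by exploiting the structure forced on popular matchings by Lemma~\ref{hard:lemma1}. The key combinatorial observation is that every popular matching $M$ in $G$ decomposes along the vertex groups: for each $i \in [n_H]$, either $(a_i,b_i) \in M$ (which ``frees'' $b_i$ and leaves $c_i,d_i$ to be handled) or $\{(a_i,d_i),(b_i,c_i)\} \subseteq M$. Define $C = \{i : (a_i,b_i) \in M\}$; by Lemma~\ref{hard:lemma1} this is a vertex cover. I will compute $|M|$ as a function of $|C|$ and the number of edge-gadget pairs $(u^e_i,u^e_j)$ that are matched to each other. The slack comes from the fact that when $i \in C$, the vertices $c_i,d_i$ are left unmatched (a deficit of $2$ relative to the ``always matched'' alternative), whereas when $i \notin C$ all four of $a_i,b_i,c_i,d_i$ are matched. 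So the $a,b,c,d$-gadgets contribute $2(n_H - |C|) \cdot 2 + |C| \cdot 2$ matched vertices; more cleanly, I would count matched vertices and halve, or directly count edges.

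First I would prove the forward direction ($\Leftarrow$ in the size sense): given a vertex cover $C$ of $H$ with $|C| = k$, I construct an explicit popular matching of size $m_H + 2n_H - k$. The construction is: for $i \in C$ put $(a_i,b_i)$ in $M$ and leave $c_i,d_i$ unmatched; for $i \notin C$ put $(a_i,d_i)$ and $(b_i,c_i)$ in $M$; and for every edge $e = (i,j)$ put $(u^e_i,u^e_j)$ in $M$. This uses all $m_H$ edge-gadget edges, one edge per vertex in $C$, and two edges per vertex outside $C$, giving $|M| = m_H + k + 2(n_H - k) = m_H + 2n_H - k$. The substantive part here is verifying this matching is \emph{popular}, which I would do by exhibiting a witness $(\vec{\alpha},\vec{z})$ feasible for \ref{LP3} with objective value $0$, using the machinery of Section~\ref{prelims}; since $C$ is a cover, each $b_i u^e_i$-type edge is ``safe'' (for every edge $e=(i,j)$ at least one endpoint has its $b$-vertex occupied by $a$), so no augmenting/blocking configuration analogous to the one in Lemma~\ref{hard:lemma1}(3) arises.

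For the converse ($\Rightarrow$), I would start from any popular matching $M$ of size at least $m_H + 2n_H - k$. By Lemma~\ref{hard:lemma1}, $C = \{i : (a_i,b_i)\in M\}$ is a vertex cover, so it suffices to bound $|C| \le k$. Using the first part of Lemma~\ref{hard:lemma1}, the $\{a_i,b_i,c_i,d_i\}$ block contributes exactly one edge when $i \in C$ and exactly two edges when $i \notin C$. For the edge gadgets, each $u^e_i$ has only the two neighbors $u^e_j$ and $b_i$, so each edge gadget contributes at most one edge to $M$; hence the total satisfies $|M| \le |C| + 2(n_H - |C|) + m_H = m_H + 2n_H - |C|$. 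Combining with $|M| \ge m_H + 2n_H - k$ forces $|C| \le k$, and $C$ is the desired cover.

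I expect the main obstacle to be the popularity verification in the forward direction: constructing the dual witness and checking feasibility of all edge constraints in \ref{LP3}, particularly for the edges of the form $(b_i, u^e_i)$ and $(u^e_i, u^e_j)$, since these interact across gadgets and are exactly where the vertex-cover condition on $C$ is used. The bookkeeping for the $\alpha$-values (likely assigning $\pm 1$ to the matched $a,b,c,d$-vertices and suitable values to the $u^e$-vertices, with all $z_S = 0$) must be done so that every potential blocking edge is covered; in particular I must ensure that no edge $(b_i,u^e_i)$ with $i \notin C$ creates an uncovered constraint, which is precisely prevented by $C$ covering $e$. The edge-counting inequalities in the converse are routine once Lemma~\ref{hard:lemma1} is in hand.
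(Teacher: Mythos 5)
Your overall skeleton matches the paper's: the same construction in the forward direction, the same size count $|M| = m_H + k + 2(n_H-k)$, and the same converse via Lemma~\ref{hard:lemma1} plus edge counting (your converse is fine, and your ``at most one edge per edge-gadget'' bound is a perfectly adequate substitute for the paper's observation that every $(u^e_i,u^e_j)$ must lie in any popular matching). The gap is exactly where you predicted trouble, but it is not where you think: the witness you propose, with \emph{all} $z_S = 0$, provably does not exist for the constructed matching whenever $C \ne V_H$. Consider a gadget with $i \notin C$, so $(a_i,d_i),(b_i,c_i) \in M$. Both $(a_i,b_i)$ and $(a_i,c_i)$ are blocking edges ($\wt_M = 2$ for each), so any feasible $(\vec{\alpha},\vec{0})$ must satisfy $\alpha_{a_i}+\alpha_{b_i} \ge 2$ and $\alpha_{a_i}+\alpha_{c_i}\ge 2$. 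For the dual objective to be $0$ (which is required to certify popularity, since the primal optimum is at least $\wt_M(\tilde M)=0$), every matched edge must be tight, so $\alpha_{b_i}+\alpha_{c_i} = \wt_M(b_i,c_i) = 0$ and $\alpha_{a_i}+\alpha_{d_i}=0$. Adding the two blocking constraints gives $2\alpha_{a_i} + (\alpha_{b_i}+\alpha_{c_i}) \ge 4$, hence $\alpha_{a_i}\ge 2$ and $\alpha_{d_i} \le -2$, contradicting the self-loop constraint $\alpha_{d_i} \ge \wt_M(d_i,d_i) = -1$. No assignment of $\alpha$-values, half-integral or otherwise, escapes this.

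The missing idea is that this instance genuinely needs the odd-set dual variables of \ref{LP3}: the paper sets $z_{S_i} = 2$ on the triangle $S_i = \{a_i,b_i,c_i\}$ for each $i \notin C$, pairs it with $\alpha_{a_i}=1$, $\alpha_{b_i}=\alpha_{c_i}=\alpha_{d_i}=-1$, and the term $\lfloor |S_i|/2\rfloor \cdot z_{S_i} = 2$ exactly cancels $\alpha_{b_i}+\alpha_{c_i} = -2$ in the objective while lifting the constraints for the two blocking edges inside $S_i$. (For $i \in C$ the paper simply sets all four $\alpha$-values in the gadget to $0$, not $\pm 1$ as you suggest; with $c_i,d_i$ unmatched this is forced up to sign and is what makes the cross edges $(b_i,u^e_i)$ covered.) This is not incidental bookkeeping: the impossibility of a $\vec{z}=\vec{0}$ witness here is precisely the non-bipartite phenomenon the reduction exploits, and it is why the same gadget cannot be certified the way stable or strongly dominant matchings are. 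Your proof goes through once you replace ``all $z_S=0$'' with the paper's choice of $\vec{z}$ and rework the $\alpha$-assignment accordingly.
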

\begin{proof}
  Suppose $H=(V_H,E_H)$ admits a vertex cover $U$ of size $k$. We will now build a popular matching $M$ in $G$ of size $m_H + 2n_H - k$.
  \begin{itemize}
    \item  Add all edges $(u^e_i,u^e_j)$ in $G$ to $M$.
    \item For every $i \in U$, add the edge $(a_i,b_i)$ to $M$. 
    \item For every $i \notin U$, add the edges $(a_i,d_i)$, $(b_i,c_i)$ to $M$. 
  \end{itemize}

  The size of $M$ is $m_H + |U| + 2(n_H - |U|) = m_H + 2n_H - k$.
  We will prove $M$ is popular by showing a witness $(\vec{\alpha},\vec{z})$ for it. To begin with, initialize $z_S = 0$ for all 
$S \in \Omega$.
  \begin{itemize}
    \item For every $i \in U$: set $\alpha_{a_i} = \alpha_{b_i} = \alpha_{c_i} = \alpha_{d_i} = 0$.
    \item For every $i \notin U$: set $\alpha_{a_i} = 1$ and $\alpha_{b_i} = \alpha_{c_i} = \alpha_{d_i} = -1$;
                                  also set $z_{S_i} = 2$ where $S_i = \{a_i,b_i,c_i\}$.
    \item For every edge $e = (i,j) \in E_H$: if $i \in U$ then set $\alpha_{u^e_i} = -1$ and $\alpha_{u^e_j} = 1$; else set $\alpha_{u^e_i} = 1$ and $\alpha_{u^e_j} = -1$.
     
   \end{itemize}    
  It is easy to check that the above setting of $(\vec{\alpha},\vec{z})$ covers all edges of $G$.
  In particular, for $i \in U$, we have $\wt_M(b_i,u^e_i) = -1 -1 = -2$ while $\alpha_{b_i} = 0$ and $\alpha_{u^e_i} = -1$, thus
  $\alpha_{b_i} + \alpha_{u^e_i} \ge \wt_M(b_i,u^e_i)$. Similarly, when $i \notin U$, $\wt_M(b_i,u^e_i) = 1 - 1 = 0$ and we have
  $\alpha_{b_i} = -1$ and $\alpha_{u^e_i} = 1$ here. Moreover,
  \begin{equation}
  \label{new-eqn6}
  \sum_{v\in V}\alpha_v \ + \ \sum_{S\in\Omega}\lfloor |S|/2\rfloor\cdot z_S \ \ = \ \ \sum_{i \notin U}\,-2 \ + \ \sum_{i \notin U}\,2 \ \ = \ \ 0.
  \end{equation}

  This is because $\alpha_v = 0$ for all vertices $v$ unmatched in $M$ and $\alpha_u + \alpha_v = 0$ for all edges $(u,v) \in M$ except 
  the edges $(b_i,c_i)$ where $i \notin U$. 
  For each $i \notin U$, we have $\alpha_{b_i} + \alpha_{c_i} = -2$ and we also have $z_{S_i} = 2$ where $S_i = \{a_i,b_i,c_i\}$. 
  The sum in Equation~(\ref{new-eqn6}) is 0 and so $M$ is a popular matching.  Hence $G$ has a popular matching of size $m_H + 2n_H - k$.

  \medskip
  
  We will now show the converse.  Let $M$ be a popular matching in $G$ of size at least $m_H + 2n_H - k$.
  We know that $U = \{i \in [n_H]: (a_i,b_i) \in M\}$ is a vertex cover of $H$ (by Lemma~\ref{hard:lemma1}).
  We will show that $|U| \le k$.

  It follows from Lemma~\ref{hard:lemma1} that all edges $(u^e_i,u^e_j)$ belong to any popular matching. So these account for $m_H$ many edges in $M$.
  We also know that for every $i \in V_H$, either $(a_i,b_i) \in M$ or $\{(a_i,d_i),(b_i,c_i)\} \subseteq M$ (by Lemma~\ref{hard:lemma1}).
  Thus $|M| = m_H + |U| + 2(n_H - |U|)$. Since $|M| \ge m_H + 2n_H - k$, it follows that $|U| \le k$.
  Thus the graph $G$ has a vertex cover of size $k$. \qed
\end{proof}

We can now conclude Theorem~\ref{thm3:roommates} stated in Section~\ref{intro}.

\medskip

\noindent{\em Remark.} The rural hospitals theorem~\cite{Roth84} for stable matchings in a roommates instance $G$ says that every stable 
matching in $G$ matches the same subset of vertices. Such a statement is not true for max-size popular matchings in a roommates instance
as seen in the instance on 10 vertices in Fig.~\ref{newfig1:example}. This instance has two max-size popular matchings 
(these are of size 4): one leaves $c_i$ and $d_i$ unmatched while another leaves $c_j$ and $d_j$ unmatched.

\section{Max-weight popular matchings in bipartite instances}
\label{section5}
In this section  we will prove the $\mathsf{NP}$-hardness of the max-weight popular matching problem in a
bipartite instance $G = (A \cup B,E)$ on $n$ vertices and with edge weights.
Call any $e \in E$ a {\em popular edge} if there is a popular matching $N$ in $G$ such 
that $e \in N$. 

Let $M$ be a popular matching in $G = (A \cup B,E)$.
Observe that for any popular matching $N$ in $G$, the perfect matching $\tilde{N}$ in $\tilde{G}$ 
is an optimal solution of LP1 (see Section~\ref{prelims}) 
and a witness $\vec{\alpha}$ of $M$ is an optimal solution of LP2.
Lemma~\ref{prop0} follows from complementary slackness conditions on LP1.

\begin{lemma}
\label{prop0}
Let $M$ be a popular matching in $G$ and let $\vec{\alpha} \in\{0,\pm 1\}^n$ be a witness of $M$.
\begin{enumerate}
\item For any popular edge $(a,b) \in E$, the parities of $\alpha_a$ and $\alpha_b$ have to be the same.
\item If $u$ is a vertex in $G$ that is left unmatched in a stable matching in $G$ (call $u$ unstable) 
then $\alpha_u = \wt_M(u,u)$; thus $\alpha_u = 0$ if $u$ is left unmatched in $M$, otherwise $\alpha_u = -1$.
\end{enumerate}
\end{lemma}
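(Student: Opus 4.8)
The plan is to run both parts off a single engine: complementary slackness between LP1 and its dual LP2, exploiting the fact that \emph{any} popular matching $N$ yields an optimal primal solution $\tilde{N}$. Concretely, if $N$ is popular then the popularity of $M$ gives $\Delta(N,M)\le 0$ while the popularity of $N$ gives $\Delta(M,N)\le 0$; since $\Delta(N,M)=-\Delta(M,N)$, these force $\Delta(N,M)=0$, and by Claim~\ref{claim1} this reads $\wt_M(\tilde{N})=0$. As the optimal value of LP1 equals $0$, the perfect matching $\tilde{N}$ is an optimal solution of LP1. Because the witness $\vec{\alpha}$ is, by definition, an optimal solution of LP2, complementary slackness then forces every LP2 constraint indexed by an edge (or self-loop) of $\tilde{N}$ to hold with equality. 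This single observation drives both statements.

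For Part~1, I would take a popular matching $N$ with $(a,b)\in N$, which exists precisely because $(a,b)$ is a popular edge. The primal variable $x_{(a,b)}$ equals $1$ in $\tilde{N}$, so complementary slackness yields $\alpha_a+\alpha_b=\wt_M(a,b)$. The key observation is that $\wt_M(a,b)$ is always even: writing it as $\vote_a(b,\tilde{M}(a))+\vote_b(a,\tilde{M}(b))$, either $(a,b)\in M$, in which case both votes are $0$ and $\wt_M(a,b)=0$, or $(a,b)\notin M$, in which case both votes lie in $\{\pm 1\}$ and $\wt_M(a,b)\in\{-2,0,2\}$. Hence $\alpha_a+\alpha_b$ is even, and since $\alpha_a,\alpha_b\in\{0,\pm 1\}$ by hypothesis, this is possible only when $\alpha_a$ and $\alpha_b$ share the same parity.

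For Part~2, let $u$ be unstable and fix a stable matching $S$ that leaves $u$ unmatched; such an $S$ exists by the very definition of an unstable vertex. Every stable matching is popular, so by the engine above $\tilde{S}$ is an optimal solution of LP1. Since $u$ is unmatched in $S$, the self-loop $(u,u)$ lies in $\tilde{S}$ with primal value $1$, so complementary slackness applied to the self-loop constraint $\alpha_u\ge\wt_M(u,u)$ gives $\alpha_u=\wt_M(u,u)$. Substituting the definition of $\wt_M(u,u)$ completes the argument: $\alpha_u=0$ when $u$ is unmatched in $M$, and $\alpha_u=-1$ when $u$ is matched in $M$.

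I expect the only genuine care-points to be bookkeeping rather than conceptual. In Part~1 one must verify evenness of $\wt_M(a,b)$ in the degenerate case $(a,b)\in M$ as well as the generic case, and in Part~2 one must apply complementary slackness to the self-loop constraint of LP2 rather than to an edge constraint. Everything else follows directly from the LP1/LP2 optimality framework already established in Section~\ref{prelims}.
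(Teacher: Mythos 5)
Your proposal is correct and follows essentially the same route as the paper: both parts are derived from complementary slackness between LP1 and LP2, using the fact that $\tilde{N}$ (for a popular $N$ containing the popular edge) and $\tilde{S}$ (for a stable matching leaving $u$ unmatched) are optimal primal solutions, and then reading off $\alpha_a+\alpha_b=\wt_M(a,b)$ with $\wt_M(a,b)$ even, respectively $\alpha_u=\wt_M(u,u)$. Your explicit justification that $\Delta(N,M)=0$ and the case analysis for the evenness of $\wt_M(a,b)$ merely spell out steps the paper leaves implicit.
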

\begin{proof}
  Let $(a,b)$ be a popular edge. So $(a,b) \in N$ for some popular  matching $N$.
  Since $N$ is popular, $\Delta(N,M) = 0$ and the perfect matching $\tilde{N}$ is an optimal solution to the
  max-cost perfect matching LP in the graph $\tilde{G}$ with cost function $\wt_M$ (see LP1 from Section~\ref{prelims}).
  Since $\vec{\alpha}$ is an optimal solution to the dual LP (see LP2), it follows from complementary slackness that 
$\alpha_a +  \alpha_b = \wt_M(a,b)$. Observe that $\wt_M(a,b) \in \{\pm 2, 0\}$ (an even number).
Hence the integers $\alpha_a$ and $\alpha_b$ have the same parity. This proves part~1. 

Part~2 also follows from complementary slackness. Let $S$ be a stable matching in $G$ and let $u$ be 
a vertex left unmatched in $S$. So the perfect matching $\tilde{S}$ 
contains the edge $(u,u)$. Since $\tilde{S}$ is an
optimal solution to LP1, we have $\alpha_u = \wt_M(u,u)$ by complementary slackness.
Thus when $u$ is left unmatched in~$M$, $\alpha_u = 0$,
else  $\alpha_u = -1$. \qed
\end{proof}

\noindent{\bf The popular subgraph.}
We will define a subgraph $F_G = (A \cup B, E_F)$ called the {\em popular subgraph} of $G$, where $E_F$ is the set of 
popular edges in $E$.
The subgraph $F_G$ need not be connected: let ${\cal C} = \{\Comp_1,\ldots,\Comp_h\}$ be the set of 
connected components in $F_G$. 

Each component $\Comp_j$ that is a singleton set consists of a single {\em unpopular} vertex, i.e., one left unmatched in all
popular matchings. Every non-singleton component $\Comp_i$ has an {\em even} number of vertices.
This is because a max-size popular matching matches all vertices except the ones in singleton 
sets in ${\cal C}$ and vertices in $\Comp_i$ are matched to each other~\cite{Hirakawa-MatchUp15}.

It is also known that all {\em stable} vertices (those matched in stable matchings in $G$)
have to be matched in every popular matching~\cite{HK11}.
Let $M$ be any popular matching in $G$ and let $\vec{\alpha} \in\{0,\pm 1\}^n$ be a witness of $M$'s popularity. 
The following lemma will be useful to us.
\begin{lemma}
\label{prop1}
For any connected component $\Comp_i$ in $F_G$, either $\alpha_u = 0$ for all vertices $u \in \Comp_i$ or 
$\alpha_u = \pm 1$ for all vertices $u \in \Comp_i$. Moreover, if $\Comp_i$ contains one or 
more unstable vertices, 
either all the unstable vertices in $\Comp_i$ are matched in $M$ or none of them is matched in $M$.
\end{lemma}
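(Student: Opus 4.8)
The plan is to obtain both assertions as short deductions from the two parts of Lemma~\ref{prop0}, the engine being the observation that the parity of $\alpha_u$ is constant across a connected component of $F_G$. First I would fix an arbitrary edge $(u,v)$ lying inside $\Comp_i$. Since every edge of the popular subgraph $F_G$ is by definition a popular edge, part~1 of Lemma~\ref{prop0} applies and tells us that $\alpha_u$ and $\alpha_v$ have the same parity. Because $\Comp_i$ is a connected component, any two of its vertices are joined by a path all of whose edges lie in $\Comp_i$; walking along this path and invoking part~1 at each step shows that the parity of $\alpha$ is preserved throughout, so all vertices of $\Comp_i$ carry a single common parity.

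From here the first assertion is immediate: the witness coordinates satisfy $\alpha_u \in \{0,\pm 1\}$, so a common even parity forces $\alpha_u = 0$ for every $u \in \Comp_i$, while a common odd parity forces $\alpha_u \in \{\pm 1\}$ for every $u \in \Comp_i$. This is precisely the claimed dichotomy, and I would record it as a lemma of its own before turning to the unstable vertices.

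For the second assertion I would combine this dichotomy with part~2 of Lemma~\ref{prop0}, which says that an unstable vertex $u$ satisfies $\alpha_u = \wt_M(u,u)$, so that $\alpha_u = 0$ exactly when $u$ is unmatched in $M$ and $\alpha_u = -1$ exactly when $u$ is matched in $M$. Suppose $\Comp_i$ contains at least one unstable vertex. In the first branch of the dichotomy, where $\alpha_u = 0$ throughout $\Comp_i$, every unstable vertex of $\Comp_i$ has $\alpha_u = 0$ and is therefore unmatched in $M$. In the second branch, where $\alpha_u = \pm 1$ throughout $\Comp_i$, every unstable vertex has $\alpha_u \ne 0$; since part~2 restricts an unstable vertex to $\alpha_u \in \{0,-1\}$, this forces $\alpha_u = -1$, so every such vertex is matched in $M$. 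In either case the unstable vertices of $\Comp_i$ behave uniformly, which is what the lemma asserts.

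I do not expect a genuine technical obstacle here, as the whole argument is a two-line consequence of Lemma~\ref{prop0}. The single point I would take care to state explicitly is the parity-propagation step: it uses the fact that \emph{every} edge of $\Comp_i$, not merely the edges of $M$, is popular, so that part~1 of Lemma~\ref{prop0} is available on each edge of the component. This is the only place where the structure of $F_G$ (rather than that of $M$ alone) enters, and spelling it out is what makes the constant-parity claim rigorous.
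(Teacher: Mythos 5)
Your proof is correct and follows essentially the same route as the paper: parity propagation along popular-edge paths via part~1 of Lemma~\ref{prop0}, followed by the $\alpha_u = \wt_M(u,u)$ identity from part~2 to make the unstable vertices behave uniformly. The one detail you emphasize --- that every edge of $\Comp_i$, not just matching edges, is popular and hence eligible for part~1 --- is exactly the point the paper's proof also relies on.
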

\begin{proof}
  Let $u$ and $v$ be any 2 vertices in $\Comp_i$.
  Since $u, v$ are in the same connected component in $F_G$, there is a $u$-$v$ path $\rho$ in $G$ such that 
every edge in $\rho$ is a {\em popular edge}. 
The endpoints of each popular edge have the same parity in $\vec{\alpha}$ (by part~1 of Lemma~\ref{prop0}), 
hence $\alpha_u$ and $\alpha_v$ have the same parity. Thus either 
$\alpha_u = \alpha_v = 0$ or both $\alpha_u, \alpha_v \in \{\pm 1\}$.

Let $\Comp_i$ be a connected component with one or more {\em unstable} vertices, i.e., those left unmatched in a stable matching. 
Since all $\alpha$-values in $\Comp_i$ have the same parity, either (i)~all vertices $v$ in $\Comp_i$ satisfy $\alpha_v = 0$ 
or (ii)~all vertices $v$ in $\Comp_i$ satisfy  $\alpha_v = \pm 1$. For any unstable vertex $u$,
we have $\alpha_u = \wt_M(u,u)$ (by part~2 of Lemma~\ref{prop0}), hence in case~(i), all unstable vertices in 
$\Comp_i$ are left unmatched in $M$ and in case~(ii), all unstable 
vertices in $\Comp_i$ are matched in $M$. \qed
\end{proof}

\medskip

\noindent{\bf The $\mathsf{NP}$-hardness reduction.}
Given a graph $H = (V_H,E_H)$ which is an instance of the vertex cover problem, we will now build an instance
$G = (A\cup B,E)$ with strict preference lists such that the vertex cover problem in $H$ reduces to the max-weight
popular matching problem in $G$. Let $V_H = \{1,\ldots,n_H\}$.

\begin{itemize}
\item For every edge $e \in E_H$, there will be a gadget $D_e$ in $G$ on 6 vertices $s_e,t_e,s'_e,t'_e,s''_e,t''_e$.
\item For every vertex $i \in V_H$, there will be a gadget $C_i$ in $G$ on 4 vertices $a_i,b_i,a'_i,b'_i$.
\item There are 2 more vertices in $G$: these are $a_0$ and $b_0$.
\end{itemize}

Thus $A = \{a_0\} \cup \{a_i,a'_i: i \in V_H\} \cup \{s_e,s'_e,s''_e: e \in E_H\}$ and
$B = \{b_0\} \cup \{b_i,b'_i: i \in V_H\}\cup$ $\{t_e,t'_e,t''_e: e \in E_H\}$.
We now describe the edge set of $G$.
We first describe the preference lists of the 6 vertices in $D_e$, where $e = (i,j)$ and $i < j$ (see Fig.~\ref{fig:hardness1}).

\begin{minipage}[c]{0.45\textwidth}
			
			\centering
			\begin{align*}
                          &  s'_e\colon \, t'_e \succ t_e  \qquad\qquad && s''_e\colon \, t''_e \succ t_e \qquad\qquad && s_e\colon \, t'_e \succ b_j \succ t''_e\\
                          &  t'_e\colon  \, s'_e \succ s_e \qquad\qquad && t''_e\colon \, s''_e \succ s_e \qquad\qquad && t_e\colon  \, s''_e \succ a_i \succ s'_e\\
			\end{align*}
\end{minipage}

Here $s'_e$ and $t'_e$ are each other's top choices and similarly, $s''_e$ and $t''_e$ are each 
other's top choices. The vertex $s_e$'s top choice is $t'_e$, second choice is $b_j$, and third 
choice is $t''_e$. For $t_e$, the order is $s''_e$, followed by $a_i$, and then $s'_e$.
Recall that $e = (i,j)$ and $i < j$.
\begin{figure}[h]
\centerline{\resizebox{0.65\textwidth}{!}{\input{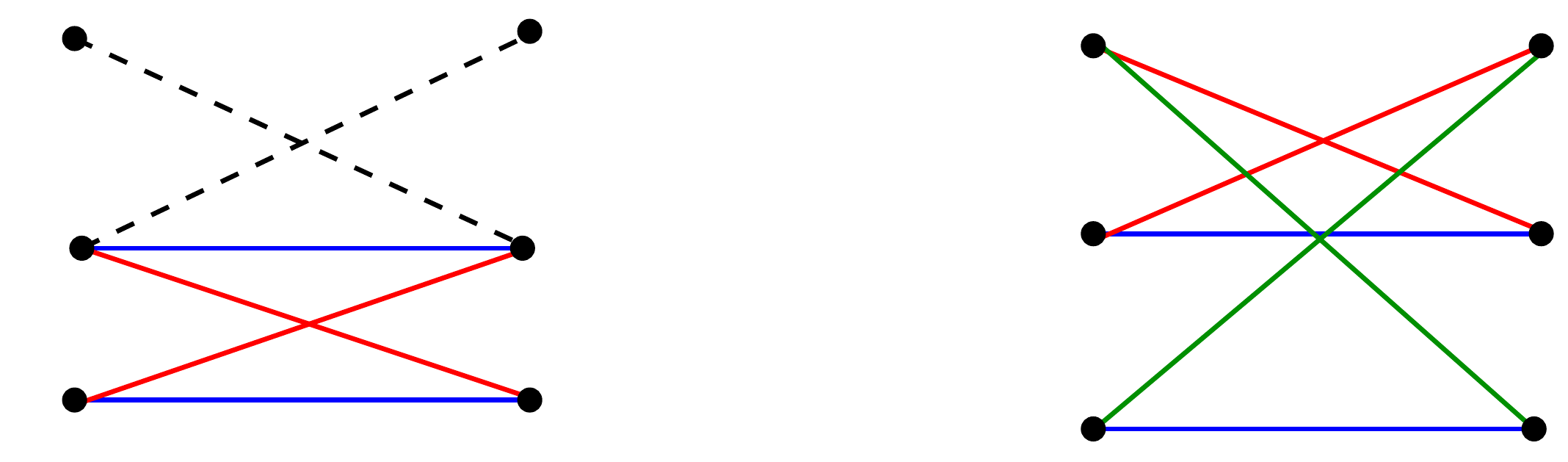_t}}}
\caption{To the left is the gadget $C_i$ on vertices $a_i,b_i,a'_i,b'_i$ and to the right is the gadget $D_e$ on vertices $s_e,t_e,s'_e,t'_e,s''_e,t''_e$; the vertices $a_0$ and $b_0$ are adjacent to $b_i$ and $a_i$, respectively.}
\label{fig:hardness1}
\end{figure}

We now describe the preference lists of the 4 vertices $a_i,b_i,a'_i,b'_i$ in $C_i$ (see Fig.~\ref{fig:hardness1}).

\begin{minipage}[c]{0.45\textwidth}
			
			\centering
			\begin{align*}
				& a'_i\colon  \, b_i \succ b'_i \qquad\qquad && a_i\colon \, b_i \succ b'_i \succ b_0 \succ \cdots \\
				& b'_i\colon  \, a_i \succ a'_i \qquad\qquad && b_i\colon \, a_i \succ a'_i \succ a_0 \succ \cdots
			\end{align*}
\end{minipage}

\begin{itemize}
\item Both $a'_i$ and $a_i$ have $b_i$ as their top choice and $b'_i$ as their second choice. Similarly, 
both $b'_i$ and $b_i$ have $a_i$ as their top choice and $a'_i$ as their second choice.
  
\item The vertex $a_i$ has other neighbors: its third choice is $b_0$ followed by all the vertices 
$t_{e_1},t_{e_2},\ldots$  where $i$ is the {\em lower-indexed}
endpoint of $e_1,e_2,\ldots$. Similarly, $b_i$ has  $a_0$ as its third choice followed by all 
the vertices $s_{e'_1},s_{e'_2},\ldots$ where $i$ is the {\em higher-indexed} endpoint of $e'_1,e'_2,\ldots$

\item The order among the vertices $t_{e_1},t_{e_2},\ldots$ (similarly, 
$s_{e'_1},s_{e'_2},\ldots$) in the preference list of $a_i$ (resp., $b_i$) does not matter;
hence these are represented as $\cdots$ in these preference lists.
\end{itemize}

The vertex $a_0$ has $b_1,\ldots, b_{n_H}$ as its neighbors and its preference list is some arbitrary 
permutation of these vertices. Similarly, the vertex $b_0$ has $a_1,\ldots, a_{n_H}$ as its neighbors 
and its preference list is some arbitrary permutation of these vertices.

\begin{lemma}
\label{cor:redn2}  
No popular matching in $G$ matches either $a_0$ or $b_0$. 
\end{lemma}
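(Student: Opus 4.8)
The plan is to argue by contradiction using the witness machinery of this section. First I would record the relevant stable matching: the matching $S$ consisting of $(a_i,b_i)$ and $(a'_i,b'_i)$ for every $i$ together with $(s'_e,t'_e)$ and $(s''_e,t''_e)$ for every $e$ is stable, since $(a_i,b_i),(s'_e,t'_e),(s''_e,t''_e)$ are mutual top choices and the remaining pairs, as well as the unmatched vertices $a_0,b_0,s_e,t_e$, are readily checked not to block (each neighbour of $a_0$ or $b_0$ is matched to a strictly preferred partner). In $S$ the vertices $a_0$ and $b_0$ are unmatched, so both are \emph{unstable} in the sense of Lemma~\ref{prop0}. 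Hence, fixing any popular matching $M$ with a witness $\vec{\alpha}\in\{0,\pm 1\}^n$ (Lemma~\ref{lem:witness}), part~2 of Lemma~\ref{prop0} tells us that if $a_0$ (resp.\ $b_0$) is matched in $M$ then $\alpha_{a_0}=-1$ (resp.\ $\alpha_{b_0}=-1$).

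Now suppose toward a contradiction that $a_0$ is matched, say $(a_0,b_i)\in M$, and run the deductions inside the gadget $C_i$. Complementary slackness on the matched edge $(a_0,b_i)$ gives $\alpha_{a_0}+\alpha_{b_i}=\wt_M(a_0,b_i)=0$, so $\alpha_{b_i}=1$. Since $b_i$ is $a_i$'s top choice and $a_i$ (a stable vertex, hence matched in $M$) is matched to someone other than $b_i$, while $b_i$ prefers $a_i$ to its partner $a_0$, the edge $(a_i,b_i)$ is blocking; thus $\wt_M(a_i,b_i)=2$ and the constraint $\alpha_{a_i}+\alpha_{b_i}\ge 2$ forces $\alpha_{a_i}=1$. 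The vertex $a'_i$ is also stable, hence matched in $M$, and its only neighbours are $b_i$ (taken by $a_0$) and $b'_i$, so $(a'_i,b'_i)\in M$. Consequently $a_i$ is matched to a neighbour it ranks below $b'_i$ (its neighbours $b_i,b'_i$ are both occupied), so $a_i$ prefers $b'_i$ to its partner while $b'_i$ prefers $a_i$ to $a'_i$; hence $(a_i,b'_i)$ blocks, $\wt_M(a_i,b'_i)=2$, and $\alpha_{a_i}+\alpha_{b'_i}\ge 2$ forces $\alpha_{b'_i}=1$. Complementary slackness on $(a'_i,b'_i)$ then gives $\alpha_{a'_i}=-1$. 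Finally, $a'_i$ prefers $b_i$ to its partner $b'_i$ and $b_i$ prefers $a'_i$ to its partner $a_0$, so $(a'_i,b_i)$ blocks and the constraint reads $\alpha_{a'_i}+\alpha_{b_i}\ge \wt_M(a'_i,b_i)=2$; but $\alpha_{a'_i}+\alpha_{b_i}=-1+1=0$, a contradiction.

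The case where $b_0$ is matched is entirely symmetric: assuming $(b_0,a_i)\in M$ yields $\alpha_{a_i}=1$, then $\alpha_{b_i}=1$ from the blocking edge $(a_i,b_i)$, then $(a'_i,b'_i)\in M$ since $b'_i$ is a matched stable vertex, then $\alpha_{b'_i}=1$ from the blocking edge $(a_i,b'_i)$ and hence $\alpha_{a'_i}=-1$, and the same blocking edge $(a'_i,b_i)$ again produces $0\ge 2$. I expect the main obstacle to be the bookkeeping in the middle of the chain rather than any conceptual difficulty: one must argue carefully that once $a_0$ ``steals'' $b_i$, the forced matching of the degree-two stable vertices $a'_i,b'_i$ pins down $a_i$'s partner strictly below $b'_i$, which is exactly what makes each of $(a_i,b_i)$, $(a_i,b'_i)$ and $(a'_i,b_i)$ genuinely blocking and lets complementary slackness together with the edge constraints of \ref{LP3} propagate the $\alpha$-values to the final contradiction.
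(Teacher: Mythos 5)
Your proof is correct, but it takes a genuinely different route from the paper's. The paper first exhibits a specific popular matching $M = \{(a_i,b'_i),(a'_i,b_i): i \in V_H\} \cup \{(s'_e,t'_e),(s''_e,t''_e): e \in E_H\}$ together with an explicit witness in which $\alpha_{a_0}=\alpha_{b_0}=0$ while $\alpha_{b_i}=\alpha_{a_i}=1$ for every $i$; if $a_0$ were matched in some popular matching then $(a_0,b_i)$ would be a \emph{popular edge}, and part~1 of Lemma~\ref{prop0} (applied to that one fixed matching and witness) forces $\alpha_{a_0}$ and $\alpha_{b_i}$ to have equal parity --- an immediate contradiction. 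You instead stay entirely inside the hypothetical popular matching that matches $a_0$: you use part~2 of Lemma~\ref{prop0} to pin $\alpha_{a_0}=-1$, then chain complementary slackness and the edge-covering constraints around the $4$-cycle of $C_i$, using the three forced blocking edges $(a_i,b_i)$, $(a_i,b'_i)$, $(a'_i,b_i)$ to propagate $\alpha_{b_i}=\alpha_{a_i}=\alpha_{b'_i}=1$, $\alpha_{a'_i}=-1$, and arrive at $0\ge 2$. Both arguments are sound; your bookkeeping (stability of $S$, the fact that stable vertices are matched in every popular matching, and that $a_i$'s remaining options are all ranked below $b'_i$) checks out. What the paper's version buys is economy downstream: the matching $M$ and its witness are reused verbatim in the proof of Lemma~\ref{lem:redn3} to rule out the edges $(a_i,t_e)$ and $(s_e,b_j)$ as popular edges, so constructing it once pays for itself. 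What your version buys is locality: you never have to verify popularity of a global matching (a step the paper dispatches with ``it can be checked''), and the contradiction is derived from first principles within one gadget.
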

\begin{proof}
  We will first show that $M = \{(a_i,b'_i),(a'_i,b_i): i \in V_H\}$ $\cup \{(s'_e,t'_e),(s''_e,t''_e): e \in E_H\}$
  is a popular matching in $G$. We prove the popularity of $M$ by showing a vector 
  $\vec{\alpha} \in \{0, \pm 1\}^n$ that will be a {\em witness}  to $M$'s popularity.

  \begin{itemize}
  \item For vertices in $C_i$, where $i \in V_H$: let $\alpha_{a_i} =  \alpha_{b_i} = 1$, 
        $\alpha_{a'_i} =  \alpha_{b'_i} = -1$.
  \item For any vertex $u$ in $D_e$, where $e \in E_H$: let $\alpha_u = 0$.
  \item Let $\alpha_{a_0} = \alpha_{b_0} = 0$.
  \end{itemize}

  It can be checked that $\alpha_u + \alpha_v \ge \wt_M(u,v)$ for all edges $(u,v)$ in $G$. 
  Also, $\alpha_u \ge \wt_M(u,u)$ for all $u \in A \cup B$. 
  Since $\alpha_u = 0$ for all vertices $u$ unmatched in $M$ and $\alpha_u + \alpha_v = 0$ for every edge $(u,v)$ in $M$,  
  $\sum_{u \in A\cup B}\alpha_u = 0$.  

 \smallskip
  
Suppose $a_0$ is matched in some popular matching $N$, i.e., $(a_0,b_i) \in N$ for some 
$i \in [n_H]$. Then $(a_0,b_i)$ is a {\em popular edge}. So it follows from 
part~1 of Lemma~\ref{prop0} that the parities of $\beta_{a_0}$ and $\beta_{b_i}$
have to be the same, for any witness $\vec{\beta}$ of the popular matching $M$.
However $\alpha_{a_0} = 0$ while $\alpha_{b_i} = 1$ for the witness $\vec{\alpha}$ described above.
This is a contradiction to $(a_0,b_i)$ being a popular edge and
hence no popular matching in $G$ matches $a_0$ (similarly, $b_0$). \qed
\end{proof}

\begin{lemma}
\label{lem:redn3}  
Let $N$ be any popular matching in $G$. For each $i \in \{1,\ldots,n_H\}$: either $(a_i,b_i)$ and $(a'_i,b'_i)$
are in $N$ or $(a_i,b'_i)$ and $(a'_i,b_i)$ are in $N$.
\end{lemma}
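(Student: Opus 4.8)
The plan is to show that in any popular matching $N$ the four vertices of the gadget $C_i$ are matched to one another, forming a perfect matching on the $K_{2,2}$ spanned by $\{a_i,a'_i\}$ and $\{b_i,b'_i\}$; since $K_{2,2}$ has exactly two perfect matchings, these are precisely the two configurations claimed. First I would argue that all four of $a_i,b_i,a'_i,b'_i$ are matched in $N$. The pair $(a_i,b_i)$ is a mutual first choice, so it lies in every stable matching of $G$ (which exists since $G$ is bipartite); once $a_i,b_i$ are paired, stability forces $(a'_i,b'_i)$ as well, since the only other neighbours of $a'_i$ and $b'_i$ are $b_i$ and $a_i$, each already matched to its own top choice. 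Hence all four are stable vertices, and by the cited fact that stable vertices are matched in every popular matching, each is matched in $N$.

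Next, because $a'_i$ and $b'_i$ have only their two neighbours inside $C_i$, I would case-split on the partner of $a'_i$. If $(a'_i,b_i)\in N$, then $b'_i$, being matched with its only remaining neighbour, must take $a_i$, yielding the configuration $(a_i,b'_i),(a'_i,b_i)$. It therefore remains to treat the case $(a'_i,b'_i)\in N$ and show that $(a_i,b_i)\in N$. If this failed, then $a_i$ and $b_i$ would both be matched to \emph{external} neighbours: their internal options $b'_i$ and $a'_i$ are used up, and $a_0,b_0$ are unmatched in every popular matching by Lemma~\ref{cor:redn2}. Moreover these external partners are strictly worse for $a_i$ (resp.\ $b_i$) than both $b_i,b'_i$ (resp.\ $a_i,a'_i$), by the shape of the preference lists.

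The crux is ruling out this last scenario, and it is the step I expect to be the main obstacle. A direct exchange — rematching $(a_i,b_i)$ and discarding the two external edges — only produces $\Delta=0$, since the two displaced external partners each vote against the new matching, so popularity is not contradicted this way. Instead I would use a witness $\vec{\alpha}\in\{0,\pm 1\}^n$ of $N$ (Lemma~\ref{lem:witness}) together with dual feasibility $\alpha_u+\alpha_v\ge\wt_N(u,v)$. In the scenario above, both $(a_i,b_i)$ and $(a_i,b'_i)$ are blocking edges of $N$, so $\wt_N=2$ on each; feasibility then forces $\alpha_{a_i}=\alpha_{b_i}=\alpha_{b'_i}=1$. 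Complementary slackness on the matched edge $(a'_i,b'_i)$ (a matched edge has $\wt_N=0$ and is tight) gives $\alpha_{a'_i}=-1$. But $(a'_i,b_i)$ is also blocking, since $a'_i$ prefers its top choice $b_i$ to its partner $b'_i$ and $b_i$ prefers its second choice $a'_i$ to its external partner; hence feasibility demands $\alpha_{a'_i}+\alpha_{b_i}\ge 2$, whereas $\alpha_{a'_i}+\alpha_{b_i}=0$ — a contradiction. This forces $(a_i,b_i)\in N$, completing the remaining case and the lemma.
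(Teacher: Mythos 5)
Your proof is correct, but the key step is argued along a different route from the paper's. The paper first shows that all four internal edges of $C_i$ are popular (exhibiting two popular matchings containing them), and then uses the parity condition of Lemma~\ref{prop0} against the explicit witness $\vec{\alpha}$ of the matching $M$ from Lemma~\ref{cor:redn2} ($\alpha_{a_i}=1$ versus $\alpha_{t_e}=0$) to conclude that no edge between $C_i$ and any $D_e$ is a popular edge; hence $C_i$ is a maximal connected component of the popular subgraph $F_G$ and $a_i$'s partner in any popular matching is $b_i$ or $b'_i$. You instead isolate the single bad configuration --- $(a'_i,b'_i)\in N$ with $a_i$ and $b_i$ both matched outside the gadget --- and kill it with a witness of $N$ itself: the three blocking edges $(a_i,b_i)$, $(a_i,b'_i)$, $(a'_i,b_i)$ force $\alpha_{a_i}=\alpha_{b_i}=\alpha_{b'_i}=1$ and $\alpha_{a'_i}=1$, which is incompatible with complementary slackness on the matched edge $(a'_i,b'_i)$. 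Both arguments are sound; yours is more local and does not need the reference matching $M$ and its witness beyond what Lemma~\ref{cor:redn2} already supplies. Two remarks. First, the paper's formulation earns something your version only yields as a corollary: the explicit fact that there is \emph{no popular edge} between $C_i$ and $D_e$, which is invoked again verbatim in the proof of Claim~\ref{clm:redn4}; if you adopt your proof you should still record that consequence. Second, the step you flag as the ``main obstacle'' admits an even more elementary resolution than the witness computation: rather than only rematching $(a_i,b_i)$, replace the three edges $(a_i,t_e)$, $(b_i,s_{e'})$, $(a'_i,b'_i)$ by $(a_i,b'_i)$ and $(a'_i,b_i)$; then all four gadget vertices improve while only $t_e$ and $s_{e'}$ are worse off, giving $\Delta(N',N)=4-2=2>0$ and contradicting popularity directly, with no appeal to witnesses at all.
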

\begin{proof}
We know from Lemma~\ref{cor:redn2} that in the popular subgraph $F_G$, the vertices $a_0$ and 
$b_0$ are singleton sets. We now claim that each $C_i$ forms a maximal connected component in $F_G$.
This implies that $a_i$ has only 2 possible partners in any popular matching $N$: either $b_i$ or $b'_i$. 
Thus either 
(1)~$\{(a_i,b_i),(a'_i,b'_i)\} \subset N$ or (2)~$\{(a_i,b'_i),(a'_i,b_i)\} \subset N$.

\smallskip

It is easy to see that the 4 vertices of $C_i$ belong to the same connected component in $F_G$: 
this is because all the 4 edges $(a_i,b_i),(a'_i,b'_i),(a_i,b'_i)$ ,$(a'_i,b_i)$ are popular. The 
former 2 edges belong to the  stable matching $S = \{(a_i,b_i),(a'_i,b'_i): i \in V_H\}$ $\cup \{(s'_e,t'_e),(s''_e,t''_e): e \in E_H\}$
while the latter 2 edges belong to the popular matching $M$ defined in Lemma~\ref{cor:redn2}. 

Consider the matching $M$ and its witness $\vec{\alpha}$ defined in the proof of Lemma~\ref{cor:redn2}: 
the parities of $\alpha_{a_i}$ and $\alpha_{t_e}$ for any $i \in V_H$ and $e \in E_H$ are different.
Hence it follows from part~1 of Lemma~\ref{prop0} that the edge $(a_i,t_e)$ is {\em not} a popular 
edge. Similarly, the edge $(s_e,b_j)$ is not a popular edge. Thus each 
$C_i$  forms a maximal connected component in $F_G$. \qed
\end{proof}

\medskip

Lemma~\ref{lem:redn3} will be important in our reduction. 
Let $\vec{\alpha}$ be any witness of $N$. 
\begin{itemize}
\item In the first possibility of Lemma~\ref{lem:redn3}, i.e., when
$(a_i,b_i),(a'_i,b'_i)$ are in $N$, we have $\alpha_u = 0$ for all $u \in C_i$. This is 
because $a_0$ and $b_0$ are unmatched in $N$, so
$\alpha_{a_i} + \alpha_{b_0} \ge \wt_N(a_i,b_0) = 0$ and $\alpha_{a_0} + \alpha_{b_i} \ge \wt_N(a_0,b_i) = 0$.
Also $\alpha_{a_0} = \alpha_{b_0} = 0$ (by Lemma~\ref{prop0}).
So $\alpha_{a_i} \ge 0$ and $\alpha_{b_i} \ge 0$.
We also have $\alpha_{a_i} + \alpha_{b_i} = \wt_N(a_i,b_i) = 0$. 
Thus $\alpha_{a_i} = \alpha_{b_i} = 0$. 
\item In the second  possibility of Lemma~\ref{lem:redn3}, i.e., when $(a_i,b'_i), (a'_i,b_i)$ are in 
$N$, we have $\alpha_u = \pm 1$ for all $u \in C_i$. 
This is because $\wt_N(a_i,b_i) = 2$, thus $\alpha_{a_i} = \alpha_{b_i} = 1$.
\end{itemize}

\noindent{\bf Edge weights.}
We now assign weights to the edges in $G$. 
For all $e \in E_H$, let $w(s_e,t'_e) = w(s_e,t''_e) = w(s'_e,t_e) = w(s''_e,t_e) = 2$, i.e.,
all the edges between $s_e, t_e$ and their neighbors in $D_e$ have weight 2.
For all $i \in [n_H]$, let $w(a_i,b_i) = w(a'_i,b'_i) = 2$.
Set $w(e) = 1$ for all other edges $e$ in $G$.

Let $N$ be a max-weight popular matching in $G$ and let $\vec{\alpha}\in\{0,\pm 1\}^n$ be a witness of $N$'s popularity. Let
$U_N$ be the set of those indices $i \subseteq [n_H]$ such that $\alpha_u = \pm 1$ for all $u \in C_i$.

\begin{lemma}
  \label{lem:redn5}
      The set $U_N$ is a vertex cover of the graph $H$.
\end{lemma}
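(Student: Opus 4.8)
The plan is to argue by contradiction. Suppose $U_N$ is not a vertex cover, so some edge $e=(i,j)\in E_H$ with $i<j$ has $i,j\notin U_N$. By the dichotomy established after Lemma~\ref{lem:redn3}, this forces $\alpha_u=0$ for every $u\in C_i\cup C_j$; in particular $\alpha_{a_i}=\alpha_{b_j}=0$, and both $C_i,C_j$ are in the first configuration (so $(a_i,b_i),(a'_i,b'_i)\in N$, and likewise for $j$). From this I would manufacture a popular matching $N'$ with $w(N')>w(N)$, contradicting maximality of $N$.

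First I would pin down how $N$ matches the gadget $D_e$. Since $(s'_e,t'_e),(s''_e,t''_e)$ lie in the stable matching $S$, the vertices $s'_e,t'_e,s''_e,t''_e$ are stable and hence matched in every popular matching, while $s_e,t_e$ are unstable and their only partners lie in $D_e$ (as $b_j,a_i$ are matched inside $C_j,C_i$). A short case analysis on which of $s_e,t_e$ are matched shows $N$ restricted to $D_e$ is exactly one of $A=\{(s_e,t'_e),(s'_e,t_e),(s''_e,t''_e)\}$, $B=\{(s_e,t''_e),(s''_e,t_e),(s'_e,t'_e)\}$, or $C=\{(s'_e,t'_e),(s''_e,t''_e)\}$ (leaving $s_e,t_e$ unmatched), of weights $5,5,2$. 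The key step is that $A$ forces $i\in U_N$ and $B$ forces $j\in U_N$. For $A$: the blocking edge $(s'_e,t'_e)$ has $\wt_N=2$, so any witness in $\{0,\pm1\}^n$ sets $\alpha_{s'_e}=\alpha_{t'_e}=1$; complementary slackness on the three matched edges then gives $\alpha_{s_e}=\alpha_{t_e}=-1$, and the edge $(s''_e,t_e)$ (with $\wt_N=0$) forces $\alpha_{s''_e}=1$. Finally $\wt_N(a_i,t_e)=0$, so the interface constraint $\alpha_{a_i}+\alpha_{t_e}\ge 0$ yields $\alpha_{a_i}=1$, i.e. $i\in U_N$; the argument for $B$ is symmetric through $(s_e,b_j)$. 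Since $i,j\notin U_N$, neither $A$ nor $B$ occurs, so $N$ uses $C$ on $D_e$, and the same constraints (using that $s_e,t_e$ are unmatched and unstable, hence $\alpha_{s_e}=\alpha_{t_e}=0$) force $\vec\alpha$ to vanish on all of $D_e$.

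Next I would build $N'$ by keeping $N$ everywhere except replacing the $C_i$-edges by $(a_i,b'_i),(a'_i,b_i)$ and the $D_e$-edges by $A$; this is a valid matching because in $A$ the vertex $t_e$ is matched inside $D_e$ and $a_i$ inside $C_i$. Its weight is $w(N')=w(N)-2+3=w(N)+1$, since $C_i$ drops from $4$ to $2$ while $D_e$ rises from $2$ to $5$. For the witness $\vec\alpha'$, I take $\vec\alpha$, reset $\alpha'_{a_i}=\alpha'_{b_i}=1$, $\alpha'_{a'_i}=\alpha'_{b'_i}=-1$, and reset the $D_e$-coordinates to the values forced by $A$. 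Both blocks of changes are sum-preserving (and the old $D_e$-block was $0$), so $\sum_v\alpha'_v=0$; feasibility inside $C_i$, inside $D_e$, and on $(a_i,t_e),(s_e,b_j)$ follows from the weights computed for $A$.

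The main obstacle, deserving the most care, is checking that raising $\alpha_{a_i},\alpha_{b_i}$ from $0$ to $1$ breaks nothing in the other gadgets attached to vertex $i$. For every other edge $e'=(i,j')$ with $i<j'$ the only affected constraint is $(a_i,t_{e'})$, and for every $e''=(i',i)$ with $i'<i$ it is $(s_{e''},b_i)$; in each case $a_i$ (resp.\ $b_i$) still prefers its $N'$-partner to the external neighbor, so $\wt_{N'}$ on these edges is unchanged while the left-hand side of the dual constraint only grows. Hence these constraints, together with $(a_i,b_0)$ and $(b_i,a_0)$, remain satisfied, and $\vec\alpha'$ is a genuine witness. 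Thus $N'$ is a popular matching of strictly larger weight, contradicting that $N$ is max-weight, and therefore $U_N$ is a vertex cover of $H$.
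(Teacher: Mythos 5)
Your proof is correct and takes essentially the same route as the paper's: the parity-forcing step (configuration $A$ forces $\alpha_{a_i}=1$ via $\wt_N(a_i,t_e)=0$ and $\alpha_{t_e}=-1$, and symmetrically $B$ forces $\alpha_{b_j}=1$) is exactly the paper's two-case analysis following Claim~\ref{clm:redn4}, and your weight-increasing swap on an uncovered edge (gain $+3-2=+1$, with the same explicit witness) is precisely Case~2 of the paper's proof of Claim~\ref{clm:redn4}, merely repackaged as a direct contradiction instead of being stated as a standalone claim that $s_e$ and $t_e$ are matched for every $e$.
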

\begin{proof}
   Edge weights were assigned in $G$ such that the following claim (proof given below) holds for any max-weight popular matching $N$.
    \begin{new-claim}
        \label{clm:redn4}
        For every $e \in E_H$, both $s_e$ and $t_e$ have to be matched in $N$.
    \end{new-claim}

Consider any edge $e = (i,j) \in E_H$, let $i < j$. It follows from Claim~\ref{clm:redn4} and Lemma~\ref{lem:redn3} that either 
the pair $(s_e,t'_e), (s'_e,t_e)$ or  the pair $(s_e,t''_e), (s''_e,t_e)$ is in $N$. 
\begin{itemize}
  \item If  $(s_e,t'_e)$ and $ (s'_e,t_e)$ are in $N$, then $\wt_N(a_i,t_e) = 0$ since $t_e$ prefers $a_i$ to $s'_e$ while $a_i$ prefers both $b_i$
      and $b'_i$ (its possible partners in $N$) to $t_e$. It follows from part~2 of Lemma~\ref{prop0} that
      $\alpha_{t_e} = -1$, thus  $\alpha_{a_i}$ has to be 1 so that $\alpha_{a_i} + \alpha_{t_e} \ge 0$.
      Recall that $\vec{\alpha}$ is a witness of $N$'s popularity.
  \item If  $(s_e,t''_e)$ and $ (s''_e,t_e)$ are in $N$, then $\wt_N(s_e,b_j) = 0$ since $s_e$ prefers $b_j$ to $t''_e$ while $b_j$ prefers both $a_j$
        and $a'_j$ (its possible partners in $N$) to $s_e$. It follows from part~2 of Lemma~\ref{prop0} that
        $\alpha_{s_e} = -1$, thus  $\alpha_{b_j}$ has to be 1 so that $\alpha_{s_e} + \alpha_{b_j} \ge 0$.
    \end{itemize}

    Thus at least one of $C_i, C_j$ assigns the $\alpha$-values of its vertices to $\pm 1$. Hence for every edge $(i,j) \in E_H$, 
at least one of $i,j$ is in $U_N$, in other words, $U_N$ is a vertex cover of $H$. \qed
\end{proof}

\noindent{\bf Proof of Claim~\ref{clm:redn4}.}
Consider any $e \in E_H$. Since $s'_e,t'_e,s''_e$, and $t''_e$
are stable vertices in $G$, they have to be matched in every popular matching in $G$. 
Thus any popular matching in $G$ either matches {\em both} $s_e, t_e$ or neither $s_e$ nor $t_e$.
Recall from the proof of Lemma~\ref{lem:redn3} that there is no popular edge between $D_e$ and either $C_i$ or $C_j$.

  Let $N$ be a max-weight popular matching in $G$. We now need to show that $N$ matches  
both $s_e$ and $t_e$. Suppose $N$ matches neither $s_e$ nor $t_e$. Thus $(s'_e,t'_e)$ and 
$(s''_e,t''_e)$ are in $N$. Hence the weight contributed by vertices in $D_e$ to $w(N)$ is 2. 

Let $e = (i,j)$ where $i < j$. We know from Lemma~\ref{lem:redn3} 
that either $(a_i,b'_i),(a'_i,b_i)$ are in $N$ or $(a_i,b_i),(a'_i,b'_i)$ are in $N$.

  \smallskip
  
  \noindent{\em Case~1:} Suppose $\{(a_i,b'_i),(a'_i,b_i)\} \subset N$.
This means $\wt_N(a_i,b_i) = 2$ and hence $\alpha_{a_i} = \alpha_{b_i} = 1$ in any witness $\vec{\alpha}$ 
of $N$.

Consider the matching $N' = N\cup\{(s_e,t'_e),(s'_e,t_e)\} \setminus \{(s'_e,t'_e)\}$. That is, 
we replace the edge $(s'_e,t'_e)$ in $N$ with the edges $(s_e,t'_e)$ and $(s'_e,t_e)$. It is easy 
to prove that $N'$ is also popular; we will show a witness $\vec{\beta}$ to prove the popularity 
of $N'$. Let $\vec{\alpha}$ be a witness of $N$ and let $\beta_u = \alpha_u$ for all 
$u \in A \cup B$ except for the vertices in $D_e$. For the vertices in $D_e$, let 
$\beta_{s_e} =  \beta_{t_e} = \beta_{t''_e} = -1$ and $\beta_{s'_e} =  \beta_{t'_e} = \beta_{s''_e} = 1$.

It can be checked that $\sum_{u\in A\cup B}\beta_u = 0$ and $\beta_u +\beta_v \ge \wt_{N'}(u,v)$ for 
all $(u,v) \in E$ and $\beta_u \ge \wt_{N'}(u,u)$ for all $u \in A \cup B$.
In particular, $\wt_{N'}(a_i,t_e) = 0 = \beta_{a_i} + \beta_{t_e}$. Moreover, we have 
$w(N') = w(N) + 3$. Thus there is a popular matching $N'$ in $G$ with a larger weight than $N$, a 
contradiction to our assumption that $N$ is a max-weight popular matching in $G$.

  \smallskip
  
\noindent{\em Case~2:} Suppose $\{(a_i,b_i),(a'_i,b'_i)\} \subset N$.
Consider the matching $N''$ which is exactly the same as $N$ except
that the edges  $(s'_e,t'_e), (a_i,b_i)$ and $(a'_i,b'_i)$ are deleted from $N$ and the new edges 
are $(s_e,t'_e), (s'_e,t_e), (a_i,b'_i)$, and $(a'_i,b_i)$.

We claim that $N''$ is also popular and we prove this by showing a witness $\vec{\gamma}$.
Let $\gamma_u = \alpha_u$ for all $u \in A \cup B$ ($\vec{\alpha}$ is a witness of $N$) except 
for the vertices in $C_i \cup D_e$. For the vertices in $C_i \cup D_e$, let 
$\gamma_{s_e} =  \gamma_{t_e} = \gamma_{t''_e} = -1$, $\gamma_{s'_e} =  \gamma_{t'_e} = \gamma_{s''_e} = 1$,
and $\gamma_{a_i} =  \gamma_{b_i} = 1$,   $\gamma_{a'_i} =  \gamma_{b'_i} = -1$.

It can be checked that $\sum_{u\in A\cup B}\gamma_u = 0$ and $\gamma_u +\gamma_v \ge \wt_{N''}(u,v)$ 
for all $(u,v) \in E$ and $\gamma_u \ge \wt_{N''}(u,u)$ for all $u \in A \cup B$.
Thus $N''$ is a popular matching. Moreover, we have $w(N'') = w(N) + 3 -2 = w(N)+1$.
Thus there is a popular matching $N''$ in $G$ with a larger weight than $N$, a contradiction again.
Hence we can conclude that both $s_e$ and $t_e$ have to be matched in $N$. \qed

\begin{theorem}
    \label{thm:sec3}
    For any integer $1 \le k \le n_H$, the graph $H = (V_H,E_H)$ admits a vertex cover of size $k$ if and only if $G$ admits a popular
    matching of weight at least $5m_H + 4n_H - 2k$, where $|E_H| = m_H$.
\end{theorem}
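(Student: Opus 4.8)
The plan is to prove the two directions separately, both through the witness framework of \ref{LP3}: for the ``only if'' direction I would exhibit an explicit popular matching of the stated weight together with a $\{0,\pm1\}$-witness certifying its popularity, and for the ``if'' direction I would pass to a \emph{max-weight} popular matching and read off a small vertex cover using the structural lemmas already proved.

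For the forward direction, suppose $U \subseteq V_H$ is a vertex cover with $|U| = k$. I would build $N$ gadget by gadget. On each $C_i$ I use the stable pair $\{(a_i,b_i),(a'_i,b'_i)\}$ (weight $4$) when $i \notin U$ and the crossed pair $\{(a_i,b'_i),(a'_i,b_i)\}$ (weight $2$) when $i \in U$; on each $D_e$ with $e=(i,j)$, $i<j$, I route the gadget toward $a_i$'s side, i.e.\ I take $\{(s_e,t'_e),(s'_e,t_e),(s''_e,t''_e)\}$ (weight $5$) exactly when $i \in U$, and otherwise (so $j \in U$, since $U$ covers $e$) I take the mirror configuration $\{(s_e,t''_e),(s''_e,t_e),(s'_e,t'_e)\}$ (also weight $5$); and I leave $a_0,b_0$ unmatched. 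Summing yields $w(N)=5m_H + 2k + 4(n_H-k) = 5m_H+4n_H-2k$, so it only remains to certify popularity. I would set $\alpha_{a_i}=\alpha_{b_i}=1,\ \alpha_{a'_i}=\alpha_{b'_i}=-1$ for $i\in U$ and $\alpha_u=0$ on $C_i$ for $i\notin U$, with $\alpha_{a_0}=\alpha_{b_0}=0$, and on each $D_e$ the $\pm1$ pattern dictated by its configuration (for the $a_i$-side configuration, $\alpha_{s_e}=\alpha_{t_e}=\alpha_{t''_e}=-1$ and $\alpha_{s'_e}=\alpha_{t'_e}=\alpha_{s''_e}=1$, and the mirror values for the other). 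Each gadget's coordinates sum to $0$, hence $\sum_u\alpha_u=0$; after checking $\alpha_u+\alpha_v\ge\wt_N(u,v)$ on every edge and $\alpha_u\ge\wt_N(u,u)$ on every self-loop, $(\vec\alpha,\vec 0)$ is a feasible solution to \ref{LP3} of objective value $0$, which certifies that $N$ is popular.

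The only delicate point, and the step I expect to require the most care, is the boundary edges $(a_i,t_e)$ and $(s_e,b_j)$ that link the $C$- and $D$-gadgets. On $(a_i,t_e)$ the $a_i$-side configuration gives $\wt_N=0$ and $\alpha_{t_e}=-1$, so feasibility forces $\alpha_{a_i}=1$, i.e.\ $i\in U$; symmetrically the $b_j$-side configuration on $(s_e,b_j)$ forces $j\in U$. This is exactly why the configuration of each $D_e$ must be steered toward an endpoint lying in $U$, and I must verify that the rule ``route $D_e$ toward $i$ iff $i\in U$'' never demands $\alpha_{a_i}=1$ for some $i\notin U$; this holds precisely because $U$ covers every edge of $H$. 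All the remaining internal gadget inequalities are routine sign checks.

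For the converse I would pass to a max-weight popular matching. If $G$ admits some popular matching of weight at least $5m_H+4n_H-2k$, then a max-weight popular matching $N$ also satisfies $w(N)\ge 5m_H+4n_H-2k$. By Claim~\ref{clm:redn4} both $s_e$ and $t_e$ are matched in $N$ for every $e\in E_H$; since (from the proof of Lemma~\ref{lem:redn3}) no popular edge leaves a $D_e$-gadget, each $D_e$ realizes one of its two weight-$5$ configurations, contributing $5m_H$ in total. By Lemma~\ref{lem:redn3} each $C_i$ contributes $4$ when $\alpha_u=0$ on $C_i$ and $2$ when $\alpha_u=\pm1$ on $C_i$, the latter exactly for $i\in U_N$. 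Hence $w(N)=5m_H+4n_H-2|U_N|$, and the weight bound gives $|U_N|\le k$. Since $U_N$ is a vertex cover of $H$ by Lemma~\ref{lem:redn5}, the graph $H$ admits a vertex cover of size at most $k$, hence one of size exactly $k$ (padding arbitrarily, as $k\le n_H$), completing the equivalence; together with the $\mathsf{NP}$-hardness of vertex cover this yields Theorem~\ref{thm:hard}.
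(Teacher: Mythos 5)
Your proposal is correct and follows essentially the same route as the paper: the identical gadget-by-gadget construction with the same $\{0,\pm1\}$ witness for the forward direction (including the key observation that the routing of each $D_e$ toward a covered endpoint is what makes the boundary constraints on $(a_i,t_e)$ and $(s_e,b_j)$ feasible), and the same converse via a max-weight popular matching, Claim~\ref{clm:redn4}, Lemma~\ref{lem:redn3}, and Lemma~\ref{lem:redn5}. No gaps to report.
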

\begin{proof}
  Let $U$ be a vertex cover of size $k$ in $H$. Using $U$, we will construct a matching $M$ in $G$ of weight $5m_H + 4n_H - 2k$
and a witness  $\vec{\alpha}$ to $M$'s popularity as follows. For every $i \in [n_H]$:
  \begin{itemize}
  \item if $i \in U$ then include edges $(a_i,b'_i)$ and $(a'_i,b_i)$ in $M$; set
    $\alpha_{a_i} = \alpha_{b_i} = 1$ and set $\alpha_{a'_i} = \alpha_{b'_i} = -1$.
  \item else include edges $(a_i,b_i)$ and $(a'_i,b'_i)$ in $M$ and set $\alpha_u = 0$ for all 
$u \in C_i$.
  \end{itemize}

  For every $e = (i,j) \in E_H$, where $i < j$, we do as follows:
  \begin{itemize}
  \item if $i \in U$ then include the edges $(s_e,t'_e),(s'_e,t_e)$, and $(s''_e,t''_e)$ in $M$; 
set $\alpha_{s_e} = \alpha_{t_e} = \alpha_{t''_e} = -1$
and $\alpha_{s'_e} = \alpha_{t'_e} = \alpha_{s''_e} = 1$.
  \item else (so $j \in U$) include the edges $(s_e,t''_e),(s'_e,t'_e)$, and $(s''_e,t_e)$ in $M$; 
set $\alpha_{s_e} = \alpha_{t_e} = \alpha_{s'_e} = -1$
and $\alpha_{t'_e} = \alpha_{s''_e} = \alpha_{t''_e} = 1$.
  \end{itemize}

  Set $\alpha_{a_0} = \alpha_{b_0} = 0$.
  It can be checked that $\alpha_u + \alpha_v = 0$ for every $(u,v) \in M$, hence
  $\sum_{u \in A\cup B}\alpha_u = 0$. Also $\alpha_u + \alpha_v \ge \wt_M(u,v)$
  for every edge $(u,v)$ in $G$ and $\alpha_u \ge \wt_M(u,u)$ for all $u \in A \cup B$. 
  Thus  $\vec{\alpha}$ is a witness to $M$'s popularity.
  
  We will now calculate $w(M)$. The sum of edge weights in $M$ from vertices in $D_e$ is 5, so this adds up to $5m_H$
  for all $e \in E_H$. For $i \in U$, the sum of edge weights in $M$ from vertices in $C_i$ is 2, so this adds
  up to $2k$ over all $i \in U_H$. For $j \notin U$, the sum of edge weights in $M$ from vertices in $C_j$ is 4, so this adds
  adds up to $4(n_H - k)$ over all $j \notin U_H$. Thus $f(M) = 5m_H + 4n_H - 2k$.

  \smallskip

  We will now show the converse. Suppose $G$ has a popular matching of weight at least 
  $5m_H + 4n_H - 2k$. Let $N$ be a max-weight popular matching in $G$.
  So $w(N) \ge 5m_H + 4n_H - 2k$.
  Let $\vec{\alpha}$ be a witness of $N$'s popularity and let $U_N = \{i \subseteq[n_H]: \alpha_u = \pm 1\, \forall u \in C_i\}$.
  We know from Lemma~\ref{lem:redn5} that $U_N$ is a vertex cover of~$H$. We will now show that 
  $|U_N| \le k$. 
  
  We know from Claim~\ref{clm:redn4} that for every $e \in E_H$, both $s_e$ and $t_e$ have to be 
  matched in~$N$. So each gadget $D_e$ contributes a weight
  of 5 towards $w(N)$. Each gadget $C_i$, for $i \in U_N$, contributes a weight of 2 towards 
  $w(N)$ while each gadget $C_j$, for $j \notin U_N$, contributes a weight of 4 towards $w(N)$.
  Hence $w(N) = 5m_H + 2|U_N| + 4(n_H - |U_N|)$. Since $5m_H + 2|U_N| + 4(n_H - |U_N|) \ge 5m_H + 4n_H - 2k$,
  we get $|U_N| \le k$. Thus $H$ has a vertex cover of size $k$. \qed
\end{proof}    

Theorem~\ref{thm:hard} stated in Section~\ref{intro} now follows.
It is easy to see that Theorem~\ref{thm:sec3} holds even if ``popular matching of weight at least 
$5m_H + 4n_H - 2k$'' is replaced by ``{\em max-size} popular matching of weight at least 
$5m_H + 4n_H - 2k$''. Thus the problem of computing a max-size
popular matching problem in $G = (A \cup B,E)$ of largest weight is also $\mathsf{NP}$-hard.

\section{Max-weight popular matchings: exact and approximate solutions}
\label{section6}
Let $G = (A \cup B, E)$ be an instance with a weight function $w: E \rightarrow \mathbb{R}_{\ge 0}$.
We will now show that a popular matching in $G$ of weight at least $\opt/2$  can be computed in polynomial time, where
$\opt = w(M^*)$ and $M^*$ is a max-weight popular matching in $G$.
The following decomposition theorem for any popular matching $M$ in $G = (A \cup B, E)$ is known.

\begin{theorem}[\cite{CK16}]
$M$ can be partitioned into $M_0 \cup M_1$ such that $M_0 \subseteq S$ and $M_1 \subseteq D$, 
where $S$ is a stable matching and $D$ is a (strongly) dominant matching in $G$.
\end{theorem}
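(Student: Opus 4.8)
The plan is to extract the decomposition straight from a witness of $M$, using that a stable matching certifies with the all-zero witness while a dominant matching certifies with a $\{0,\pm1\}$ witness that is nonzero exactly on its matched vertices. First I would invoke Lemma~\ref{lem:witness} to fix a witness $\vec\alpha\in\{0,\pm1\}^n$ of $M$. For every $(a,b)\in M$ complementary slackness gives $\alpha_a+\alpha_b=\wt_M(a,b)=0$, so each matched edge has either $\alpha_a=\alpha_b=0$ or $\{\alpha_a,\alpha_b\}=\{1,-1\}$; and every vertex left unmatched by $M$ has $\alpha=0$ by part~2 of Lemma~\ref{prop0}. Setting $R=\{u:\alpha_u=1\}$, $L=\{u:\alpha_u=-1\}$ and $V^0=\{u:\alpha_u=0\}$, I would define $M_1=M\cap(L\times R)$ and $M_0=M\cap(V^0\times V^0)$, which by the above yield a partition $M=M_0\cup M_1$.

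Next I would certify each piece on its own vertex set. Every vertex of $L\cup R$ is matched by $M$ and, since matched edges do not cross parities, matched inside $L\cup R$; thus $M_1$ is a perfect matching of $G[L\cup R]$ lying in $L\times R$. A blocking edge $(u,v)$ of $M$ forces $\alpha_u+\alpha_v\ge\wt_M(u,v)=2$, hence $u,v\in R$, while any edge inside $L$ satisfies $\wt_M\le\alpha_u+\alpha_v=-2$ and so is negative. Therefore $(L,R)$ verifies Definition~\ref{def:strong-dominant} for $M_1$, i.e.\ $M_1$ is strongly dominant (equivalently dominant) in $G[L\cup R]$. Symmetrically, because all blocking edges of $M$ lie in $R\times R$, no edge of $G[V^0]$ blocks $M_0$, so $M_0$ is stable in $G[V^0]$.

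It remains to lift these local matchings to a stable matching $S$ and a dominant matching $D$ of the whole graph with $M_0\subseteq S$ and $M_1\subseteq D$. I would take $S=M_0\cup S_1$ for a stable matching $S_1$ of $G[L\cup R]$ and $D=M_1\cup D_0$ for a dominant matching $D_0$ of $G[V^0]$, and prove stability/popularity by gluing witnesses: keep $\vec\alpha$ on $L\cup R$ and append $0^{|V^0|}$ for $S$, respectively the dominant witness of $D_0$ for $D$. Every dual constraint internal to one block is inherited from the corresponding local certificate, so the only inequalities to re-examine are those on cross edges between $V^0$ and $L\cup R$. The witness already controls these on the $M$-side: a cross edge to $R$ has $\alpha_a+\alpha_b=1$, forcing $\wt_M\le0$ (non-blocking), whereas a cross edge to $L$ has $\wt_M=-2$, so it is negative and, in particular, its $V^0$-endpoint is matched by $M_0$.

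The hard part will be transferring these $M$-side facts to $S$ and $D$, whose endpoints on one side of a cross edge have been rematched. On the dominant side I would use the rural-hospitals property for max-size popular matchings, which guarantees that any vertex matched by some popular matching of $G[V^0]$---in particular by $M_0$---is matched by $D_0$, so the $V^0$-endpoints of cross edges to $L$ stay matched; combined with the fact that their $L$-endpoint keeps its $M$-partner, these edges cannot block $D$. The residual delicate point is ensuring every such dual inequality stays tight enough (e.g.\ that a rematched $V^0$-endpoint still prefers its new partner), which forces a careful choice of $S_1$ and $D_0$ at the boundary; the cleanest way to discharge it is to bypass the gluing altogether and route $D$ through the \cite{CK16} correspondence that turns dominant matchings of $G$ into stable matchings of an augmented bipartite graph, where a single stable-matching computation produces a dominant matching containing $M_1$.
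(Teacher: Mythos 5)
The paper does not actually prove this statement: it is imported verbatim from \cite{CK16}, and the only content the paper adds is the sentence following the theorem, which fixes a witness $\vec{\alpha}\in\{0,\pm 1\}^n$ and defines $M_0$ (resp.\ $M_1$) as the edges of $M$ on vertices with $\alpha_u=0$ (resp.\ $\alpha_u=\pm 1$). Your decomposition is exactly this one, and your ``local'' verifications are sound: complementary slackness forces $\alpha_a+\alpha_b=0$ on matched edges, blocking edges of $M$ must lie in $R\times R$, edges inside $L$ are negative, so $M_1$ with the partition $(L,R)$ satisfies Definition~\ref{def:strong-dominant} in the induced subgraph on $L\cup R$, and $M_0$ is stable on the zero-parity vertices.

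The gap is that the theorem asserts $M_0\subseteq S$ and $M_1\subseteq D$ for a stable matching $S$ and a dominant matching $D$ of \emph{all of} $G$, and this extension step is the entire content of the result; your proposal does not close it. The gluing $S=M_0\cup S_1$ and $D=M_1\cup D_0$ fails as stated: for $S$, a cross edge $(u,v)$ with $\alpha_u=0$ and $v\in L\cup R$ is only known to be non-blocking with respect to $v$'s partner in $M$, and an arbitrary stable matching $S_1$ of the subgraph on $L\cup R$ may assign $v$ a partner it likes less, creating a blocking pair; for $D$, popularity and dominance are global conditions requiring a witness over all of $G$, and the rural-hospitals argument only controls \emph{which} zero-parity vertices $D_0$ matches, not \emph{to whom}, which is what the dual inequalities on cross edges need. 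You identify both obstructions yourself, but your final move --- routing $D$ through the \cite{CK16} correspondence so that ``a single stable-matching computation produces a dominant matching containing $M_1$'' --- is precisely an appeal to the nontrivial claim being proved (that the stable matching of the augmented graph can be forced to contain the preimage of $M_1$), and for $S$ no repair is offered at all. So the proposal establishes the easy half and leaves the hard half as a citation, which is no more than the paper itself does.
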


Any popular matching $M$ has a witness $\vec{\alpha} \in \{0,\pm 1\}^n$: let $M_0$ (similarly, $M_1$) be the set of edges of $M$ 
on vertices $u$ with $\alpha_u = 0$ (resp., $\alpha_u = \pm 1$). Since $\alpha_a + \alpha_b = 0$ for all $(a,b)\in M$, the matching
$M$ is a disjoint union of $M_0$ and $M_1$. 
If $M$ is a max-weight popular matching in $G$, then one of $M_0,M_1$ has weight at least $w(M)/2$.

What the above result from \cite{CK16} shows is that $M_0$ can be extended to a stable matching in $G$ and $M_1$ can be extended
to a dominant matching in $G$. Consider the following algorithm.

\begin{enumerate}
\item Compute a max-weight stable matching $S^*$ in $G$.
\item Compute a max-weight dominant matching $D^*$ in $G$.
\item Return the matching in $\{S^*, D^*\}$ with larger weight.
\end{enumerate}

Since all edge weights are non-negative, either the max-weight stable matching in $G$ or the max-weight dominant matching in $G$
has weight at least $w(M^*)/2 = \opt/2$. 
Thus Steps~1-3 compute a 2-approximation for max-weight popular matching in $G = (A \cup B,E)$.

Regarding the implementation of this algorithm, both $S^*$ and $D^*$ can be computed in polynomial time~\cite{Rot92,CK16}.
We also show descriptions of the stable matching polytope and the dominant matching polytope below.
Thus the above algorithm runs in polynomial time.

\subsection{A fast exponential time algorithm}
We will now show an algorithm to compute a max-weight popular matching in $G = (A \cup B,E)$ with $w: E \rightarrow \mathbb{R}$.
We will use an extended formulation from \cite{KMN09} of the  popular fractional matching polytope ${\cal P}_G$. This is obtained 
by generalizing LP2 from Section~\ref{prelims} to all fractional matchings $\vec{x}$ in $G$, i.e.,  $\sum_{e \in \tilde{E}(u)}x_e = 1$ 
for all vertices $u$ and $x_e \ge 0$ for all $e \in \tilde{E}$. So $\wt_x(a,b)$, which is the sum of votes of $a$ and $b$ for each other 
over their respective assignments in $\vec{x}$, replaces $\wt_M(a,b)$.
Thus $\wt_x(a,b) = \sum_{b':b'\prec_a b}x_{ab'} - \sum_{b':b'\succ_a b}x_{ab'} + \sum_{a':a'\prec_b a}x_{a'b} - \sum_{a':a'\succ_b a}x_{a'b}$.

\smallskip

Recall the subgraph $F_G = (A \cup B, E_F)$ where $E_F$ is the set of popular edges in $G$. The set $E_F$ can be efficiently computed 
as follows: call an edge {\em stable} if it belongs to some stable matching in $G$. 
It was shown in \cite{CK16} that  every popular 
edge in $G = (A \cup B, E)$ is a stable edge either in $G$ or in a larger bipartite graph $G'$ on $O(n)$ vertices and $O(m+n)$ edges. 
All stable edges in $G$ and in $G'$ can be identified in linear time~\cite{GI89}.

Let ${\cal C} = \{\Comp_1,\ldots,\Comp_h\}$ be the set of connected components in $F_G$ and
let $\ell$ be the number of non-singleton sets in ${\cal C}$. 
Let us assume that components $\Comp_1,\ldots,\Comp_{\ell}$ have size at least 2.
For any popular matching $M$ with a witness $\vec{\alpha} \in\{0,\pm 1\}^n$, Lemma~\ref{prop1} 
tells us that $\vec{\alpha}$ defines a natural function (we will call this function $\alpha$) from 
${\cal C}$ to $\{0,1\}$, where $\alpha(\Comp_i) = 0$ if $\alpha_u = 0$
for all $u \in \Comp_i$ and $\alpha(\Comp_i) = 1$ if $\alpha_u = \pm 1$ for all $u \in \Comp_i$.
Note that $\alpha(\Comp_i) = 0$ for $\ell+1 \le i \le h$.

\begin{definition}
  \label{def:polytope-Pw}
For $\vec{r} = (r_1,\ldots,r_{\ell}) \in \{0,1\}^{\ell}$, let ${\cal P}(\vec{r})$ be the convex hull  of all popular 
matchings $M$ in $G$ such that $M$ has a witness $\vec{\alpha}$ with $\alpha(\Comp_i) = r_i$ for all $1 \le i \le \ell$. 
\end{definition}

${\cal P}(\vec{0})$ is the stable matching polytope and ${\cal P}(\vec{1})$ is the dominant matching polytope.
We will now show an extended formulation of ${\cal P}(\vec{r})$ for any $\vec{r} \in  \{0,1\}^{\ell}$. First augment $\vec{r}$
with $r_{\ell+1} = \cdots = r_h = 0$ so that $\vec{r} \in \{0,1\}^h$.

\smallskip

\noindent{\bf Our constraints.} 
Let $U$ be the set of all unstable vertices in $\cup_i\Comp_i$ where $i \in \{1,\ldots,h\}$ is such that $r_i = 0$. 
Let $A' = A \setminus U$ and $B' = B \setminus U$. 
It follows from Lemma~\ref{prop1} that all vertices in $U$ are left unmatched in any popular 
matching $M$ in ${\cal P}(\vec{r})$ while all vertices in $A' \cup B'$ are matched in $M$ (recall
that all stable vertices of $G$ are matched in $M$).
Consider constraints~(\ref{new-constr1})-(\ref{new-constr4}) given below.
\begin{eqnarray}
  \alpha_a + \alpha_b \ & \ge & \ \wt_x(a,b) + |r_i - r_j| \ \ \ \ \forall (a,b) \in E \cap (\Comp_i \times \Comp_j), \ 1 \le i, j \le h \label{new-constr1}\\
  \alpha_a + \alpha_b \ & = & \ \wt_x(a,b) \ \ \ \  \ \ \ \  \ \ \ \ \ \ \ \ \ \ \forall (a,b) \in E_F\label{new-constr2}\\ 
  \sum_{u \in A\cup B}\alpha_u & = & \ 0 \ \ \ \ \ \ \ \ \ \ \ \ \ \ \ \ \ \text{and} \ \ \ \ \ -r_i \, \le \, \alpha_u \, \le \, r_i \ \ \ \forall u \in \Comp_i,\ 1 \le i \le h \label{new-constr3}\\
   x_{(u,u)} & = & 1 \ \ \ \forall u \in U \ \ \ \ \ \,\text{and} \ \ \ \ \  x_{(u,u)} \ = \ 0 \ \ \ \forall u \in A'\cup B'  \label{newer-constr3}\\
  \sum_{e \in \tilde{E}(u)}x_e & = & \ 1  \ \ \forall u \in A\cup B,  \ \ \ \ x_e \ge 0 \ \ \forall e \in E_F, \ \ \ \text{and}\ \ \ x_e = 0 \ \ \forall e \in E\setminus E_F\label{new-constr4}
\end{eqnarray}

The variables in the constraints above are $x_e$ for $e \in E$ and $\alpha_u$ for $u \in A \cup B$. 
Constraint~(\ref{new-constr1}) tightens the edge covering constraint $\alpha_a + \alpha_b \ge \wt_x(a,b)$ for edges in 
$\Comp_i \times \Comp_j$ with $r_i \ne r_j$. Consider any popular matching $M$ with witness $\vec{\alpha}$ such that
$\alpha(\Comp_i) \ne \alpha(\Comp_j)$. So $M$ and $\vec{\alpha}$ satisfy
$\alpha_a + \alpha_b \ge \wt_M(a,b)$. Since $\alpha_a + \alpha_b \in \{\pm 1\}$ while $\wt_M(a,b)\in \{\pm 2,0\}$, 
$M$ and  $\vec{\alpha}$ have to satisfy $\alpha_a + \alpha_b \ge \wt_M(a,b) + 1 =  \wt_M(a,b) + |r_i - r_j|$.

Constraint~(\ref{new-constr2}) makes the edge covering constraint {\em tight} for all popular edges $(a,b)$.
This is because for any popular matching $M$ and witness $\vec{\alpha}$, we have $\alpha_a + \alpha_b = \wt_M(a,b)$ for any
popular edge $(a,b)$ (see the proof of Lemma~\ref{prop0}). 

Constraint~(\ref{new-constr4}) is clearly satisfied by any
popular matching $M$ and any witness $\vec{\alpha}$ satisfies $\sum_{u \in A\cup B}\alpha_u = 0$.
The other constraints in (\ref{new-constr3}) and (\ref{newer-constr3}) are consequences of the parity $r_i$ of the 
component $\Comp_i$ that a vertex belongs to. We will prove the following theorem in Section~\ref{appendix-c}.

\begin{theorem}
\label{thm:polytope}
Constraints~(\ref{new-constr1})-(\ref{new-constr4}) define an extended formulation of the polytope ${\cal P}(\vec{r})$.
\end{theorem}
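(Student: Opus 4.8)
The plan is to establish the set equality $\mathrm{proj}_x(P) = {\cal P}(\vec r)$, where $P$ is the polytope cut out by constraints~(\ref{new-constr1})--(\ref{new-constr4}) in $(\vec x,\vec\alpha)$-space and $\mathrm{proj}_x$ is the projection onto the $\vec x$-coordinates. The inclusion $\mathrm{proj}_x(P)\supseteq{\cal P}(\vec r)$ is the routine one: for a popular matching $M$ that generates ${\cal P}(\vec r)$, carrying a witness $\vec\alpha\in\{0,\pm1\}^n$ with $\alpha(\Comp_i)=r_i$, I would check that the pair $(\vec x^M,\vec\alpha)$ satisfies (\ref{new-constr1})--(\ref{new-constr4}). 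The discussion following (\ref{new-constr4}) already does exactly this, using Lemma~\ref{prop0} for the tightness in (\ref{new-constr2}) and the fact that $\alpha_a+\alpha_b\in\{\pm1\}$ across components of differing parity to absorb the $|r_i-r_j|$ term in (\ref{new-constr1}). Since $P$ is convex, the entire convex hull ${\cal P}(\vec r)$ lifts into $P$.

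The substance lies in the reverse inclusion, which amounts to showing that $P$ is integral in $\vec x$. Because every popular edge lies inside a single component of $F_G$, constraint~(\ref{new-constr4}) splits $\mathrm{supp}(\vec x)$ across $\Comp_1,\dots,\Comp_h$, and for a popular edge $(a,b)$ inside $\Comp_i$ the value $\wt_x(a,b)$ depends only on intra-component coordinates of $\vec x$. This lets me treat each component in isolation. On a component with $r_i=0$ the bounds $-r_i\le\alpha_u\le r_i$ force $\vec\alpha\equiv 0$ there, so (\ref{new-constr2}) becomes $\wt_x(a,b)=0$ on popular edges and (\ref{new-constr1}) becomes $\wt_x(a,b)\le 0$ elsewhere; these are precisely the inequalities of the \emph{fractional stable matching polytope} of $\Comp_i$, integral by the median construction of Teo and Sethuraman~\cite{TS98}. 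On a component with $r_i=1$ the same constraints cut out the fractional \emph{dominant} matching polytope, and via the correspondence of~\cite{CK16} identifying dominant matchings with stable matchings in an auxiliary bipartite graph $G'_i$, this is a coordinate projection of a stable matching polytope and is therefore integral as well.

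To conclude that an extreme $\vec x$ is integral and corresponds to a popular matching of pattern $\vec r$, I then reconcile the \emph{global} witness constraints. From tightness~(\ref{new-constr2}), the self-loop values fixed in (\ref{newer-constr3}), and $\alpha_u=0$ on the unstable set $U$, I would derive $\sum_{u\in A\cup B}\alpha_u=\sum_{(a,b)\in E_F}x_{ab}\,\wt_x(a,b)$, so the requirement $\sum_u\alpha_u=0$ in (\ref{new-constr3}) is met by the balanced per-component witnesses ($0$ on $r_i=0$ components, $\pm1$ on $r_i=1$ components). The cross-component inequalities~(\ref{new-constr1}) carry slack $|r_i-r_j|=1$ whenever $r_i\ne r_j$, which exactly absorbs the gap between $\alpha_a+\alpha_b\in\{\pm1\}$ and $\wt_x(a,b)\in\{0,\pm2\}$, and hence impose nothing further once each component is integral. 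Assembling the per-component stable and dominant matchings yields a single popular matching admitting a witness of pattern $\vec r$, giving $\vec x\in{\cal P}(\vec r)$.

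I expect the main obstacle to be the $r_i=1$ case: verifying that (\ref{new-constr1})--(\ref{new-constr3}) really cut out the fractional dominant matching polytope rather than a strictly larger relaxation, and that the~\cite{CK16} reduction projects integrally. Equivalently, this is where one must confirm that the tightening by $\vec r$ eliminates exactly the genuinely half-integral vertices of the popular fractional matching polytope ${\cal P}_G$ (half-integral by~\cite{HK17}), so that no fractional vertex survives the component-wise reduction and the $\alpha$-coupling across components never forces a fractional $\vec x$; this is the delicate step requiring the new ideas alluded to in the techniques section.
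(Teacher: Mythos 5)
Your easy direction matches the paper's. The reverse direction, however, has a genuine gap at the reassembly step. The paper does not prove integrality component by component and then glue: it runs a \emph{single global} sweep-line over one table $T$ whose rows $T_u$ are ordered differently according to the parity $r_i$ of $u$'s component (TS98-style ordering when $r_i=0$, the HK17 swap of positive/negative sub-arrays when $r_i=1$), and the matching $M_t$ is read off at a common parameter $t$ for \emph{all} components simultaneously. The point of this coupling is that the resulting integral matchings must be popular in all of $G$, which requires $\alpha^t_a+\alpha^t_b\ge\wt_{M_t}(a,b)$ also for the non-popular cross-component edges $(a,b)\in\Comp_i\times\Comp_j$ with $r_i\ne r_j$. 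That verification is Lemma~\ref{lemma:mat3}, and it is the technical heart of the proof: it converts the tightened constraint~(\ref{new-constr1}) into the inequality $q_a \ge \sum_{b'\prec_a b}x_{ab'}+\sum_{a'\prec_b a}x_{a'b}$ and uses the specific alignment of the red/blue sub-arrays of $T_a$ against $T_b$ to show the sweep never lands on a position where the edge $(a,b)$ is uncovered.

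Your proposal replaces this with the claim that the cross-component inequalities ``impose nothing further once each component is integral.'' That is exactly backwards. Per-component integrality only gives you that each restriction $\vec x|_{\Comp_i}$ is a convex combination of integral matchings of $\Comp_i$; it does not tell you how to couple these per-component decompositions so that every assembled global matching satisfies the covering constraints on edges between components of different parity. An arbitrary (say, product) coupling fails: a matching that is stable inside $\Comp_i$ and dominant inside $\Comp_j$ need not be popular in $G$, because a cross edge $(a,b)$ can have $\wt(a,b)=2$ while the only available witness values give $\alpha_a+\alpha_b=1$. The admissible couplings are precisely those produced by the common sweep parameter, and proving that this coupling works is where constraint~(\ref{new-constr1}) is actually consumed --- it is a condition you must \emph{verify} for each assembled matching, not slack you may discard. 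Your final paragraph flags the $r_i=1$ case as the delicate one, but the genuinely delicate step is this cross-component covering argument, which your outline does not supply. (A secondary, fixable inaccuracy: the paper handles the $r_i=1$ components via the HK17 reordering argument directly, not via the CK16 reduction to stable matchings in an auxiliary graph.)
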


Thus a max-weight popular matching in  ${\cal P}(\vec{r})$ can be computed in polynomial time and hence a max-weight popular matching
in $G$ can be computed in $O^*(2^{\ell})$ time by going through all $\vec{r} \in \{0,1\}^{\ell}$. Recall that $\ell$ 
is the number of components in $F_G$ of size at least 2. Since $\ell \le n/2$, this is an $O^*(2^{n/2})$ algorithm for computing
a max-weight popular matching.

\medskip

\noindent{\bf A faster exponential time algorithm.}
We will show in Section~\ref{appendix-d} that it is enough to go through all $\vec{r} \in \{0,1\}^k$, where $k$ is the number of 
components in $F_G$ of size at least 4. We do this by introducing a new variable $p_i$, where $0 \le p_i \le 1$, to replace $r_i$ 
and represent the parity of $\Comp_i$, for each $\Comp_i \in {\cal C}$ of size~2. 

We will show that the resulting polytope 
is an extended formulation of
the convex hull of all popular matchings $M$ in $G$ such that $M$ has a witness $\vec{\alpha}$ with 
$\alpha(\Comp_i) = r_i$ for $1 \le i \le k$. This yields an $O^*(2^k)$ algorithm for a max-weight popular matching 
in $G=(A\cup B,E)$. Since $k \le n/4$, this proves Theorem~\ref{thm:exp-alg} stated in Section~\ref{intro}.
Also, when $k = O(\log n)$, we have a polynomial time algorithm to compute a max-weight popular matching in $G$.

\subsection{Proof of Theorem~\ref{thm:polytope}}
\label{appendix-c}
Let ${\cal Q}'(\vec{r}) \subseteq \mathbb{R}^{m+n}$ be the polytope defined by constraints~(\ref{new-constr1})-(\ref{new-constr4}),
where $|E| = m$.
Let ${\cal Q}(\vec{r})$ denote the polytope ${\cal Q}'(\vec{r})$ projected on to its first 
$m$ coordinates (those corresponding to $e \in E$).
Our goal is to show that ${\cal Q}(\vec{r})$ is the polytope ${\cal P}(\vec{r})$. 

We will first show that every popular matching $M$ with at least one witness $\vec{\alpha} \in \{0,\pm 1\}^n$ such that 
$\alpha(\Comp_i) = r_i$ for $i = 1,\ldots,\ell$ belongs to ${\cal Q}(\vec{r})$. It is easy to see that $M$ and $\vec{\alpha}$ satisfy
constraints~(\ref{new-constr2})-(\ref{new-constr4}).
Regarding constraint~(\ref{new-constr1}), for any edge $(a,b) \in \Comp_i\times\Comp_j$ such that
$r_i \ne r_j$,  we have $\alpha_a + \alpha_b \in \{\pm 1\}$ while $\wt_M(a,b)\in \{\pm 2,0\}$.
Thus $\alpha_a + \alpha_b \ge \wt_M(a,b)$ implies that $\alpha_a + \alpha_b \ge \wt_M(a,b) + 1$.
Hence $(M, \vec{\alpha}) \in {\cal Q}'(\vec{r})$, i.e., $M \in {\cal Q}(\vec{r})$. Thus ${\cal P}(\vec{r}) \subseteq {\cal Q}(\vec{r})$.

\smallskip

Let $(\vec{x},\vec{\alpha}) \in {\cal Q}'(\vec{r})$. We will now use the techniques and results from \cite{TS98,HK17} to show that
$\vec{x}$ is a convex combination of some popular matchings in ${\cal P}(\vec{r})$. 
This will prove ${\cal Q}(\vec{r}) = {\cal P}(\vec{r})$. 

The polytope ${\cal P}(\vec{0})$ is the stable matching polytope and a simple proof of integrality of Rothblum's 
formulation~\cite{Rot92} of this polytope was given in~\cite{TS98}. 
When $G$ admits a perfect stable matching, the polytope ${\cal P}(\vec{1})$ is the same as ${\cal P}_G$ and a proof of integrality 
of ${\cal P}_G$ in this case was given in \cite{HK17}.
Note that $r_1,\ldots,r_h$ are all 0  in \cite{TS98} and $r_1,\ldots,r_h$ are all 1 for this result in \cite{HK17}, i.e., the
integrality of ${\cal P}_G$ when $G$ admits a perfect stable matching.

We will build a table $T$ of width~1 (as done in \cite{TS98,HK17}) with $n' = |A'\cup B'|$ rows: one corresponding to 
each vertex $u \in A' \cup B'$. The row corresponding to $u$ will be called $T_u$. 

Form an array $X_u$ of length 1 as follows: if $x_{(u,v)} > 0$ then there is a cell of length $x_{(u,v)}$ in $X_u$ with entry $v$ 
in it. The entries in $X_u$ will be sorted in {\em increasing order} of preference for $u \in A'$ and in {\em decreasing order}  
of preference for $u \in B'$. This was the order used in \cite{TS98}. If $u \in \Comp_i$ where 
$i \in \{1,\ldots,\ell\}$ with $r_i = 1$, then reorder $X_u$ as done in \cite{HK17}. 

For any $a \in A'$ that belongs to such a component $\Comp_i$,
the initial or least preferred $(1+ \alpha_a)/2$ fraction of $X_a$ will be called the {\em positive} or {\em blue}
sub-array of $X_a$ and the remaining part, which is the most preferred $(1-\alpha_a)/2$ fraction of $X_a$,
will be called the {\em negative} or {\em red} sub-array of $X_a$. The array $X_a$ will be reordered as shown in 
Fig.~\ref{old-fig:example}, i.e., the positive and negative sub-arrays of $X_a$ are swapped.
Call the reordered array $T_a$.

\begin{figure}[h]
\centerline{\resizebox{0.72\textwidth}{!}{\input{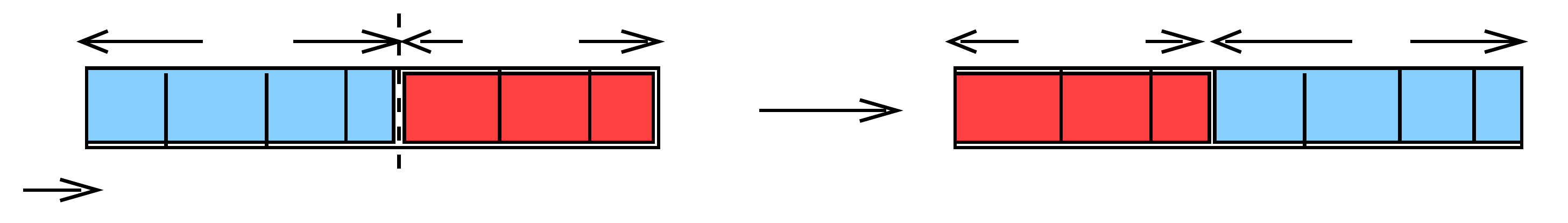_t}}}
\caption{The array $X_a$ (on the left) will be reordered by swapping the positive and negative sub-arrays as shown above. The value 
$q_a = (1+\alpha_a)/2$, so $1 - q_a = (1-\alpha_a)/2$.}
\label{old-fig:example}
\end{figure}

A similar transformation from $X_b$ to $T_b$ was shown in \cite{HK17} for each $b \in B'$ that 
belongs to a component $\Comp_i$ where $i \in \{1,\ldots,\ell\}$ and $r_i = 1$.
The initial or most preferred $(1 - \alpha_b)/2$ fraction of $X_b$ will be called the {\em negative}
sub-array of $X_b$ and the remaining part, which is the 
least preferred $(1+\alpha_b)/2$ fraction of $X_b$, will be called the {\em positive} sub-array of $X_b$. 
As before, swap the positive and negative sub-arrays of $X_b$ and call this reordered array $T_b$.

If $u \in A'\cup B'$ is in a component $\Comp_i$ with $r_i=0$, then we leave $X_u$ 
as it is. That is, we set $T_u = X_u$   (see Fig.~\ref{fig:example}).
There are no positive or negative sub-arrays in $T_u$.

\vspace*{-0.2cm}

\begin{figure}[h]
\centerline{\resizebox{0.3\textwidth}{!}{\input{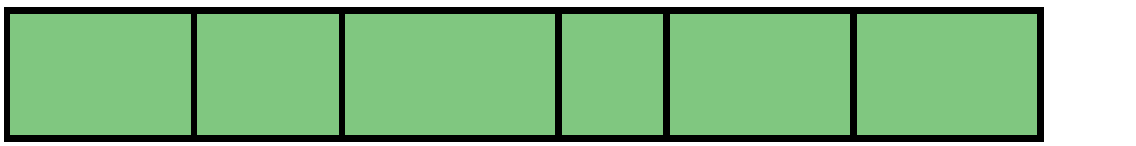_t}}}
  \caption{The entries in $T_u$ are sorted in increasing order of preference for $u \in A$ and in
    decreasing order of preference for $u \in B$.}
\label{fig:example}
\end{figure}

\noindent{\bf Finding the popular matchings that $\vec{x}$ is a convex combination of.}
Let $T$ be the table with rows $T_u$, for $u \in A' \cup B'$.
For any $t \in [0,1)$, define the matching $M_t$ as follows:
\begin{itemize}
\item let $L(t)$ be the vertical line at distance $t$ from the left boundary of $T$;
\item $L(t)$ intersects (or touches the left boundary of) some cell in $T_u$, call this cell $T_u[t]$, for each $u \in A' \cup B'$;
  \begin{equation}
    \label{last-eqn}
    \text{define}\ M_t = \{(u,v): u\in A' \cup B'\ \text{and}\ v\ \text{is\ in\ cell}\ T_u[t]\}.
  \end{equation}  
\end{itemize}

\noindent{\bf Validity of $M_t$.}
We need to prove that $M_t$ is a valid matching in $G$. That is, for any vertex $u \in A'\cup B'$, we need to show that if 
$v$ belongs to cell $T_u[t]$, then $u$ belongs to cell $T_v[t]$. Note that both $u$ and $v$ have to belong to the
same component in the subgraph $F_G$ since $x_{uv} = 0$ otherwise (since $x_e = 0$ for $e \in E \setminus E_F$).
Let $\Comp_i$ be the connected component in $F$ containing $u$ and $v$. There are 2 cases here: (i)~$r_i = 1$ and (ii)~$r_i = 0$.

The proof in case~(i) follows directly from Theorem~3.2 in \cite{HK17}.
The proof in case~(ii) is given in the proof of Theorem~1 in \cite{TS98}. Both these proofs are based 
on the ``tightness'' of the edge $(u,v)$, i.e., $\alpha_u + \alpha_v = \wt_x(u,v)$ in case~(i) and 
$\wt_x(u,v) = 0$ in case~(ii). The tightness of $(u,v)$ holds for us as well: 
since $x_{uv} > 0$, $(u,v) \in E_F$, 
so $\alpha_u + \alpha_v = \wt_x(u,v)$ by constraint~(\ref{new-constr2}) in
case~(i); and in case~(ii), we have $\alpha_u = \alpha_v = 0$ by constraint~(\ref{new-constr3})
and so $\wt_x(u,v) = 0$ by constraint~(\ref{new-constr2}).

\medskip

\noindent{\bf Popularity of $M_t$.}
We will now show that $M_t$ is popular.
Define a vector $\vec{\alpha}^t \in \{0,\pm 1\}^n$:
\begin{itemize}
\item For $1 \le i \le h$: if $r_i = 0$ then set $\alpha^t_u = 0$ for each $u \in \Comp_i$;
else for each $u \in \Comp_i$
\smallskip
\begin{itemize}
\item[$\ast$] if the cell $T_u[t]$ is {\em positive} (or blue) then set $\alpha^t_u = 1$,
else set $\alpha^t_u = -1$. 
\end{itemize}
\end{itemize}

We will now show that $\vec{\alpha}^t$ is a witness of $M_t$.
Thus $M_t$ will be a popular matching in $G$, in fact, $M_t$ will be in ${\cal P}(\vec{r})$. 
This is because by our assignment of $\alpha^t$-values above, $\alpha^t(\Comp_i) = r_i$ for 
$i \in \{1,\ldots,h\}$. Our first claim is that $\sum_{u \in A \cup B}\alpha^t_u = 0$.

To prove the above claim, observe that for every vertex $u$ that is left unmatched in $M_t$ (all these vertices 
belong to $U$), we have $\alpha^t_u = 0$.  We now show that for every $(a,b)$ in $M_t$, we have $\alpha^t_a + \alpha^t_b = 0$. Since 
$(a,b) \in E_F$, the vertices $a$ and $b$ belong to the same component $\Comp_i$. 
If $r_i = 0$ then $\alpha^t_a = \alpha^t_b = 0$ and hence $\alpha^t_a + \alpha^t_b = 0$. 
If $r_i = 1$ then $\alpha^t_a, \alpha^t_b \in \{\pm 1\}$ and the proof
that $\alpha^t_a + \alpha^t_b = 0$ was shown in \cite{HK17} (see Corollary~3.1). 
Thus $\sum_{u \in A \cup B}\alpha^t_u = 0$.

We will now show that $\alpha^t_u \ge \wt_{M_t}(u,u)$ for all $u \in A\cup B$. Every vertex $u \in A' \cup B'$ 
is matched in $M_t$ and so $\wt_{M_t}(u,u) = -1$. Since $\alpha^t_u \ge -1$ for all 
$u \in A'\cup B'$, we have $\alpha^t_u \ge \wt_{M_t}(u,u)$ for these vertices.
For $u \in U$, we have $\alpha^t_u = 0 = \wt_{M_t}(u,u)$.

What is left to show is that $\alpha^t_a + \alpha^t_b \ge \wt_{M_t}(a,b)$ for all $(a,b) \in E$. 
Lemma~\ref{lemma:mat3} below shows this. Hence we can conclude that $M_t$ is a popular matching in $G$,
in particular, $M_t \in {\cal P}(\vec{r})$.

It is now easy to show that $\vec{x}$ is a convex combination of matchings that belong to $\mathcal{P}(\vec{r})$.
To obtain these matchings, as done in \cite{TS98}, sweep a vertical line from the left boundary
of table $T$ to its right boundary: whenever the line hits the left wall of one or more new cells,
a new matching is obtained. If the left wall of the $i$-th leftmost cell(s) in the table $T$ is at distance
$t_i$ from the left boundary of $T$, then we obtain the matching $M_{t_i}$ defined analogous to $M_t$ in (\ref{last-eqn}). 

Let $M_0, M_{t_{1}},\ldots,M_{t_{d-1}}$ be all the matchings obtained by sweeping a vertical line 
through the table $T$. This means that $\vec{x} = t_1\cdot M_0 + (t_2-t_1)\cdot M_{t_1} + \cdots + (1-t_{d-1})\cdot M_{t_{d-1}}$.
Thus $\vec{x}$ is a convex combination of  matchings in $\mathcal{P}(\vec{r})$. 
This finishes the proof that ${\cal Q}(\vec{r}) = {\cal P}(\vec{r})$.

\begin{lemma}
  \label{lemma:mat3}
  For any $(a,b) \in E$, we have $\alpha^t_a + \alpha^t_b \ge \wt_{M_t}(a,b)$.
\end{lemma}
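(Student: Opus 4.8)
The plan is to certify that $\vec\alpha^t$ is a witness of $M_t$ by checking the edge constraint $\alpha^t_a+\alpha^t_b\ge\wt_{M_t}(a,b)$ one edge at a time, converting the \emph{integral} vote at the cut position $t$ into the \emph{fractional} data of $(\vec x,\vec\alpha)$. For a matched vertex $u\in A'\cup B'$ write $\pi_u(t)\in[0,1)$ for the position inside the unreordered array $X_u$ of the entry in the cell $T_u[t]$, and set $g_a(t)=2\pi_a(t)-1$ for $a\in A'$ and $g_b(t)=1-2\pi_b(t)$ for $b\in B'$; let $f_a(b)=\sum_{b'\prec_a b}x_{ab'}-\sum_{b'\succ_a b}x_{ab'}$ be $a$'s fractional vote for $b$ (so $\wt_x(a,b)=f_a(b)+f_b(a)$ and, since $X_a$ is sorted by preference, $\vote_a(b,M_t(a))=+1$ exactly when $g_a(t)<f_a(b)-x_{ab}$, and $=0$ exactly when $(a,b)\in M_t$). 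The first step is the per-endpoint estimate
\[
f_a(b)-g_a(t)\ >\ \vote_a(b,M_t(a))-1 ,
\]
proved by inspecting the three values of the vote and using only $g_a(t)<f_a(b)+x_{ab}\le f_a(b)+1$ and $g_a(t)<1$; the lone equality case $g_a(t)=1$ sits on a measure-zero set of cuts and is absorbed by the left-wall convention already fixed in the sweep.

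The second step extracts the structural identity hidden in the swap of Figures~\ref{old-fig:example}--\ref{fig:example}. Unwinding the reordering shows $g_a(t)=\alpha_a+(2t-1)-\alpha^t_a$ for $a\in A'$ and $g_b(t)=\alpha_b-(2t-1)-\alpha^t_b$ for $b\in B'$, so for any edge $(a,b)$ with both endpoints matched the $t$-terms cancel and
\[
\alpha^t_a+\alpha^t_b\ =\ (\alpha_a+\alpha_b)-\bigl(g_a(t)+g_b(t)\bigr).
\]
Feeding in constraint~(\ref{new-constr1}) for $(a,b)\in\Comp_i\times\Comp_j$ (which also covers $i=j$, and is tight on popular edges by~(\ref{new-constr2})) together with the per-endpoint estimate gives the single master inequality
\[
\alpha^t_a+\alpha^t_b\ \ge\ \bigl(f_a(b)-g_a(t)\bigr)+\bigl(f_b(a)-g_b(t)\bigr)+|r_i-r_j|\ >\ \wt_{M_t}(a,b)-2+|r_i-r_j| .
\]

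The final, and I expect the most delicate, step is to round this strict inequality over the reals up to the required inequality between integers; this is exactly where the tightening $|r_i-r_j|$ pays off. The key auxiliary fact is that $\wt_{M_t}(a,b)$ is always \emph{even}: a vote vanishes only when $(a,b)\in M_t$, which (as $M_t$ is a genuine matching) forces both votes to vanish, and otherwise both votes lie in $\{\pm1\}$. When $r_i=r_j$ the sum $\alpha^t_a+\alpha^t_b$ has the same parity as $\wt_{M_t}(a,b)$---both even if $r_i=r_j=1$, while $\alpha^t_a+\alpha^t_b=0$ if $r_i=r_j=0$---so ``$>\wt_{M_t}(a,b)-2$'' already forces ``$\ge\wt_{M_t}(a,b)$''; when $r_i\ne r_j$ the sum $\alpha^t_a+\alpha^t_b$ is \emph{odd}, the bound upgrades to ``$>\wt_{M_t}(a,b)-1$'', and an odd integer exceeding the even number $\wt_{M_t}(a,b)-1$ is again $\ge\wt_{M_t}(a,b)$. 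The remaining obstacle is bookkeeping at the boundary between the two regimes, i.e.\ edges meeting an unmatched vertex $a\in U$; here $x_{(a,a)}=1$ by~(\ref{newer-constr3}) so $a$'s self-loop, being least preferred, makes $f_a(b)=1$, and constraint~(\ref{new-constr1}) then forces $f_b(a)=-1$ (whence $\vote_b(a,M_t(b))=-1$) and $\alpha^t_b\ge0$, so $\alpha^t_a+\alpha^t_b\ge0=\wt_{M_t}(a,b)$; two unmatched vertices are never adjacent, since such an edge would block the stable matching. Assembling the three cases proves the lemma.
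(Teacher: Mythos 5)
Your proof is correct, but it is a genuinely different argument from the one in the paper. The paper splits into cases by the pair $(r_i,r_j)$: for $r_i=r_j$ it simply delegates to Lemma~3.5 of \cite{HK17} and Theorem~1 of \cite{TS98}, and only the mixed case $r_i\ne r_j$ is argued directly, via the geometric picture of Fig.~\ref{new-fig:stable1} (the inequality $q_a\ge\sum_{b'\prec_a b}x_{ab'}+\sum_{a'\prec_b a}x_{a'b}$ confining the ``bad'' sub-arrays). You instead prove a single self-contained master inequality for all edges with both endpoints in $A'\cup B'$, resting on the exact identity $g_a(t)=\alpha_a+(2t-1)-\alpha^t_a$, $g_b(t)=\alpha_b-(2t-1)-\alpha^t_b$ (which I checked in all sub-cases, including $r=0$ and the degenerate $q_u\in\{0,1\}$), so that $\alpha^t_a+\alpha^t_b=(\alpha_a+\alpha_b)-(g_a(t)+g_b(t))$, followed by a strict fractional estimate and an integrality/parity rounding in which the tightening term $|r_i-r_j|$ exactly compensates the parity mismatch. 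What this buys is uniformity and self-containedness: the $r_i=r_j$ cases come for free rather than by citation, and the role of constraint~(\ref{new-constr1}) is made transparent. What the paper's route buys is that it isolates the only genuinely new case and reuses known machinery verbatim.

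Two small points need tightening. First, your per-endpoint estimate is \emph{strict} only for the $A'$ endpoint: with the left-wall (half-open cell) convention one gets $g_a(t)<f_a(b)-x_{ab}$ etc.\ strictly, but on the $B'$ side the reversed sort order yields only $g_b(t)\le f_b(a)-x_{ab}$, so equality can occur there (e.g.\ $\pi_b(t)=0$); your ``measure-zero'' remark points at the wrong endpoint and is not the right fix, since the decomposition uses precisely the boundary cuts $t_i$. The proof survives for \emph{every} $t\in[0,1)$ because each edge has exactly one endpoint in $A'$, whose estimate is always strict, so the \emph{sum} is strict and the parity rounding goes through. Second, in the case $a\in U$ with $r_j=1$, the assertion $\alpha^t_b\ge 0$ should be justified: the constraints force $\alpha_b=1$, hence the negative sub-array of $T_b$ has length $0$ and $T_b[t]$ is blue for all $t$. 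Both are one-line repairs.
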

\begin{proof}
Let $a \in \Comp_i$ and $b \in \Comp_j$, where $1 \le i,j \le h$.
The proof that $\alpha^t_a + \alpha^t_b \ge \wt_{M_t}(a,b)$ when $r_i = r_j$ follows from \cite{HK17,TS98}. 
When $r_i = r_j = 1$, it was shown in \cite{HK17} (see Lemma~3.5) that
$\alpha^t_a + \alpha^t_b \ge \wt_{M_t}(a,b)$.
When $r_i = r_j = 0$, it was shown in \cite{TS98} (see Theorem~1)
that $\wt_{M_t}(a,b) \le \alpha^t_a + \alpha^t_b = 0$. 
We need to show $\alpha^t_a + \alpha^t_b \ge \wt_{M_t}(a,b)$ when $r_i \ne r_j$. 
Assume without loss of generality that $r_i = 1$ and $r_j = 0$.
So $\alpha_b = 0$. 

The constraint corresponding to edge $(a,b)$ is 
$\alpha_a + \alpha_b \ge \wt_x(a,b) + |r_i-r_j|$ which simplifies to $\alpha_a \ge \wt_x(a,b) + 1$. 
If $b \in U$ then $\wt_x(a,b) \ge 0$ since $b$ prefers $a$ to itself. So $\alpha_a \ge \wt_x(a,b) + 1$ along with $\alpha_a \le 1$ 
(by constraint~(\ref{new-constr3})) implies that $\alpha_a = 1$ and $\wt_x(a,b) = 0$. This means that $\alpha^t_a = 1$ and 
$a$ prefers all the entries in the array $T_a$ to $b$. 
Hence $\wt_{M_t}(a,b) = 0$ and thus we have $\alpha^t_a + \alpha^t_b = 1+0 \ge \wt_{M_t}(a,b)$.

Hence let us assume that $b \in B'$. Since $\sum_{e \in \tilde{E}(u)}x_e = 1$ for all $u$, $\wt_x(a,b)$ equals
\[\sum_{b' \prec_a b}x_{ab'} - \sum_{b' \succ_a b}x_{ab'} + \sum_{a' \prec_b a}x_{a'b} - \sum_{a' \succ_b a}x_{a'b} \ \ = \ \ 2\left(\sum_{b' \prec_a b}x_{ab'} + \sum_{a' \prec_b a}x_{a'b} + x_{ab} - 1\right).\] 
Here $x_{ab} = 0$ since $a$ and $b$ belong to distinct components in $F_G$. So the  constraint $\alpha_a \ge \wt_x(a,b) + 1$ 
simplifies to:
\[ 2q_a - 1 \ \ \ \ge \ \ \ 2\left(\sum_{b' \prec_a b}x_{ab'} + \sum_{a' \prec_b a}x_{a'b} -1\right) +1, \ \text{where}\ \alpha_a = 2q_a - 1.\]
This becomes $q_a \ge \sum_{b' \prec_a b}x_{ab'} + \sum_{a' \prec_b a}x_{a'b}$. 
See Fig.~\ref{new-fig:stable1}.
\begin{figure}[h]
\centerline{\resizebox{0.34\textwidth}{!}{\input{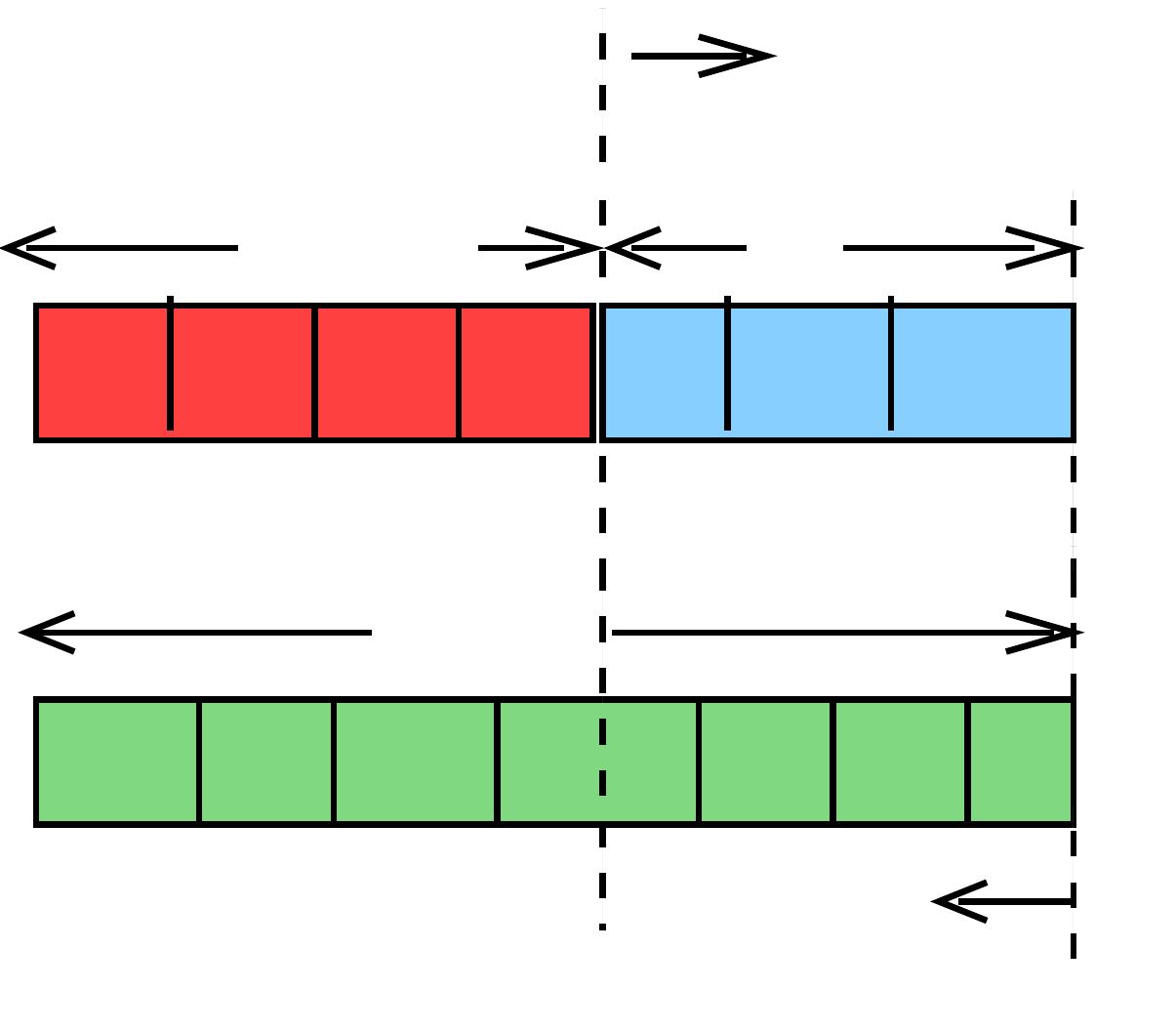_t}}}
\caption{The cells where either $a$ or $b$ gets matched to a neighbor worse than the other is restricted to the blue sub-array of $T_a$ and the sub-array of $T_b$ exactly below this (between the two dashed vertical lines). The sum of lengths of such cells added up over both $T_a$ and $T_b$ is at most $q_a$.}
\label{new-fig:stable1}
\end{figure}

The neighbors with which $a$ gets paired in $\vec{x}$ start in increasing order of $a$'s 
preference from the dashed line separating the red sub-array and blue sub-array in $T_a$ 
and this wraps around in left to right orientation (see Fig.~\ref{new-fig:stable1}). 
For $b$, the neighbors with which $b$ gets paired in $\vec{x}$ start in increasing order 
of $b$'s preference from the right end of its array $T_b$ and this order is from {\em right to left}.
Thus $b$ is matched in $\vec{x}$ to its worst neighbor at the right end of $T_b$
and to its best neighbor at the left end of $T_b$.

Since $\sum_{b' \prec_a b}x_{ab'} + \sum_{a' \prec_b a}x_{a'b} \le q_a$, this implies that the subarray 
where either $a$ is matched to a worse neighbor than $b$ (this subarray has length 
$\sum_{b' \prec_a b}x_{ab'}$) or $b$ is matched to a worse neighbor than $a$ (this subarray has length 
$\sum_{a' \prec_b a}x_{a'b}$) is confined to within the dashed line separating the red and blue 
subarrays of $T_a$ and the rightmost wall of $T_b$ (see Fig.~\ref{new-fig:stable1}). 
Also, there is no cell where {\em both} $a$ and $b$ are matched to worse neighbors than each other 
as this would make $\sum_{b' \prec_a b}x_{ab'} + \sum_{a' \prec_b a}x_{a'b} > q_a$. 

If $T_a[t]$ is {\em positive} or blue, then $\alpha^t_a = 1$. We have $\wt_{M_t}(a,b) \le 0$ here
since when one of $a,b$ is getting matched to a worse neighbor than the other, the other 
is getting matched to a better neighbor.
Since $\alpha^t_b = 0$ in the entire array, we have $\alpha^t_a + \alpha^t_b  = 1 > \wt_{M_t}(a,b)$.

If $T_a[t]$ is {\em negative} or red, then $T_a[t]$ contains a neighbor that $a$ prefers to $b$ and
similarly, $T_b[t]$ contains a neighbor that $b$ prefers to $a$. That is, $\wt_{M_t}(a,b) = -2$ 
and here we have $\alpha^t_a = -1$ and $\alpha^t_b = 0$. Thus we have 
$\alpha^t_a + \alpha^t_b = -1 > \wt_{M_t}(a,b)$. This shows that the edge $(a,b)$ is always 
covered by the sum of $\alpha^t$-values of $a$ and $b$. \qed
\end{proof} 

\subsection{A faster exponential time algorithm}
\label{appendix-d}
Recall the popular subgraph $F_G = (A \cup B, E_F)$ and the set ${\cal C} = \{\Comp_1,\ldots,\Comp_h\}$ of connected components in $F_G$.
Assume without loss of generality that $\Comp_1,\ldots,\Comp_k$ have size at least 4 and
$\Comp_{k+1},\ldots,\Comp_{\ell}$ have size 2. Let us define the polytope ${\cal P}'(\vec{r})$ as follows.

\begin{definition}
  \label{def:new-polytope-Pw}
For $\vec{r} = (r_1,\ldots,r_k) \in \{0,1\}^k$, let ${\cal P}'(\vec{r})$ be the convex hull  of all popular 
matchings $M$ in $G$ such that $M$ has a witness $\vec{\alpha}$ with $\alpha(\Comp_i) = r_i$ for all $1 \le i \le k$. 
\end{definition}

Consider constraints~(\ref{new-constr1})-(\ref{new-constr4}) from Section~\ref{section6}. Regarding constraint~(\ref{new-constr1}),
$r_1,\ldots,r_k$ are the coordinates in the vector $\vec{r} \in \{0,1\}^k$ while $r_{\ell+1} = \cdots = r_{h} = 0$ as before, and 
$r_i = p_i$ for $k+1 \le i \le \ell$, where $p_{k+1},\ldots,p_{\ell}$ are variables. So in constraint~(\ref{new-constr1}):
\begin{itemize}
\item $|r_i - r_j| \in \{0,1\}$ for $i, j \in \{1,\ldots,k\} \cup \{\ell+1,\ldots,h\}$.
\item when one of $i,j$ (say, $i$) is in $\{k+1,\ldots,\ell\}$ and $j \in \{1,\ldots,k\}\cup\{\ell+1,\ldots,h\}$
  then $|r_i - r_j| = 1-p_i$ if $r_j = 1$ and $|r_i - r_j| = p_i$ if $r_j = 0$.
\item when both $i,j$ are in  $\{k+1,\ldots,\ell\}$,
we replace constraint~(\ref{new-constr1}) with two constraints: one where $|r_i - r_j|$ is 
replaced by $p_i-p_j$ and another where  $|r_i - r_j|$ is replaced by $p_j-p_i$. 
\end{itemize}

Similarly, in constraint~(\ref{new-constr3}), $-r_i \le \alpha_u \le r_i$ now becomes  $-p_i \le \alpha_u \le p_i$,
for $u \in \Comp_i$ where $i \in \{k+1,\ldots,\ell\}$.
Constraints~(\ref{new-constr2}), (\ref{newer-constr3}), 
and (\ref{new-constr4}) remain the same as before. 
Also, the sets $U, A', B'$ are the same as before since all in $\Comp_{k+1}\cup\cdots\cup\Comp_{\ell}$ are stable vertices. 
We will show the following theorem here.
The proof of Theorem~\ref{thm:new-polytope} will follow the same outline as the proof of Theorem~\ref{thm:polytope}.

\begin{theorem}
\label{thm:new-polytope}
The revised constraints~(\ref{new-constr1})-(\ref{new-constr4}) along with the constraints $0 \le p_i \le 1$ for $k+1 \le i \le \ell$ 
define an extended formulation of the polytope ${\cal P}'(\vec{r})$.
\end{theorem}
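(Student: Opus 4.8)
The plan is to follow the proof of Theorem~\ref{thm:polytope} almost verbatim, the only new feature being the treatment of the size-2 components $\Comp_{k+1},\ldots,\Comp_{\ell}$. Write ${\cal Q}'(\vec r)$ for the polytope in the extended variables $(\vec x,\vec\alpha,\vec p)$ cut out by the revised constraints~(\ref{new-constr1})--(\ref{new-constr4}) together with $0 \le p_i \le 1$ for $k+1 \le i \le \ell$, and let ${\cal Q}(\vec r)$ be its projection onto the $\vec x$-coordinates; the goal is ${\cal Q}(\vec r) = {\cal P}'(\vec r)$. The inclusion ${\cal P}'(\vec r) \subseteq {\cal Q}(\vec r)$ is immediate: given a popular matching $M$ with a witness $\vec\alpha \in \{0,\pm 1\}^n$ for which $\alpha(\Comp_i) = r_i$ for $1 \le i \le k$, set $p_i := \alpha(\Comp_i) \in \{0,1\}$ for each size-2 component $\Comp_i$. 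As every $p_i$ is then integral, $\max(p_i-p_j,\, p_j-p_i) = |p_i-p_j|$, so the two inequalities that replace~(\ref{new-constr1}) for a pair of size-2 components reduce to the single integral constraint, $-p_i \le \alpha_u \le p_i$ in~(\ref{new-constr3}) holds, and the remaining constraints are checked exactly as in Theorem~\ref{thm:polytope}. Thus $(M,\vec\alpha,\vec p) \in {\cal Q}'(\vec r)$ and $M \in {\cal Q}(\vec r)$.

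For the reverse inclusion I would take $(\vec x,\vec\alpha,\vec p) \in {\cal Q}'(\vec r)$ and decompose $\vec x$ into integral popular matchings using the width-1 table $T$ of \cite{TS98,HK17}, exactly as in Theorem~\ref{thm:polytope} for the rows of vertices lying in components of size at least $4$ (unchanged if $r_i = 0$, swapped as in Fig.~\ref{old-fig:example} if $r_i = 1$) and for the unstable vertices of $U$. The new ingredient is the row of a size-2 component $\Comp_i = \{a,b\}$. Since $a$ and $b$ are stable and $b$ is the only popular neighbour of $a$, we have $x_{ab} = 1$, so $X_a$ is a single cell of length $1$ carrying the entry $b$; constraint~(\ref{new-constr2}) makes the edge tight, hence $\alpha_a = -\alpha_b$ with $|\alpha_a| \le p_i$. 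I would split this cell into three consecutive sub-arrays, all carrying $b$: a negative (red) part of length $(p_i-\alpha_a)/2$, a neutral part of length $1-p_i$, and a positive (blue) part of length $(p_i+\alpha_a)/2$ (all non-negative, summing to $1$, precisely because $|\alpha_a| \le p_i$), and I would subdivide $T_b$ dually so that, position by position, a blue cell of $T_a$ faces a red cell of $T_b$, neutral faces neutral, and red faces blue.

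For each sweep position $t$ one then defines $M_t$ as in~(\ref{last-eqn}) and $\vec\alpha^t \in \{0,\pm 1\}^n$ by $\alpha^t_u = 0$ on a neutral cell, $+1$ on a blue cell and $-1$ on a red cell of a size-2 component (and as in Theorem~\ref{thm:polytope} on the larger components). Validity of $M_t$ is trivial on a size-2 component, since it always contributes $(a,b)$, and follows from \cite{HK17,TS98} elsewhere; the dual subdivision forces $\alpha^t_a + \alpha^t_b = 0$ on every size-2 edge, so $\sum_u \alpha^t_u = 0$ and $\alpha^t_u \ge \wt_{M_t}(u,u)$ hold as before. What remains, and what I expect to be the main obstacle, is the edge-covering inequality $\alpha^t_a + \alpha^t_b \ge \wt_{M_t}(a,b)$ for every $(a,b) \in E$ and every $t$, i.e. the analogue of Lemma~\ref{lemma:mat3} with one or both endpoints in a size-2 component. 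I would push through the case analysis of Lemma~\ref{lemma:mat3} using the $p_i$-substituted form of~(\ref{new-constr1}): when the other endpoint has fixed parity, the substitution $1-p_i$ (if that parity is $1$) or $p_i$ (if it is $0$) bounds, exactly as in Fig.~\ref{new-fig:stable1}, the length of the sub-array in which the edge could fail to be covered, confining it to the blue region of the size-2 row, where $\alpha^t = +1$ already dominates $\wt_{M_t}$; when both endpoints lie in size-2 components, the pair of inequalities replacing~(\ref{new-constr1}) supplies whichever of $p_i-p_j$, $p_j-p_i$ is needed for the relative ordering of the two parity regions.

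Granting this, sweeping $L(t)$ across $T$ produces matchings $M_0, M_{t_1},\ldots,M_{t_{d-1}}$, each carrying the witness $\vec\alpha^t$ with $\alpha^t(\Comp_i) = r_i$ for $1 \le i \le k$ and hence lying in ${\cal P}'(\vec r)$, and $\vec x = t_1 M_0 + (t_2-t_1)M_{t_1} + \cdots + (1-t_{d-1})M_{t_{d-1}}$ writes $\vec x$ as a convex combination of such matchings, giving ${\cal Q}(\vec r) = {\cal P}'(\vec r)$. Optimizing $w$ over ${\cal Q}'(\vec r)$ and ranging over the $2^k$ vectors $\vec r \in \{0,1\}^k$ (with $k \le n/4$, since each of the $k$ fixed components has at least $4$ vertices) then yields the $O^*(2^k)$ algorithm and Theorem~\ref{thm:exp-alg}.
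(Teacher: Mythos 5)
Your proposal follows the paper's proof essentially verbatim: the same projection argument for ${\cal P}'(\vec r) \subseteq {\cal Q}(\vec r)$, the same width-$1$ table, and --- for a size-2 component $\Comp_i=\{a,b\}$ --- exactly the paper's three-cell subdivision, since your lengths $(p_i-\alpha_a)/2$, $1-p_i$, $(p_i+\alpha_a)/2$ are precisely the paper's $1-q_a-s_i$, $2s_i$, $q_a-s_i$ with $s_i=(1-p_i)/2$, with the same dual alignment of $T_b$ (using $q_a=q_b$ from $\alpha_a+\alpha_b=\wt_x(a,b)=0$) and the same witness $\vec\alpha^t$. The one part you defer is where most of the remaining work sits: the covering inequalities $\alpha^t_a+\alpha^t_b\ge \wt_{M_t}(a,b)$ for edges with one or both endpoints in a size-2 component, which the paper proves in Lemmas~\ref{lemma:mat4} and~\ref{lemma:mat5} by a multi-case analysis (splitting on $r_j\in\{0,1\}$, on whether the other endpoint lies in $U$, and on whether $a$ prefers its component partner $b^*$ to $b$); your one-sentence outline points at the right mechanism but is imprecise in detail --- e.g.\ in the $r_j=0$ case the constraint $q_a+s_i\ge \sum_{b'\prec_a b}x_{ab'}+\sum_{a'\prec_b a}x_{a'b}$ confines the potentially uncovered region to the blue \emph{and green} cells of $T_a$, where one only gets $\alpha^t_a\ge 0$ rather than $\alpha^t_a=1$ --- so that case analysis still needs to be carried out to complete the argument.
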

\begin{proof}
Let ${\cal S}'(\vec{r})$ be the polytope defined by the revised constraints~(\ref{new-constr1})-(\ref{new-constr4}) 
along with $0 \le p_i \le 1$ for $k+1 \le i \le \ell$.
Let ${\cal S}(\vec{r})$ denote the polytope ${\cal S}'(\vec{r})$ projected on to the coordinates
corresponding to $e \in E$.
We will now show that ${\cal S}(\vec{r})$ is the polytope ${\cal P}'(\vec{r})$. 

It is easy to see that every popular matching $M$ with at least one witness $\vec{\alpha}$ such that $\alpha(\Comp_i) = r_i$ 
for $i = 1,\ldots,\ell$ belongs to ${\cal S}(\vec{r})$. Thus ${\cal P}'(\vec{r}) \subseteq {\cal S}(\vec{r})$.
Let $(\vec{x},\vec{\alpha},\vec{p}) \in {\cal S}'(\vec{r})$. We will now show that
$\vec{x}$ is a convex combination of matchings in ${\cal P'}(\vec{r})$.

As done in the proof of Theorem~\ref{thm:polytope}, we will construct a table $T$ with $n' = |A' \cup B'|$ rows. The rows of vertices
outside $\Comp_{k+1}\cup\cdots\cup\Comp_{\ell}$ will be the same as before. Recall that $|\Comp_i| = 2$ for $k+1 \le i \le \ell$ and 
for any $i \in \{k+1,\ldots,\ell\}$: every vertex $u \in \Comp_i$ is matched to its only neighbor $v \in \Comp_i$ with $x_{uv} = 1$. 
So we do not reorder $X_u$ as it consists of just a single entry, i.e., $T_u = X_u$. However we partition $T_u$ into 
3 cells: (i)~the {\em positive} or blue cell, (ii)~the {\em negative} or red cell, and (iii)~the {\em zero} or green cell.

\begin{figure}[h]
\centerline{\resizebox{0.34\textwidth}{!}{\input{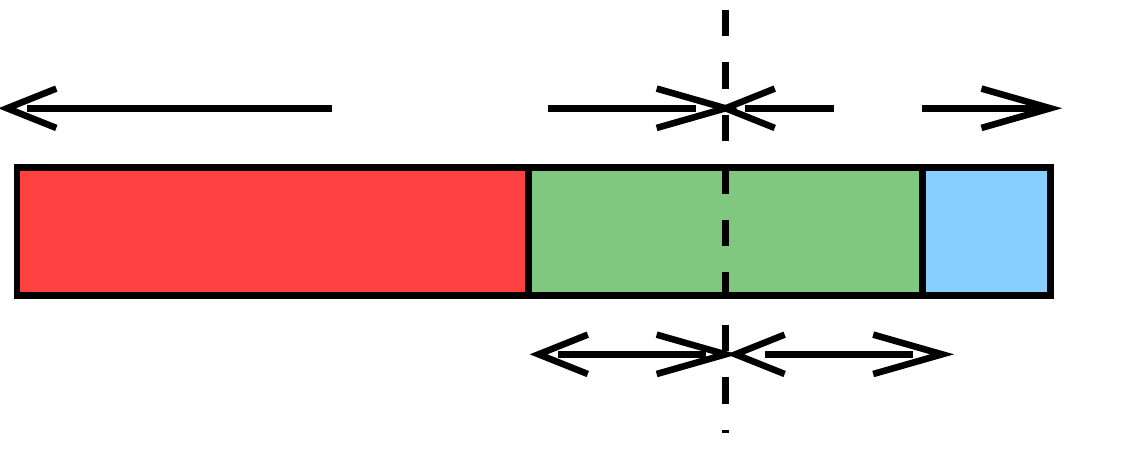_t}}}
\caption{The red part (of length $1-q_a - s_i$) is the negative  cell of $T_a$ and the blue part (of length $q_a - s_i$)
  is the positive cell of $T_a$. The middle or green part (of length $2s_i$) is its zero cell.} 
\label{new-Cj-1:example}
\end{figure}

\begin{itemize}
\item for $a \in \Comp_i$, compute $2s_i = 1-p_i$. Since $p_i \le 1$, we have $s_i \ge 0$. Also $p_i \ge \pm \alpha_u$ since $\alpha_u$ is
sandwiched between $-p_i$ and $p_i$. So $1 - 2s_i \ge \pm \alpha_a$, i.e., $s_i \le q_a$ and $s_i \le 1 - q_a$ where $q_a = (1+\alpha_a)/2$.
The array $T_a$ gets divided into three cells as shown in Fig.~\ref{new-Cj-1:example}. 

\smallskip

The leftmost $1-q_a - s_i$ part of $T_a$ is its ``negative'' (or red) cell and the rightmost $q_a - s_i$
part is its ``positive'' (or blue) cell. The part left in between the positive and negative parts,
which is of length $2s_i$, is its ``zero'' (or green) cell. All 3 cells contain the same vertex $b^*$ where $\Comp_i = \{a,b^*\}$.

\medskip

\item for $b \in \Comp_i$, compute $2s_i = 1-p_i$. Since $\pm \alpha_b \le p_i \le 1$, we have $s_i \ge 0$; also
$s_i \le q_b$ and $s_i \le 1 - q_b$ where $q_b = (1 - \alpha_b)/2$.
  The array $T_b$ gets divided into three cells as shown in Fig.~\ref{new-Cj-2:example}.

\begin{figure}[h]
\centerline{\resizebox{0.34\textwidth}{!}{\input{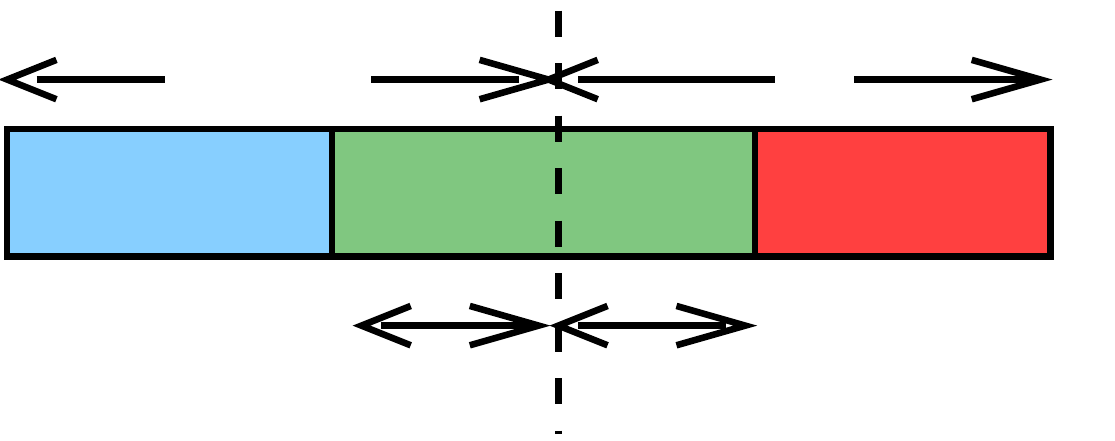_t}}}
\caption{The blue part (of length $1-q_b - s_i$) is the positive cell of $T_b$ and the red part (of length $q_b - s_i$) is the negative cell of $T_b$. The middle or green part (of length $2s_i$) is its zero cell.}
\label{new-Cj-2:example}
\end{figure}
  
The leftmost $1 - q_b - s_i$ part of
  $T_b$ is its ``positive'' (or blue) cell and the rightmost $q_b - s_i$ part is its ``negative'' (or red) cell. 
The part left in between these two parts, which is of length $2s_i$, is its ``zero'' (or green) cell.
All 3 cells contain the same vertex $a^*$ where $\Comp_i = \{a^*,b\}$.
\end{itemize}

In order to find the popular matchings that $\vec{x}$ is a convex combination of, we build the table $T$ and define $M_t$ exactly 
as done in the proof of Theorem~\ref{thm:polytope}. It follows from the same arguments as before that $M_t$ is a matching.
In particular, for $k+1 \le i \le \ell$, we have $|\Comp_i| = 2$, say $\Comp_i = \{u,v\}$, and so $v$ is the only entry in the entire
array $T_u$ and similarly, $u$ is the only entry in the entire array $T_v$. 

We now need to show that $M_t$ is popular.
For this, we define a vector $\vec{\alpha}^t \in \{0,\pm 1\}^n$: for $i \in \{1,\ldots,k\} \cup \{\ell+1,\ldots,h\}$,
the assignment of $\alpha^t_u$-values is exactly the same as in the proof of Theorem~\ref{thm:polytope}.
For $i \in \{k+1,\ldots,\ell\}$ and each $u \in \Comp_i$ do:
\begin{itemize}
\item set $\alpha^t_u = 1$ if the cell $T_u[t]$ is positive (or blue) 
\item set $\alpha^t_u = 0$ if the cell $T_u[t]$ is zero (or green) 
\item set $\alpha^t_u = -1$ if the cell $T_u[t]$ is negative (or red)
\end{itemize}

We will now show that $\sum_{u \in A \cup B}\alpha^t_u = 0$. The only new step is to show that
$\alpha^t_a + \alpha^t_b = 0$ for $(a,b) \in M_t$, where $\Comp_i = \{a,b\}$, i.e., $i \in \{k+1,\ldots,\ell\}$.
Here we have $\alpha_a + \alpha_b = \wt_x(a,b)= 0$ and this is because $x_{ab} = 1$.

So $q_a = (1 + \alpha_a)/2 = (1 - \alpha_b)/2 = q_b$ and this implies the length of the {\em positive} (or blue) 
cell in $T_a$, which is  $q_a - s_i$ (see Fig.~\ref{new-Cj-1:example}), equals the length of the {\em negative} (or red) cell 
in  $T_b$, which is $q_b - s_i$, (see Fig.~\ref{new-Cj-2:example}).
   \begin{itemize}
   \item Hence the positive cell of $T_a$ is perfectly
     aligned with the negative cell of $T_b$. So if $L(t)$ goes through the positive cell of $T_a$, 
   i.e. if $\alpha^t_a = 1$, then $\alpha^t_b = -1$.

    \item Similarly, the negative cell of $T_a$, which is of length 
    $1 - q_a - s_i$, is perfectly aligned with the positive cell of $T_b$, 
    which is of length $1 - q_b - s_i$. So if $L(t)$ goes 
    through the negative cell of $T_a$, i.e. if $\alpha^t_a = -1$, then $\alpha^t_b = 1$.
   
    \item Thus the zero cells in $T_a$ and $T_b$ are perfectly aligned with 
     each other. So if $L(t)$ goes through the zero cell of $T_a$, i.e. if $\alpha^t_a = 0$, 
     then $\alpha^t_b = 0$.
 \end{itemize}
Hence $\alpha^t_a + \alpha^t_b = 0$ for all $(a,b) \in M_t$ and so $\sum_{u \in A \cup B}\alpha^t_u = 0$.

\smallskip

It is easy to see that $\alpha^t_u \ge \wt_{M_t}(u,u)$ for all $u \in A\cup B$. What is left to show is that
$\alpha^t_a + \alpha^t_b \ge \wt_{M_t}(a,b)$ for all $(a,b) \in E$. Let $a \in \Comp_i$ and $b \in \Comp_j$.
When both $i$ and $j$ are in $\{1,\ldots,k\} \cup \{\ell+1,\ldots,h\}$, the proof of Lemma~\ref{lemma:mat3} shows that 
the edge covering constraint holds. Lemmas~\ref{lemma:mat4} and \ref{lemma:mat5} below 
show that the edge covering constraint holds when one or both the indices are in $\{k+1,\ldots,\ell\}$.

This completes the proof that $M_t$ is a popular matching in $G$,
in particular, $M_t \in {\cal P'}(\vec{r})$. The rest of the argument that
$\vec{x}$ is a convex combination of matchings that belong to $\mathcal{P'}(\vec{r})$ is exactly the same as given in the
proof of Theorem~\ref{thm:polytope}. Thus we can conclude that $\mathcal{S}(\vec{r}) = \mathcal{P}'(\vec{r})$.  \qed
\end{proof}

\begin{lemma}
  \label{lemma:mat4}
  Let $(a,b)\in E$ with $a\in\Comp_i$ and $b\in\Comp_j$. Suppose one of $i,j$ is in $\{1,\ldots,k\} \cup \{\ell+1,\ldots,h\}$ 
   and the other is in $\{k+1,\ldots,\ell\}$. Then $\alpha^t_a + \alpha^t_b \ge \wt_{M_t}(a,b)$.
\end{lemma}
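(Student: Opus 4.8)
The plan is to mirror the proof of Lemma~\ref{lemma:mat3}, the new feature being that the size-2 component contributes a row with a middle \emph{zero} cell. By the $A$--$B$ duality of the table construction (entries in increasing order of preference for $A$, decreasing for $B$, with the blue/red swap of Theorem~\ref{thm:polytope} defined dually), it suffices to treat the case where the size-2 vertex is $a$, say $\Comp_i = \{a,b^*\}$ with $x_{ab^*}=1$ and $2s_i = 1-p_i$, and $b \in \Comp_j$ with $j \in \{1,\ldots,k\}\cup\{\ell+1,\ldots,h\}$; the case where the size-2 vertex is $b$ follows by dualizing. Within this I split on $r_j \in \{0,1\}$.

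The first step is to rewrite the relevant instance of constraint~(\ref{new-constr1}) as a length inequality. Since $a$ and $b$ lie in distinct components of $F_G$ we have $x_{ab}=0$, so $\wt_x(a,b) = 2\left(\sum_{b'\prec_a b}x_{ab'} + \sum_{a'\prec_b a}x_{a'b}-1\right)$; write $P_a=\sum_{b'\prec_a b}x_{ab'}$ and $P_b=\sum_{a'\prec_b a}x_{a'b}$ for the total lengths of the sub-arrays on which $a$, respectively $b$, is matched to a neighbour worse than the other endpoint. Substituting $\alpha_a = 2q_a-1$, the $B$-side relation $\alpha_b = 1-2q_b$, and $p_i = 1-2s_i$ into the revised constraint $\alpha_a+\alpha_b \ge \wt_x(a,b)+|r_i-r_j|$ (with $|r_i-r_j| = p_i$ if $r_j=0$ and $1-p_i$ if $r_j=1$) collapses it, after cancellation, to a clean bound of the form $P_a+P_b \le (\text{gap between the }q\text{-values}) \pm s_i$. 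Because $a$ has the single partner $b^*$, $P_a \in \{0,1\}$ and the sign of $\vote_a(b,b^*)$ is fixed, so only the location of $b$'s bad sub-array (length $P_b$) relative to the cell structure remains to be controlled.

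The heart of the argument is that this length inequality is exactly the statement that $b$'s worse-than-$a$ sub-array, once transported to its position in the reordered row $T_b$ (respectively $T_b=X_b$ when $r_j=0$), is contained in — i.e.\ vertically aligned under — the sub-array of $T_a$ carrying the required sign of $\alpha^t_a$. Tracking the blue/red swap as in Theorem~\ref{thm:polytope}, $b$'s worse-than-$a$ neighbours occupy an interval abutting the blue/red boundary of $T_b$, and the derived bound forces its left endpoint to lie to the right of the boundary between the red cell of $T_a$ and the union of its zero and blue cells. I would then finish by checking, for each of the at most three cells $T_a[t]$ can meet, that $\alpha^t_a+\alpha^t_b \ge \wt_{M_t}(a,b)$: when $L(t)$ misses $b$'s bad region the edge is negative or neutral and coverage is immediate from $\alpha^t \ge -1$; when $L(t)$ hits it, the alignment just established supplies the extra $+1$ in $\alpha^t_a$ (or rules out the conflicting red cell of $T_a$ entirely) needed to match the value of $\wt_{M_t}(a,b)$.

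I expect the main obstacle to be the bookkeeping around the middle zero cell of $T_a$, of length $2s_i=1-p_i$: it is precisely this length that must absorb the fractional slack $p_i$ or $1-p_i$ introduced into constraint~(\ref{new-constr1}), and one must verify that when $L(t)$ lands in this cell (so $\alpha^t_a=0$) the edge $(a,b)$ is still covered. In the extremal configurations the derived inequality holds with equality and a zero/red or zero/blue boundary of $T_a$ lines up exactly with an endpoint of $b$'s bad region, so the verification reduces to a half-open-interval boundary check; in one degenerate branch the constraint even forces the red (or blue) cell of $T_a$ to have length $0$, removing the offending cell altogether. Carrying out these boundary alignments, together with dualizing the whole argument for the case when the size-2 vertex is $b$, is the only delicate part — everything else is the substitution and the confinement principle already present in Lemma~\ref{lemma:mat3}. \qed
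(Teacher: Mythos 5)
Your proposal follows essentially the same route as the paper's proof: reduce the tightened constraint~(\ref{new-constr1}) to a length inequality on the bad sub-arrays, exploit that the size-2 row forces $\sum_{b'\prec_a b}x_{ab'}\in\{0,1\}$ to split into the two preference cases, and verify coverage cell-by-cell via the alignment of the blue/zero/red cells of $T_a$ against $b$'s bad region (including the degenerate branch where the red cell of $T_a$ vanishes). The only caution is that your phrase ``coverage is immediate from $\alpha^t\ge -1$'' applies only when $\wt_{M_t}(a,b)=-2$; the neutral case still needs the alignment or degeneracy argument you describe elsewhere, exactly as the paper carries out.
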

\begin{proof}
  Assume without loss of generality that $i \in \{k+1,\ldots,\ell\}$. So $\Comp_i = \{a,b^*\}$.
  Since $a$ and $b$ are in different components in $F_G$, the edge $(a,b) \notin E_F$ and so
  $x_{ab} = 0$. 
  The index $j \in \{1,\ldots,k\}\cup\{\ell+1,\ldots,h\}$, so $r_j$ is 0 or 1.

\smallskip
  
\noindent{\em Case 1.} Suppose $r_j = 0$. So $\alpha_b = 0$. Suppose $b \in U$, i.e., $b$ is unmatched in $\vec{x}$. Then the constraint 
$\alpha_a + \alpha_b  \ge \wt_x(a,b) + |r_i-r_j|$ for $(a,b)$ becomes $\alpha_a \ge p_i$ since $\wt_x(a,b) = 0$. 
We also have 
$\alpha_a \le p_i$ (the revised constraint~(\ref{new-constr3})). Hence $\alpha_a = p_i$ and this means $2s_i = 1 - \alpha_a$, i.e.,
$s_i = (1-\alpha_a)/2 = 1-q_a$ (see Fig.~\ref{new-Cj-1:example}). Then there is {\em no} negative (or red) cell in the entire array $T_a$.
In other words, $\alpha^t_a \ge 0$ throughout the array $T_a$ and so $\alpha^t_a + \alpha^t_b \ge 0 = \wt_{M_t}(a,b)$.

Hence let us assume that $b \in B'$. So $\wt_x(a,b) = 2(\sum_{b' \prec_a b}x_{ab'} + \sum_{a' \prec_b a}x_{a'b} -1)$.
Thus the constraint $\alpha_a + \alpha_b - |r_i-r_j| \ge \wt_x(a,b)$ becomes: (where $\alpha_a = 2q_a-1$)
\begin{eqnarray*}
(2q_a - 1) - p_i & \ \ge \ & 2\left(\sum_{b' \prec_a b}x_{ab'} + \sum_{a' \prec_b a}x_{a'b} -1\right)\\ 
q_a + (1-p_i)/2 & \ \ge \ & \sum_{b' \prec_a b}x_{ab'} + \sum_{a' \prec_b a}x_{a'b}\\
q_a + s_i  & \ \ge \ & \sum_{b' \prec_a b}x_{ab'} + \sum_{a' \prec_b a}x_{a'b} \ \ \ \ \ \ \ (\text{since}\ s_i = (1-p_i)/2)
\end{eqnarray*}

Note that $q_a + s_i$ is the sum of lengths of the positive (or blue) and zero (or green) cells of 
$T_a$ (see Fig.~\ref{new-Cj-3:example}) and so $q_a + s_i \le 1$. In these two 
cells of $T_a$, $\alpha^t_a$ is either 1 or 0.

\begin{figure}[h]
\centerline{\resizebox{0.34\textwidth}{!}{\input{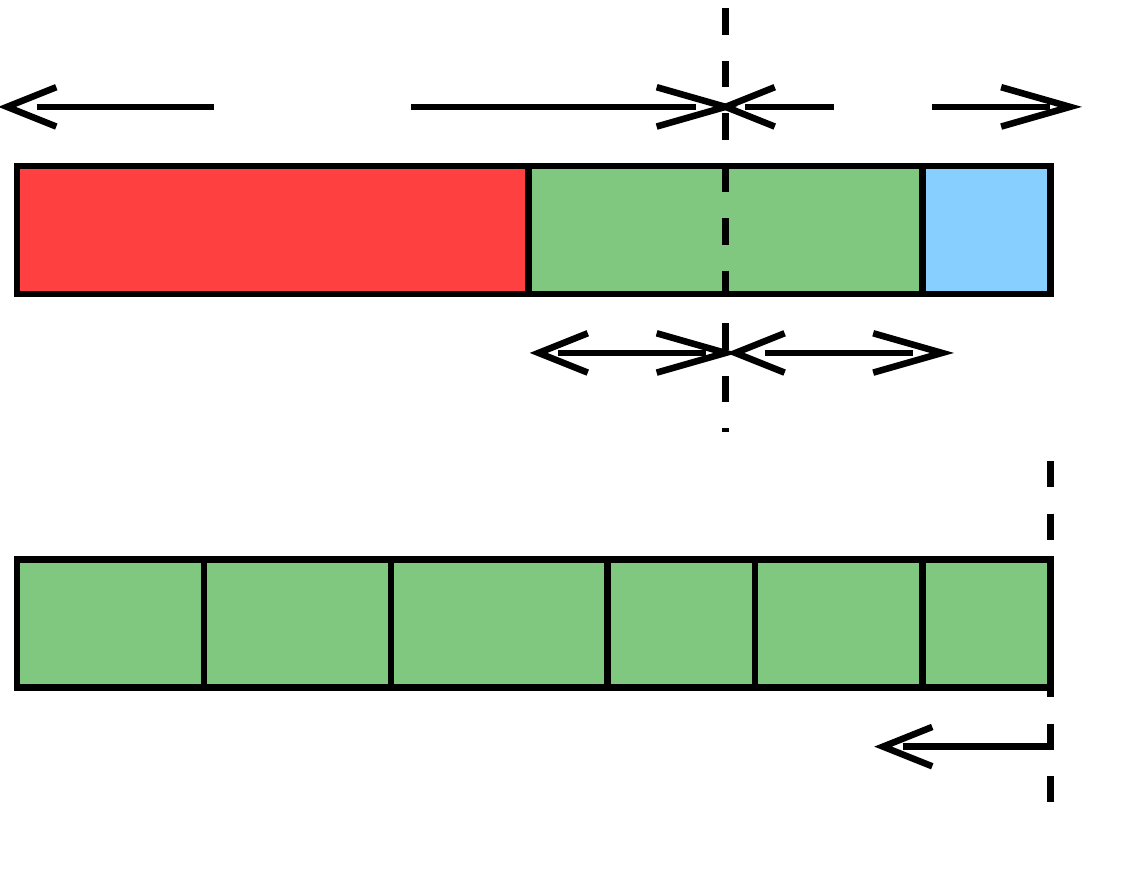_t}}}
\caption{The vertex $a$ is matched to the same partner in the entire array $T_a$ and $b$'s increasing order of
partners in $\vec{x}$ starts from the right end of its array $T_b$.}
\label{new-Cj-3:example}
\end{figure}

Suppose $a$ prefers $b^*$ to $b$. Then $\sum_{b' \prec_a b}x_{ab'} = 0$, so 
$q_a + s_i \ge \sum_{a' \prec_b a}x_{a'b}$. 
So  while $b$ is matched to a neighbor worse than $a$, $\alpha^t_a \ge 0$.
Note that  $\alpha^t_b = 0$ throughout the array $T_b$.  
Thus $\alpha^t_a + \alpha^t_b \ge 0 \ge \wt_{M_t}(a,b)$.

Suppose $a$ prefers $b$ to $b^*$. Then $\sum_{b' \prec_a b}x_{ab'} = 1$ and so $q_a + s_i \ge 1$. 
This means $q_a + s_i = 1$ and so $\sum_{a' \prec_b a}x_{a'b} = 0$. That is, $b$ prefers each of its 
partners in the array $T_b$ to $a$ and so $\wt_{M_t}(a,b) = 0$. Also $q_a + s_i = 1$ implies that there is {\em no} negative (or red) 
cell in $T_a$. Thus $\alpha^t_a \ge 0$ throughout the array $T_a$ and so $\alpha^t_a + \alpha^t_b \ge 0 = \wt_{M_t}(a,b)$.

\smallskip

\noindent{\em Case 2.} Suppose $r_j = 1$. So $\alpha_b = 1 -2q_b$ and the constraint $\alpha_a + \alpha_b - |r_i-r_j| \ge \wt_x(a,b)$ 
becomes:
\begin{eqnarray*}
(2q_a - 1) + (1 - 2q_b) - (1-p_i) & \ \ \ge \ \ & 2\left(\sum_{b' \prec_a b}x_{ab'} + \sum_{a' \prec_b a}x_{a'b} -1\right)\\
q_a - q_b - (1-p_i)/2 & \ \ \ge \ \ & \sum_{b' \prec_a b}x_{ab'} + \sum_{a' \prec_b a}x_{a'b} -1\\
(q_a - s_i) + (1 - q_b) & \ \ \ge \ \ & \sum_{b' \prec_a b}x_{ab'} + \sum_{a' \prec_b a}x_{a'b}.
\end{eqnarray*}

Note that $q_a - s_i$ is the length of the positive (or blue) cell in $T_a$ 
(see Fig.~\ref{new-Cj-4:example}).  
Similarly, $1 - q_b$ is the length of the blue sub-array of $T_b$. 

Suppose $a$ prefers $b^*$ to $b$. Then  $\sum_{b' \prec_a b}x_{ab'} = 0$ and so
$\sum_{a' \prec_b a}x_{a'b} \le (q_a - s_i) + (1 - q_b)$ and this is the sum of lengths of blue 
sub-arrays of $T_a$ and $T_b$. Consider traversing the array $T_b$ starting from the dashed line 
separating its blue sub-array from its red sub-array in a right-to-left orientation that wraps around.
The sum of length of the cells where $b$ is matched to neighbors worse than $a$ is 
$\sum_{a' \prec_b a}x_{a'b}$. This is at most the sum of lengths of blue sub-arrays in $T_a$ and $T_b$. 
Thus while $b$ is matched to a neighbor worse than $a$, we have $\alpha^t_a + \alpha^t_b \ge 0 = \wt_{M_t}(a,b)$
and when $b$ is matched to a neighbor better than $a$, we have $\alpha^t_a + \alpha^t_b \ge -2 = \wt_{M_t}(a,b)$
since $\alpha^t_a \ge -1$ and  $\alpha^t_b \ge -1$.

\begin{figure}[h]
\centerline{\resizebox{0.41\textwidth}{!}{\input{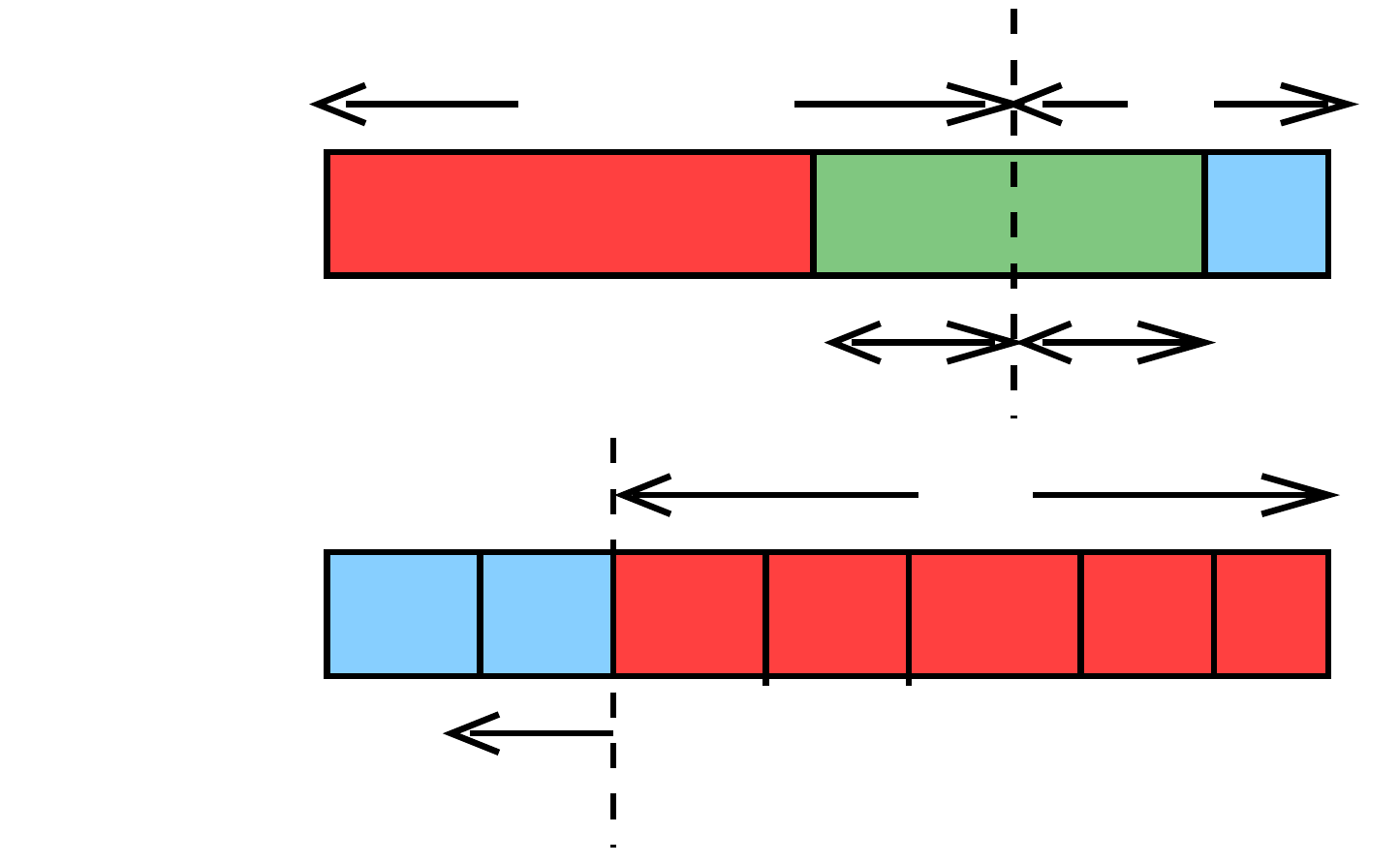_t}}}
\caption{The vertex $a$ is matched to the same partner in the entire array $T_a$ and $b$'s 
increasing order of partners in $T_b$ starts from the dashed line in right to left orientation.} 
\label{new-Cj-4:example}
\end{figure}

Suppose $a$ prefers $b$ to $b^*$. Then  $\sum_{b' \prec_a b}x_{ab'} = 1$ and so
$\sum_{a' \prec_b a}x_{a'b} \le (q_a - s_i) + (1 - q_b) -1$. Since $\sum_{a' \prec_b a}x_{a'b} \ge 0$, 
this means the sum of lengths of  blue sub-arrays in $T_a$ and $T_b$ is at least 1 and 
$\sum_{a' \prec_b a}x_{a'b}$ is bounded by how much $(q_a - s_i) + (1 - q_b)$ exceeds 1. Since the blue
sub-array in $T_b$ begins from its left end and the blue cell in  $T_a$ starts from its right end,
for any $t \in [0,1)$,  at least one of the cells $T_a[t], T_b[t]$
 is blue. Moreover, when $b$  prefers $a$ to its neighbors in $T_b$,
 then {\em both} $T_a[t]$ and $T_b[t]$ are blue. So while $b$ is matched to a neighbor better than $a$,
 we have $\alpha^t_a + \alpha^t_b \ge 0 = \wt_{M_t}(a,b)$ and 
 when $b$ is matched to a neighbor worse than $a$, we have $\alpha^t_a + \alpha^t_b = 2 = \wt_{M_t}(a,b)$. \qed
\end{proof}

\begin{lemma}
  \label{lemma:mat5}
  Let $(a,b)$ be an edge in $\Comp_i \times\Comp_j$, where $i,j \in \{k+1,\ldots,\ell\}$. Then
  $\alpha^t_a + \alpha^t_b \ge \wt_{M_t}(a,b)$.
\end{lemma}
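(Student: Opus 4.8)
The plan is to exploit the fact that both $\Comp_i$ and $\Comp_j$ have size $2$, so each of $a$ and $b$ is matched to a \emph{single} fixed partner throughout its array. Write $\Comp_i = \{a, b^*\}$ and $\Comp_j = \{a^*, b\}$, so that $x_{ab^*} = 1$ and $x_{a^*b} = 1$. If $i = j$ then $(a,b)$ is the unique popular edge of a size-$2$ component, hence $(a,b) \in M_t$ with $\wt_{M_t}(a,b) = 0$, a case already settled while proving $\sum_u \alpha^t_u = 0$; so I would assume $i \ne j$, whence $(a,b) \notin E_F$ and $x_{ab} = 0$. The first observation to record is that $\wt_{M_t}(a,b)$ does not depend on $t$: since $a$ is matched to $b^*$ and $b$ to $a^*$ in every $M_t$, we have $\wt_{M_t}(a,b) = \vote_a(b,b^*) + \vote_b(a,a^*)$, and because $\vec{x}$ assigns all the mass of $a$ (resp.\ $b$) to $b^*$ (resp.\ $a^*$) this same quantity equals $\wt_x(a,b)$. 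Call this common value $W \in \{-2,0,2\}$.

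With this, the edge-covering inequality to be verified becomes a statement about the constant $W$. The relevant constraint for $(a,b)$ is the pair of inequalities replacing (\ref{new-constr1}), which together read $\alpha_a + \alpha_b \ge \wt_x(a,b) + |p_i - p_j| = W + |p_i - p_j|$. I would then split on the value of $W$. When $W = -2$ there is nothing to prove, since $\alpha^t_a, \alpha^t_b \ge -1$ always gives $\alpha^t_a + \alpha^t_b \ge -2 = \wt_{M_t}(a,b)$. When $W = 2$, combining $\alpha_a + \alpha_b \ge 2 + |p_i - p_j|$ with $\alpha_a \le p_i \le 1$ and $\alpha_b \le p_j \le 1$ forces $p_i = p_j = 1$ and $\alpha_a = \alpha_b = 1$; then $s_i = s_j = 0$ and each array is entirely blue, so $\alpha^t_a = \alpha^t_b = 1$ for every $t$ and $\alpha^t_a + \alpha^t_b = 2 = \wt_{M_t}(a,b)$.

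The crux is the case $W = 0$, where the constraint becomes $\alpha_a + \alpha_b \ge |p_i - p_j|$; writing $\alpha_a = 2q_a - 1$, $\alpha_b = 1 - 2q_b$ and $s_i = (1-p_i)/2$, $s_j = (1-p_j)/2$, this simplifies to $q_a - q_b \ge |s_i - s_j|$. I would then read off the cell boundaries from Figures~\ref{new-Cj-1:example} and \ref{new-Cj-2:example}: the red (negative) cell of $T_a$ is the interval $[0, 1 - q_a - s_i)$ and its blue (positive) cell is $[1 - q_a + s_i, 1)$, while the blue cell of $T_b$ is $[0, 1 - q_b - s_j)$ and its red cell is $[1 - q_b + s_j, 1)$. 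The inequality $q_a - q_b \ge s_j - s_i$ is exactly the statement that the red cell of $T_a$ is contained in the blue cell of $T_b$, and $q_a - q_b \ge s_i - s_j$ is exactly the statement that the red cell of $T_b$ is contained in the blue cell of $T_a$; the constraint $q_a - q_b \ge |s_i - s_j|$ delivers both. Consequently the two red cells are disjoint, and whenever $L(t)$ meets a red cell of one array it meets a blue cell of the other, giving $\alpha^t_a + \alpha^t_b = 0$ there; at every other $t$ neither $\alpha^t_a$ nor $\alpha^t_b$ equals $-1$, so $\alpha^t_a + \alpha^t_b \ge 0$. In all cases $\alpha^t_a + \alpha^t_b \ge 0 = \wt_{M_t}(a,b)$, completing the argument. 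I expect the only delicate point to be getting the interval endpoints and the direction of each containment right; once the constraint is rewritten as $q_a - q_b \ge |s_i - s_j|$, the geometry of Figures~\ref{new-Cj-1:example}--\ref{new-Cj-2:example} makes the containments transparent.
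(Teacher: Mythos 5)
Your proof is correct and follows essentially the same route as the paper's: the same three-way case split on $\wt_{M_t}(a,b)\in\{-2,0,2\}$, with the $-2$ case trivial, the $+2$ case forcing both arrays to be entirely blue, and the $0$ case resolved by the same cell-containment geometry (the paper phrases the containments as length comparisons of cells anchored at the array boundaries, which is equivalent to your interval inclusions). Your upfront observation that $\wt_{M_t}(a,b)=\wt_x(a,b)$ is constant in $t$, and your derivation of the $W=2$ case from $\alpha_a\le p_i$ and $\alpha_b\le p_j$ rather than from the length bounds, are only minor presentational variants.
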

\begin{proof}
Here both $\Comp_i$ and $\Comp_j$ have size 2. When $i = j$, we showed that $\alpha^t_a + \alpha^t_b = 0 = \wt_{M_t}(a,b)$ in the proof 
of $\sum_{u\in A \cup B}\alpha^t_u = 0$.

So we now assume $i \ne j$ and so $x_{ab} = 0$. 
The constraint $\alpha_a + \alpha_b - |r_i-r_j| \ge \wt_x(a,b)$ becomes the following two constraints, where 
$\alpha_a = 2q_a - 1$ and $\alpha_b = 1 - 2q_b$.

$(2q_a - 1) + (1 - 2q_b) - (p_i-p_j) \ \ge \ 2\,(\sum_{b' \prec_a b}x_{ab'} + \sum_{a' \prec_b a}x_{a'b} -1)$ and

$(2q_a - 1) + (1 - 2q_b) + (p_i-p_j) \ \ge \ 2\,(\sum_{b' \prec_a b}x_{ab'} + \sum_{a' \prec_b a}x_{a'b} -1)$.

\smallskip

\noindent Let $s_i = (1-p_i)/2$ and $s_j = (1-p_j)/2$. The above two constraints imply the following two constraints respectively: 
\begin{eqnarray}
(q_a + s_i) + (1 - q_b - s_j) & \ge & \sum_{b' \prec_a b}x_{ab'} + \sum_{a' \prec_b a}x_{a'b}\label{newer-constr1}\\
(q_a - s_i) + (1 - q_b + s_j) & \ge & \sum_{b' \prec_a b}x_{ab'} + \sum_{a' \prec_b a}x_{a'b}\label{newer-constr2}
\end{eqnarray}

Note that the length of the blue cell in $T_a$ is $q_a-s_i$ while $q_a+s_i$ is the sum of the 
lengths of the blue and green cells in $T_a$ (see the top 2 arrays in Fig.~\ref{new-Cj-5:example}). Similarly, the 
length of the blue cell in $T_b$ is $1 - q_b - s_j$ while $1 - q_b + s_j$ is the sum of the 
lengths of the blue and green cells in $T_b$  (see the bottom 2 arrays in  Fig.~\ref{new-Cj-5:example}). 
We will consider 3 cases here.

\smallskip

\noindent{\em Case~1.} Both $a$ and $b$ prefer their partners in $\vec{x}$ to each other. This is 
the easiest case. Here $\wt_{M_t}(a,b) = -2$ and since $\alpha^t_u \ge -1$ for all vertices $u$,
we have $\alpha^t_a + \alpha^t_b \ge -2 = \wt_{M_t}(a,b)$.

\begin{figure}[h]
\centerline{\resizebox{0.66\textwidth}{!}{\input{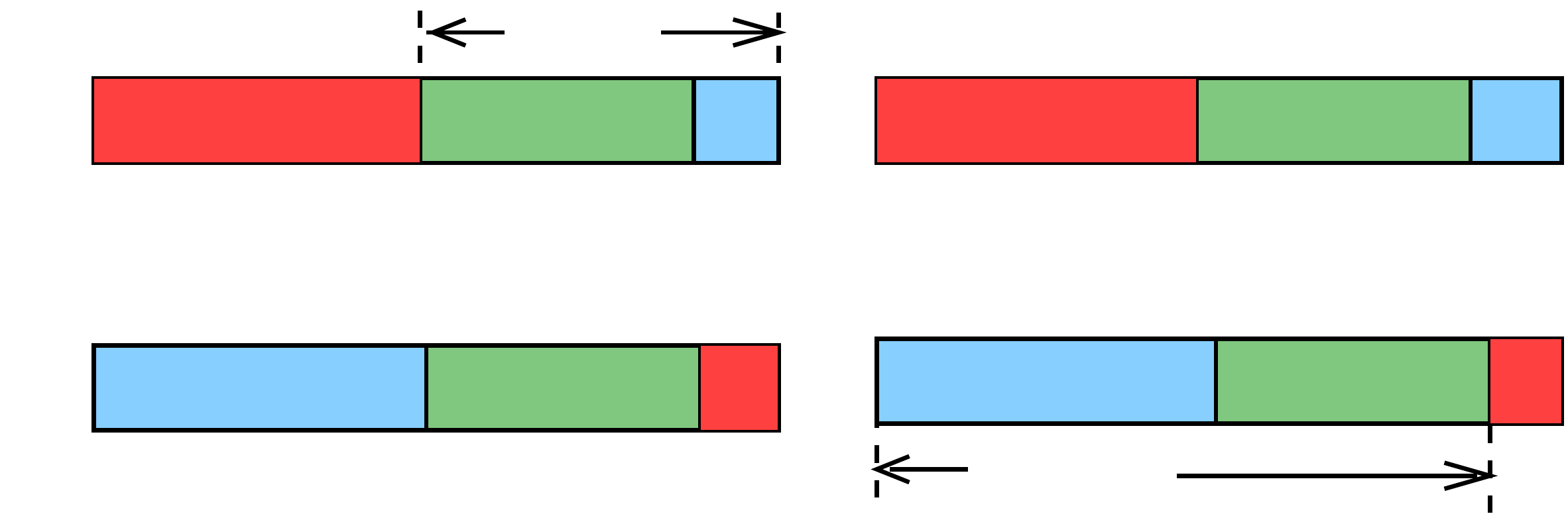_t}}}
\caption{The length of the blue cell in $T_a$ (similarly, $T_b$) is $q_a-s_i$ 
(resp., $1-q_b-s_j$). The length of the blue $+$ green cells in $T_a$ (similarly, $T_b$) is 
$q_a+s_i$ (resp., $1-q_b+s_j$). }
\label{new-Cj-5:example}
\end{figure}

\noindent{\em Case~2.} Exactly one of $a, b$ prefers its partner in $\vec{x}$ to the other. 
So $\wt_{M_t}(a,b) = 0$ and the right side of constraints~(\ref{newer-constr1}) and (\ref{newer-constr2}) is 1.

Constraint~(\ref{newer-constr1}) means that the length of the 
(blue $+$ green) cells of $T_a$ added to the length of the blue cell of $T_b$ 
is at least 1. Hence the length of the blue cell of $T_b$ is at least the length of the 
red cell in $T_a$ (see Fig.~\ref{new-Cj-5:example}).

Constraint~(\ref{newer-constr2}) means that the length of the (blue $+$ green) cells of $T_b$ 
added to the length of the blue cell of $T_a$ is at least 1. 
Hence the length of the blue cell of $T_a$ is at least the length of the red cell in 
$T_b$ (see Fig.~\ref{new-Cj-5:example}).

Thus for any $t \in [0,1)$, it is the case that either (i)~at least one of $T_a[t], T_b[t]$
is blue or (ii)~{\em both} the cells $T_a[t]$ and $T_b[t]$
are green. Thus we have $\alpha^t_a + \alpha^t_b \ge 0 = \wt_{M_t}(a,b)$.

\smallskip

\noindent{\em Case 3.} Both $a$ and $b$ prefer each other to their partners in $\vec{x}$. So
$\wt_{M_t}(a,b) = 2$ and the right side of constraints~(\ref{newer-constr1}) and (\ref{newer-constr2}) 
is also 2.

Since each of $(q_a + s_i), (q_a - s_i), (1 - q_b + s_j), (1 - q_b - s_j)$ is at most 1, 
it follows that $q_a - s_i = 1$ and $1 - q_b - s_j = 1$. 
Thus the entire array $T_a$ is blue and similarly, the entire array $T_b$ is also blue.
Hence $\alpha^t_a = \alpha^t_b = 1$. Thus we have $\alpha^t_a + \alpha^t_b = 2 = \wt_{M_t}(a,b)$. \qed 
\end{proof}

\medskip

\noindent{\em Acknowledgment.} Thanks to Chien-Chung Huang for useful discussions on strongly dominant matchings.

\section*{Appendix}
\noindent{\bf Mixed matchings.}
Consider the roommates instance $G$ that is a triangle on the 3 vertices $a$, $b$, $c$ with 
the following preference lists: $a$ prefers $b$ to $c$ while $b$ prefers $c$ to $a$ and $c$ prefers $a$ to $b$. 
This instance has no {\em mixed matching} that
is stable. A mixed matching $\Pi$ is equivalent to a point $\vec{x}$ in the
matching polytope of $G$. A mixed matching $\Pi$ is stable if the corresponding point $\vec{x}$ satisfies the following stability 
constraint for every edge $(u,v)$:
\[ x_{uv} + \sum_{v':v' \succ_u v}x_{uv'} + \sum_{u':u' \succ_v u}x_{u'v} \ \ge \ 1.\]
It can be checked that there is no point $\vec{x}$ in the matching polytope of $G$ that satisfies the stability constraints for all 
edges. The mixed matching 
$\Pi = \{(M_1,1/3),(M_2,1/3),(M_3,1/3)\}$, where $M_1 = \{(a,b)\}$, $M_2 = \{(b,c)\}$, $M_3 = \{(c,a)\}$, is popular in this instance.

\subsection*{Irving's algorithm}
Given a roommates instance $G = (V,E)$ with strict preferences, Irving's algorithm~\cite{Irv85} determines if $G$ admits a stable matching 
or not and if so, returns one. Irving's algorithm assumed $G$ to be a complete graph, however the algorithm easily generalizes to
non-complete graphs as well and hence we will not assume $G$ to be complete.

Irving's algorithm consists of 2 phases:
\begin{enumerate}
\item   In the first phase, we consider the bipartite graph
  $G^* = (V \cup V', E')$ where $V' = \{u': u \in V\}$. So $G^*$ has 2 copies $u$ and $u'$ of each vertex $u \in V$, one on either side
  of the graph.   The edge set $E' = \{(u,v'): (u,v) \in E\}$. 
  
 Run Gale-Shapley algorithm on $G^*$ with vertices in $V$ proposing and those in $V'$ disposing.
 Let $M^*$ be the resulting matching. If $(u,v') \in M^*$ then prune the edge set $E$ of $G$ as follows:
 \begin{itemize}
     \item delete all neighbors ranked worse than $u$ from the preference list of $v$
     \item make the adjacency lists symmetric so that if $v$ deletes $u$ from its list then $u$ also deletes $v$ from its list.
 \end{itemize}
\item If the {\em reduced} adjacency list of every vertex which received at least 1 proposal consists of a single neighbor then
  the resulting edge set $E$ defines a stable
  matching $M$. Else the adjacency lists are further reduced by eliminating ``rotations''.
  \begin{itemize}
  \item A rotation $R = \{(a_0,b_0),(a_1,b_1),\ldots,(a_{k-1},b_{k-1})\}$ is a set of edges such that for all $i \in \{0,\ldots,k-1\}$, 
  the vertex  $b_i$ is $a_i$'s most
  preferred neighbor in its reduced preference list (thus $a_i$ would be $b_i$'s least preferred neighbor in its reduced preference list);
  moreover, the second person on $a_i$'s reduced preference list is $b_{i+1}$ (here $b_k = b_0$).

      \item The second phase of Irving's algorithm identifies such rotations and deletes them. The crucial observation here is that if $G$
  admits a stable matching then so does $G\setminus R$.
    
      \item This step of eliminating rotations continues till either the updated reduced adjacency list
        of every vertex consists of a single neighbor or the updated reduced adjacency list of some vertex that received at least 1 proposal
        in the first phase is empty. In the former case, the resulting edge set is a
  stable matching and in the latter case, $G$ has no stable matching.
  \end{itemize} 
\end{enumerate}  

Consider Irving's algorithm in the roommates instance $G'$ that corresponds to the instance $G$ on 4 vertices $a, b, c, d$ described in 
Section~\ref{intro}. The reduced  adjacency lists at the end of the first phase are as
follows:
\[ a: \ c^- \succ d^- \ \ \ \ \ \ \ \ b: \ d^- \succ c^+ \ \ \ \ \ \ \ \ c: \ b^- \succ a^+ \ \ \ \ \ \ \ \ d: \ a^+ \succ b^+\]
Eliminating the rotation $\{(a^+,c^-),(b^+,d^-)\}$ yields the matching $M'_1 = \{(a^+,d^-),(b^-,c^+)\}$.
Instead, we could have 
eliminated the rotation $\{(c^+,b^-),(d^-,a^+)\}$. This yields the matching $M'_2 = \{(a^+,c^-),(b^+,d^-)\}$.


\begin{thebibliography}{22}
\bibitem{AIKM07}
D.J. Abraham, R.W. Irving, T.~Kavitha, and K.~Mehlhorn.
\newblock {\em Popular matchings.}
\newblock SIAM Journal on Computing, 37(4):~1030--1045, 2007.

\bibitem{BIM10}
P. Biro, R. W. Irving, and D. F. Manlove.
\newblock {\em Popular Matchings in the Marriage and Roommates Problems.}
\newblock In the 7th International Conference in Algorithms and Complexity (CIAC):~97--108, 2010.

\bibitem{BK17}
  F. Brandl and T. Kavitha.
  \newblock {\em Popular matchings with multiple partners.}
  \newblock In the 37th Foundations of Software Technology and Theoretical Computer Science (FSTTCS), 2017.

\bibitem{wiki-condorcet}
Condorcet method.
\newblock \url{https://en.wikipedia.org/wiki/Condorcet_method}

\bibitem{SocialChoice17}
\'{A}. Cseh.
\newblock {\em Popular Matchings.}
\newblock In Trends in Computational Social Choice, Edited by Ulle Endriss,
COST (European Cooperation in Science and Technology):~105--122, 2017.
  
\bibitem{CHK15}
\'{A}. Cseh, C.-C. Huang, and T. Kavitha.
\newblock {\em Popular matchings with two-sided preferences and one-sided ties}. 
\newblock In the 42nd International Colloquium on Automata, Languages, and Programming (ICALP): Part~I,~367--379, 2015.


\bibitem{CK16}
\'{A}. Cseh and T. Kavitha.
\newblock {\em Popular edges and dominant matchings.}
\newblock In the 18th International Conference on Integer Programming and Combinatorial Optimization (IPCO):~138--151, 2016. 



\bibitem{Feder92}
T. Feder.
\newblock {\em A new fixed point approach for stable networks and stable marriages.}
\newblock Journal of Computer and System Sciences, 45(2):~233-284, 1992. 

\bibitem{Fed94}
T.~Feder.
\newblock {\em Network flow and 2-satisfiability.}
\newblock Algorithmica, 11(3):291--319, 1994.

\bibitem{Fle03}
T. Fleiner.
\newblock {\em A fixed-point approach to stable matchings and some applications. }
\newblock Mathematics of Operations Research, 28(1):~103-126, 2003.

\bibitem{GS62}
D.~Gale and L.S. Shapley.
\newblock {\em College admissions and the stability of marriage.}
\newblock American Mathematical Monthly, 69(1):~9--15, 1962.

\bibitem{GS85}
D.~Gale and M.~Sotomayor.
\newblock {\em Some remarks on the stable matching problem.}
\newblock Discrete Applied Mathematics, 11(3):~223--232, 1985.

\bibitem{Gar75}
P. G\"ardenfors.
\newblock {\em Match making: assignments based on bilateral preferences.}
\newblock Behavioural Sciences, 20(3):~166--173, 1975.

\bibitem{GI89}
D. Gusfield and R. W. Irving.
\newblock {\em The Stable Marriage Problem: Structure and Algorithms.}
\newblock MIT Press, Boston, MA 1989.

\bibitem{Hirakawa-MatchUp15}
M. Hirakawa, Y. Yamauchi, S. Kijima, and M. Yamashita.
\newblock {\em On The Structure of Popular Matchings in The Stable Marriage Problem - Who Can Join a Popular Matching?}
\newblock In the 3rd International Workshop on Matching Under Preferences (MATCH-UP), 2015.

\bibitem{HK11}
C.-C.~Huang and T.~Kavitha.
\newblock {\em Popular matchings in the stable marriage problem.} 
\newblock Information and Computation, 222:~180--194, 2013.
(Special issue on ICALP 2011.)

\bibitem{HK13}
C.-C.~Huang and T.~Kavitha.
\newblock{\em Near-Popular Matchings in the Roommates Problem.}
\newblock SIAM Journal on Discrete Mathematics, 27(1): 43--62, 2013.

\bibitem{HK17}
C.-C.~Huang and T.~Kavitha.
\newblock {\em Popularity, Self-Duality, and Mixed matchings.} 
\newblock In the 28th ACM-SIAM Symposium on Discrete Algorithms (SODA): 2294-2310, 2017.

\bibitem{Irv85}
R. W. Irving.
\newblock {\em An efficient algorithm for the stable roommates problem.}
\newblock Journal of Algorithms, 6:~577--595, 1985. 

\bibitem{ILG87}
R. W.~Irving, P.~Leather, and D.~Gusfield.
\newblock {\em An efficient algorithm for the ``optimal'' stable marriage.}
\newblock Journal of the ACM, 34(3):~532--543, 1987.

\bibitem{KMN09}
T. Kavitha, J. Mestre, and M. Nasre.
\newblock {\em Popular mixed matchings.}
\newblock Theoretical Computer Science, 412(24):~2679--2690, 2011.

\bibitem{Kav12}
T.~Kavitha.
\newblock {\em A size-popularity tradeoff in the stable marriage problem.}
\newblock SIAM Journal on Computing, 43(1):~52--71, 2014.

\bibitem{Kav16}
T.~Kavitha.
\newblock {\em Popular half-integral matchings.}
\newblock In the 43rd International Colloquium on Automata, Languages, and Programming (ICALP):~22.1-22.13, 2016.

\bibitem{Roth84}
  A. E. Roth.
\newblock {\em The evolution of the labor market for medical interns and residents. A case study in game theory.}
\newblock Journal of Political Economy, 92:~991-1016, 1984.

\bibitem{Rot92}
U.~G.~Rothblum.
\newblock {\em Characterization of stable matchings as extreme points of a polytope.}
\newblock Mathematical Programming, 54:~57--67, 1992.

\bibitem{Sub94}
  A. Subramanian.
\newblock {\em A new approach to stable matching problems.}
\newblock SIAM Journal on Computing, 23(4):~671--700, 1994.


\bibitem{TS98}
C.-P.~Teo and J.~Sethuraman.
\newblock {\em The geometry of fractional stable matchings and its applications.}
\newblock Mathematics of Operations Research, 23(4):~874--891, 1998.

\bibitem{VV89}
J.~H. Vande Vate.
\newblock {\em Linear programming brings marital bliss.}
\newblock Operations Research Letters, 8(3):~147--153, 1989.
\end{thebibliography}
\end{document}